\newtheorem*{theorem-non}{Theorem}
\numberwithin{equation}{section}
\theoremstyle{definition}
\newtheorem{assumption}{Assumption}[section]
\newtheorem{theorem}[assumption]{Theorem}
\newtheorem{definition}[assumption]{Definition}
\newtheorem{proposition}[assumption]{Proposition}
\newtheorem{lemma}[assumption]{Lemma}
\newtheorem{corollary}[assumption]{Corollary}
\newtheorem{remark}[assumption]{Remark}
\newtheorem{example}[assumption]{Example}
\newcommand{\rBrackets}[1]{\left( #1 \right)}
\newcommand{\sBrackets}[1]{\left[ #1 \right]}
\newcommand{\tin}{t \in \mathcal{T}}
\newcommand{\ecu}{\hat{\bm{u}}}
\newcommand{\cu}{\bm{u}}
\newcommand{\cuh}{\bm{u}_h}
\newcommand{\EV}{\mathbb{E}}
\newcommand{\EVtx}{\mathbb{E}_{t, x}}
\newcommand{\EVtxz}{\mathbb{E}_{t, x, z}}
\newcommand{\inttT}{\int \limits_{t}^{T}}
\newcommand{\inttth}{\int \limits_{t}^{t + h}}
\newcommand{\intthT}{\int \limits_{t + h}^{T}}
\newcommand{\R}{\mathbb{R}}
\newcommand{\N}{\mathbb{N}}
\newcommand{\TSet}{\mathcal{T}}
\newcommand{\TSetNoT}{[0, T)}
\newcommand{\XSet}{\mathcal{X}}
\newcommand{\AMap}{\mathcal{A}}
\newcommand{\ASet}{\bm{\mathcal{A}}}
\newcommand{\tx}{(t,x) \in \mathcal{T} \times \mathcal{X}}
\renewcommand{\footnoterule}{\kern -3pt \hrule width 1\textwidth height 0.1pt \kern 2pt}
\DeclareFontFamily{LYG}{ygoth}{}
\DeclareFontShape{LYG}{ygoth}{m}{n}{<->ygoth}{}
\begin{document}

\title{Equilibrium control theory for Kihlstrom\textendash Mirman preferences in continuous time\footnote{We thank Frank Riedel, Moris Simon Strub, and Jianfeng Zhang for helpful comments and suggestions. L. De Gennaro Aquino acknowledges the support of the University of Copenhagen and the Johannes Kepler University Linz, where the vast majority of this work was done.} \\[.5cm]}

\author{Luca De Gennaro Aquino\footnote{Department of Engineering, Reykjavik University, Iceland. \href{lucaa@ru.is}{lucaa@ru.is}}\quad  Sascha Desmettre\footnote{Institute of Financial Mathematics and Applied Number Theory, Johannes Kepler University Linz, Austria. \href{sascha.desmettre@jku.at}{sascha.desmettre@jku.at}} \\ Yevhen Havrylenko\footnote{Department of Mathematical Sciences, University of Copenhagen, Denmark.  \href{yh@math.dk.ku}{yh@math.ku.dk}.} \textsuperscript{,}\footnote{Institute of Insurance Science, University of Ulm, Germany.  \href{yevhen.havrylenko@uni-ulm.de}{yevhen.havrylenko@uni-ulm.de}.} \quad Mogens Steffensen\footnote{Department of Mathematical Sciences, University of Copenhagen, Denmark. \href{mogens@math.dk.ku}{\mbox{mogens@math.ku.dk}}} }

\maketitle

\begin{abstract}
In intertemporal settings, the multiattribute utility theory of Kihlstrom and Mirman suggests the application of a concave transform of the lifetime utility index. This construction, while allowing time and risk attitudes to be separated, leads to dynamically inconsistent preferences. We address this issue in a game-theoretic sense by formalizing an equilibrium control theory for continuous-time Markov processes. In these terms, we describe the equilibrium strategy and value function as the solution of an extended Hamilton\textendash Jacobi\textendash Bellman system of partial differential equations. We verify that (the solution of) this system is a sufficient condition for an equilibrium and examine some of its novel features. A consumption-investment problem for an agent with CRRA-CES utility showcases our approach.
\end{abstract}

{\textbf{\\Keywords}: Time-inconsistency, equilibrium theory, separation of time and risk preferences}

\section{Introduction}

This paper studies non-additively separable utility functions of the form
\begin{equation} \label{eq:KMpreferencesIntro}
	\mathbb{E}_{t}\left[ \Phi \left(\int_{s \geq t} H(s,c(s),t) ds \right) \right], \quad t \geq 0,
\end{equation}
where $\mathbb{E}_{t}$ is the expectation operator conditional on information available at \mbox{time $t$}, $H$ is a discounted utility of consumption $c$, and $\Phi$ is a concave function.  We refer to this class of models as Kihlstrom\textendash Mirman (henceforth, KM) preferences, for the application of a concave transform of the time-additive utility index was originally suggested by \cite{KihlstromMirman1974:JET,KihlstromMirman1981:RES} as the basis for an extension of the Arrow\textendash Pratt measure of risk aversion to the case of many commodities; see also \cite{Kihlstrom2009:JME}.

With the premise that agents evaluate risk over lifecycle consumption (rather than locally), the appeal of this formulation is that it allows for a separation between time and risk attitudes. However, the nonlinear aggregation of immediate and prospective utilities induces dynamic 
inconsistency, a concern that \cite{EpsteinZin1989:Econometrica} and \cite{Weil1990:QJE} already pointed out in the lead-up to the development of the recursive utility framework.\footnote{Epstein\textendash Zin\textendash Weil preferences are well-recognized for being time-consistent and maintaining the disentanglement of elasticity of intertemporal substitution and risk aversion, among other properties.} This is precisely the issue that we wish to address here.

\cite{Kihlstrom2009:JME} made the first attempt in this direction and analyzed a two- and three-period investment-consumption problem and an infinite-horizon consumption-based asset pricing model for an agent with CRRA-CES utility. To deal with time-inconsistency, Kihlstrom followed the game-theoretic approach of \cite{Strotz1956:RES}\footnote{In this classic paper, Strotz envisioned three types of behavior in time-inconsistent scenarios. Besides consistent planners who try to reach an intrapersonal equilibrium, he described the strategy of precommitters and spendthrifts. In turn, precommitters acknowledge that in the future, their preferences might change and decide to follow the plan determined at the beginning date, while spendthrifts (often called na\"ive or myopic agents) are unaware of possible intertemporal conflicts and recalculate the course of action at any time.} by searching for consistent planning. The key messages of his analysis are that, in a Lucas' model with independent identically distributed (i.i.d.) consumption growth, (i) KM preferences can generate a higher equity premium than additively separable preferences if and only if the elasticity of substitution is lower than in the additively separable case, and that (ii) in the absence of a riskless asset, savings decrease (increase) as a function of risk aversion if the elasticity of substitution is higher (less) than one. We return to some of these findings in Section \ref{sec:CRRA-CES_utility_exponential_discounting}.

A more general theory for discrete-time Markovian control problems appeared in \cite{BjoerkKhapkoMurgoci2021:TICT} [Chapter 6.3], where the authors provided an extended Bellman system characterizing an equilibrium strategy and the corresponding equilibrium value function. 

Our contribution is to delineate the continuous-time version of this theory. First, we derive the corresponding extended Hamilton\textendash Jacobi\textendash Bellman (HJB) system of partial differential equations (PDEs). We compare the features of this system with other instances emerging from the literature (part of which are nested within our framework), showing that the additional source of time-inconsistency induced by the non-time-additivity of KM preferences can be embedded into a new state variable.   

Then, we verify that, under regularity assumptions, our derived HJB system provides a sufficient condition for an equilibrium. The proof of the verification theorem relies on a new Feynman-Kac-type result for an auxiliary utility function, which is of interest of itself.  

By way of example, we study a consumption-investment problem with CRRA-CES preferences in a Black\textendash Scholes market. In this case, we show that the value function is (multiplicatively) separable in wealth and time, inheriting the structure of the utility function with a time component determined by the solution of a system of ordinary differential equations (ODEs). Interestingly, the resulting investment strategy reduces to the same constant rate of \cite{Merton1969:RES}, upon the distinction between the parameters identifying intertemporal substitution and risk aversion, and under the assumption of CRRA-CES recursive utility; see, e.g., \cite{KraftSeifriedSteffensen2013:FS}. In addition, the consumption strategy is proportional to wealth, with a time-dependent factor related to the solution of the system of ODEs mentioned above. We find the latter through numerical simulations, which reveal the similarity in lifecycle consumption compared to the benchmark recursive utility specification. 

\paragraph*{Related literature.} Without trying to do justice to an extensive body of work, we list some key contributions on various topics found in this paper. 

Discrete-time analogs of \eqref{eq:KMpreferencesIntro}, or special cases, have been featured in theoretical and applied contexts. 
\cite{BommierLeGrand2014:JRU} demonstrated that, when calibrated to mortality data and bequest motives, these preferences can explain the low participation in the annuity market; \cite{AndersenHarrisonLauRutstrom2018:IER} employed them in an experimental setting to elicit participants’ intertemporal correlation aversion (\cite{Richard1975:MS}, \cite{EpsteinTanny1980:CJE}); \cite{DeJarnetteDillenbergerGottliebOrtoleva2020:Econometrica}, under the label Generalized Expected Discounted Utility models, showed that they can accommodate different attitudes towards time lotteries.

As for the recursive utility framework, \cite{Weil1989:JME}, \cite{EpsteinZin1991:JPE}, \cite{Tallarini2000:JME}, and \cite{BansalYaron2004:JF} to name a few, conducted empirical studies on asset pricing, business cycles, and welfare. \cite{KraftSeifried2010:MFE, KraftSeifried2014:JET} further investigated the continuous-time limit (also known as \emph{stochastic differential utility}) of \cite{DuffieEpstein1992:Econometrica}, while \cite{MarinacciMontrucchio2010;JET}, \cite{HansenScheinkman2012:PNAS}, \cite{BorovivckaStachurski2020:JF}, \cite{Christensen2022:JET}, \cite{MonoyiosMostovyi2024:MF} established conditions for existence and uniqueness of the recursive program. Along these lines, \cite{BloiseVailakis2018:JET}, \cite{Balbus2020:ET}, \cite{RenStachurski2021:Automatica},  and \cite{BloiseLeVanVailakis2024:Econometrica} analyzed the impact of different aggregators on dynamic programming. 

Equilibrium strategies for time-inconsistent problems were initially studied in the context of non-exponential discounting: applications in finance and economics include \cite{Pollak1968:RES}, \cite{PhelpsPollak1968:RES}, \cite{PelegYaari1973:RES}, \cite{Goldman1980:RES}, \cite{Laibson1997:QJE}, \cite{EkelandPirvu2008:MFE}, \cite{EkelandLazrak2010:MFE}, \cite{EkelandMbodjiPirvu2012:SIFIN}. 

The literature on mean-variance portfolio selection provides another important window on time-inconsistency. Within this setting, the game-theoretical approach was considered by \cite{BasakChabakauri2010:RFS}, then generalized by \cite{Czichowsky2013:FS},  \cite{BjorkMurgociZhou2014:MF}, and \cite{BieleckiChenCialenco2021:IJTAF} in several directions. \cite{LiNg2000:MF} and \cite{ZhouLi2000:AMO} pioneered the embedding technique to derive precommitted strategies, whereas \cite{PedersenPeskir2017:MFE} and \cite{ChenZhou2024:MF} focused on na\"ive policies. 

Finally, we mention several additional works that have looked at diverse aspects of time-inconsistency. Among others, \cite{HuangZhou2020:MF}, \cite{HuangWang2021:SICON} and \cite{LiangYuan2023:MF} worked on optimal control-stopping problems; \cite{DesmettreSteffensen2023:MF} discussed the case of an agent maximizing the certainty equivalent of terminal wealth with random risk aversion. On more abstract grounds, \cite{KrygerNordfangSteffensen2020:MMRO} provided the verification theorem for a general class of objective functionals with nonlinearity in the conditional expectation; \cite{HeJiang2021:SICON} and \cite{HuangZhou2021:MOR} explored different notions of equilibrium controls; \cite{Lindensjoe2019:ORL} showed that under regularity assumptions the extended HJB system is both a necessary and sufficient condition for an equilibrium; \cite{LeiPun2024:MF} verified the well-posedness (existence and uniqueness) of the solution of the extended HJB system. All these findings assumed a Markovian framework. In the non-Markovian case, \cite{HernandezPossamai2023:AAP} recently made significant progress under additively separable payoff functionals.

\paragraph*{Structure of the paper.} In \mbox{Section \ref{sec:DT_KM}}, we describe the non-additively separable preference specification of \cite{KihlstromMirman1974:JET} and its relation with alternative models of choice disentangling time and risk attitudes. Within the framework of Markovian stochastic control problems, we also review the equilibrium (game-theoretic) approach by \cite{BjoerkKhapkoMurgoci2021:TICT} to address the dynamic inconsistency of Kihlstrom\textendash Mirman preferences in discrete time. We present our extension of the equilibrium theory to continuous time in \mbox{Section \ref{sec:general_theory}}, with the main result being the verification theorem for an extended HJB system of PDEs. For example, \mbox{Section \ref{sec:CRRA-CES_utility_exponential_discounting}} illustrates a consumption-investment problem with CRRA-CES preferences. \mbox{Section \ref{sec:Conclusions}} concludes and outlooks our work. All proofs and additional results are relegated to the Appendices.

\section{Kihlstrom\textendash Mirman preferences} \label{sec:DT_KM}

Given a time horizon $T \in (0,\infty)$, let $\{t_{0} = 0, t_{1},\dots, t_{N} = T \}$ be a partition of $\mathcal{T} := [0,T]$, for some $N \in \mathbb{N}$,\footnote{Throughout, we use the convention that the set of natural numbers $\mathbb{N}$ does not \mbox{include 0.}} and let $c := \{c_{t_{n}}\}_{n = 0}^{N}$ be a discrete-time sequence of consumptions taking values in an interval $\mathcal{C} \subseteq \mathbb{R}$. 

Applying the general multicommodity analysis of \cite{KihlstromMirman1974:JET, KihlstromMirman1981:RES} to an intertemporal setting, we consider the utility function 
\begin{equation}\label{eq:KMpreferences}
	J_{t_{n}}(c) = \mathbb{E}_{t_{n}}\sBrackets{\Phi\rBrackets{\sum_{k=n}^{N}H\big(t_{k},  c_{t_{k}}, t_{n}\big)}},
\end{equation}
with $H : \mathcal{T} \times \mathcal{C} \times \mathcal{T} \to \mathbb{R}$ being a discounted utility, and $\Phi : \mathbb{R} \to \mathbb{R}$ nonlinear and increasing.  The time-additive case is obtained when $\Phi$ is affine.

As mentioned in the Introduction, KM preferences compromise on additive separability -- and dynamic consistency -- for a separation between time and risk attitudes. This distinction can be seen more clearly by writing $H$ as an exponentially discounted utility: $H\big(t_{k},  c_{t_{k}}, t_{n}\big) = \delta^{t_{k}-t_{n}}u(c_{t_{k}})$, for some utility function $u$. Under this provision, marginal rates of intertemporal substitution are determined in the absence of risk by $u$,  while preferences concerning risk (or, put differently, concerning variations in discounted lifetime utility) are measured by $\Phi \, \circ \, u$.

\begin{example}
	\label{example:CRRA-CES_preferences}
	A common specification of \eqref{eq:KMpreferences} is the CRRA-CES case:
	\begin{equation}
		\label{eq:CRRA-CES_preferences}
		J_{t_{n}}\left(c\right) =  \mathbb{E}_{t_{n}}\sBrackets{\dfrac{1}{1-\alpha}\rBrackets{\sum_{k=n}^{N}\delta^{t_{k} - t_{n}}c_{t_{k}}^{\,\rho}}^{\frac{1-\alpha}{\rho}}}.
	\end{equation}
	Here $\alpha \geq 0$ is the coefficient of relative risk aversion concerning the entire consumption stream, and $(1-\rho)^{-1}$ is the elasticity of intertemporal substitution, with $\rho < 1$.
\end{example}

\begin{remark} \label{footnote:KihlstromMirmanRiskAversion}
	It is important to highlight that, in this context, risk attitudes are gauged through a notion of risk aversion that generalizes the standard (single-argument) Arrow\textendash Pratt measure to the case of multiple attributes.  For instance, considering CES preferences, let $U^{0}_{t_{n}}\left(c\right) := \rBrackets{\sum_{k=n}^{N}\delta^{t_{k}-t_{n}}c_{t_{k}}^{\, \rho} }^{1/\rho}$ denote the \textit{least concave representation} of the bundle $c$.  For a strictly concave function $ v : \mathbb{R} \to \mathbb{R}$, the relative risk aversion of $U_{t_{n}}\left(c\right) :=  v\left( U^{0}_{t_{n}}\left(c\right) \right)$, as defined in \cite{KihlstromMirman1974:JET}, is $$ - \dfrac{\, U^{0}_{t_{n}}\left(c\right) v''\left(U^{0}_{t_{n}}\left(c\right)\right)}{v'\left(U^{0}_{t_{n}}\left(c\right)\right)}.$$
	In light of this definition, choosing a CRRA function $v(x) = x^{1-\alpha} / (1-\alpha)$, it should then become evident that the relative risk aversion of the utility in \eqref{eq:CRRA-CES_preferences} is equal to $\alpha$. We refer the reader to \cite{Kihlstrom2009:JME} for further comments on these aspects and to \cite{Debreu1976:JME} for the background on least concave utilities.
\end{remark}

\paragraph*{Separation of time and risk attitudes: alternative approaches.} Distinguishing time and risk attitudes is not a unique feature of KM preferences. For comparison, we mention two alternative approaches that make this possible, starting from the non-expected recursive utility framework axiomatized by \cite{KrepsPorteus1978:Econometrica}, then extended by \cite{EpsteinZin1989:Econometrica} and \cite{Weil1990:QJE}.\footnote{An overview of utility classes that distinguish risk aversion from intertemporal substitution is also offered in \cite{BommierChassagnonLeGrand2012:JET}.} In that case, the elasticities of substitution between periods and between states are separated via locally aggregating the utility of current consumption and the certainty equivalent of future indirect utilities:
\begin{equation} \label{eq:EZWrecursiveutility}
	J_{t_{n}}(c) = W\left( c_{t_{n}}, \mathcal{M}_{t_{n}}\left(J_{t_{n+1}}(c)\right)  \right), 
\end{equation}
where $W : \mathcal{C} \times \mathbb{R} \to \mathbb{R} $ is a nonlinear temporal aggregator, and $\mathcal{M}_{t_{n}}$ a time-$t_{n}$ certainty equivalent of future continuation value $J_{t_{n+1}}$. Time-additivity is obtained if both the certainty equivalent and aggregator are linear, i.e., $\mathcal{M}_{t_{n}}\left(J_{t_{n+1}}(c)\right) = \mathbb{E}_{t_{n}}\left[J_{t_{n+1}}(c)\right]$, and $W\left( c_{t_{n}}, \mathcal{M}_{t_{n}}\left(J_{t_{n+1}}(c)\right)  \right) =  H(c_{t_{n}}) + \delta^{t_{n+1}-t_{n}} \mathcal{M}_{t_{n}}\left(J_{t_{n+1}}(c)\right)$.

\begin{example}
	For a CES aggregator and CRRA certainty equivalent, \eqref{eq:EZWrecursiveutility} becomes
	\begin{equation*} \label{eq:EZWrecursiveutility_example}
		\begin{split}
			J_{t_{n}}(c) = \Big( c_{t_{n}}^{\,\rho} + \delta^{t_{n+1}-t_{n}}  \left( \mathbb{E}_{t_{n}}\left[ \left(J_{t_{n+1}}(c)\right)^{1-\alpha}\right] \right)^{\frac{\rho}{1-\alpha}} \Big)^{\frac{1}{\rho}}.
		\end{split}
	\end{equation*}
\end{example}

Another formulation that allows for disentangling preferences concerning time and risk was recently proposed by \cite{JensenSteffensen2015:IME} and \cite{FahrenwaldtJensenSteffensen2020:JME}. In a continuous-time setup, the authors specify a utility process in which the temporal aggregation is applied globally on the certainty equivalents of direct utilities. In other words, instead of first forming the certainty equivalent of the indirect utility and then nonlinearly time-aggregating with present consumption, they build up a time-global function of utilities of certainty equivalents concerning future uncertain consumptions.\footnote{A similar representation involving ordinal certainty equivalents previously appeared in \cite{Selden1978:Econometrica}.}

We can imagine the discrete-time analog of their construction as follows:
\begin{equation} \label{eq:FahrenwaldtEtAl_TimeGlobalPreferences}
	\begin{split}
		J_{t_{n}}(c) = \psi \Bigg(  \sum_{k=n}^{N} \delta^{t_{k}-t_{n}}\varphi \left( U^{-1} \Big( \mathbb{E}_{t_{n}}\big[  U\left( c_{t_{k}}\right) \big]  \Big) \right) \Bigg),
	\end{split}
\end{equation}
where $U : \mathcal{C} \to \mathbb{R}$ is a von Neumann\textendash Morgenstern utility function, $\varphi: \mathbb{R} \to \mathbb{R}$ represents the agent's preferences concerning local certainty equivalents, and $\psi: \mathbb{R} \to \mathbb{R}$ is an increasing time-global transformation. Here, we recover time-additivity by choosing $\varphi = U$ and $\psi$ as the identity function.

\begin{example} Using again constant EIS, a CRRA utility function, and -- for tractability reasons -- a convenient power function for $\psi$, \eqref{eq:FahrenwaldtEtAl_TimeGlobalPreferences} becomes
	\begin{equation*} \label{eq:FahrenwaldtEtAl_TimeGlobalPreferences_example}
		\begin{split}
			J_{t_{n}}(c) = \dfrac{1}{\rho}\left( \sum_{k=n}^{N} \delta^{t_{k}-t_{n}} \left(\mathbb{E}_{t_{n}}\left[ c_{t_{k}}^{\,\rho} \right] \right)^{\frac{1-\alpha}{\rho}} \right)^{\frac{\rho}{1-\alpha}}.
		\end{split}
	\end{equation*}
\end{example}

While the behavioral motivation behind the above paradigms is, in some measure, shared, only recursive utility preserves the property of time-consistency. In the other cases, the utility's non-time-separable structure leads to dynamically inconsistent decisions.

Fahrenwaldt, Jensen, and Steffensen recast the problem from a game-theoretical perspective by looking for an intra-personal equilibrium strategy to address this issue.\footnote{In \cite{JensenSteffensen2015:IME} this is done regarding a consumption-investment-insurance problem in a classical Black\textendash Scholes market, while \cite{FahrenwaldtJensenSteffensen2020:JME} tackle a consumption-investment problem in a general diffusive incomplete market.} Their approach, which follows in the footsteps of \cite{BjorkKhapkoMurgoci2017:FS}, leads to an extended Hamilton\textendash Jacobi\textendash Bellman equation for the value function of the problem.

Concerning KM preferences, an earlier application of the consistent planning approach was put forward by \cite{Kihlstrom2009:JME}. In contrast, a systematic discrete-time equilibrium control theory for Markovian problems can be found in Chapter 6.3 of \cite{BjoerkKhapkoMurgoci2021:TICT}. We condense this theory below, focusing primarily on the extended Bellman system of equations for the equilibrium value function. This will set the stage for our continuous-time extension in the following sections.

\subsection{Equilibrium theory in discrete time} \label{subsec:EqTheory_kM_DT}
Let $X^{\cu} = \{X_{t_{n}}^{\cu}\}_{n=0}^{N}$ be a controlled Markov process evolving in a state space $\mathcal{X}$, and let $\cu = \{\cu_{t_{n}} \}_{n=0}^{N}$ be a control process taking values in a control space $\bm{\mathcal{A}}$. Formalities are not essential at this point -- further details will be provided in \mbox{Section \ref{sec:general_theory}.}

Recall the Kihlstrom\textendash Mirman utility\\
\begin{equation}\label{eq:KM_DT_reward_func}
	\begin{split}
		J_{t_{n}}(x, \bm{u}) = \mathbb{E}_{t_{n},x}\sBrackets{\Phi\rBrackets{\sum_{k=n}^{N}H\Big(t_{k}, X_{t_{k}}^{\cu}, \bm{u}_{t_{k}}\big(X_{t_{k}}^{\cu}\big), t_{n}\Big)}},
	\end{split}
\end{equation}
where the arguments of $J$ and $H$ here have been adjusted, vis-à-vis \eqref{eq:KMpreferences}, to accomodate for the dependence on the controlled state process $X^{\cu}$.\footnote{In presenting their results, \cite{BjoerkKhapkoMurgoci2021:TICT} start with a more parsimonious model where the function $H$ does not depend on the present time $t_{n}$; for example, one can think about the discount factor in \eqref{eq:CRRA-CES_preferences} being equal to $\delta^{t_{k}}$ instead of $\delta^{t_{k}-t_{n}}$. In the interest of brevity, we directly state the complete case with present-time dependence.}

As mentioned before, the nonlinearity of $\Phi$ induces the solution of $\sup_{\cu \in \bm{\mathcal{A}}} J_{t_{n}}(x, \bm{u})$ to be time-inconsistent.\footnote{Generally speaking, except for some specific cases, the dependence of $H$ on the current time $t_{n}$ and state $x$ is also a source of time-inconsistency, independently of the form of $\Phi$.} That is, an optimal strategy for $J_{t_{n}}(x, \bm{u})$ is not guaranteed to be optimal when applied at any future subinterval $\{t_{k},\dots, T\}$, with $t_{k} \geq t_{n}$, and so the Bellman's principle of optimality fails to hold globally. 

Building on their previous contributions to equilibrium control theory, where a similar path is taken, \cite{BjoerkKhapkoMurgoci2021:TICT} then look for a Nash subgame-perfect equilibrium control, as specified in the following definition (cf. Definition 5.2 in \cite{BjoerkKhapkoMurgoci2021:TICT}).
\begin{definition} \label{def:DT_EquilibriumControl}
	Fix an arbitrary $t_{n} \in \{0,t_1,\dots,T\}, \; x \in \mathcal{X}$, and a pair of controls $\cu, \ecu \in \bm{\mathcal{A}}$. Now define a new control $\cu^{t_{n}} = \{\cu_{t_{k}}^{t_{n}}\}_{k=n}^{N}$ by setting
	\begin{equation}
		\cu_{t_{k}}^{t_{n}}(x) = \left\{
		\begin{aligned}
			&\ecu_{t_{k}}(x),\quad \text{for } t_{k} \in \{t_{n+1},\dots,T\}, \\
			&\cu_{t_{n}}(x),\quad \text{for } t_{k} = t_{n}.
		\end{aligned}
		\right.
	\end{equation}
	If, for every fixed  $t_{n} \in \{0,t_1,\dots,T\}, \; x \in \mathcal{X}$, it holds that
	\begin{equation}
		\sup_{\cu_{t_{n}}(x) \in \mathcal{A}(t_{n},x)} J_{t_{n}}\left(x,\cu^{t_{n}}\right) = J_{t_{n}}\left(x, \ecu\right),
	\end{equation}
	then $\ecu$ is referred to as an equilibrium control law.
	
	In relation, for an admissible equilibrium control law $\ecu$ --  assuming that it exists -- we define the equilibrium value function $\widehat{V} = \{\widehat{V}_{t_{n}}\}_{n=0}^{N}$ as
	\begin{equation*}
		\begin{split}
			\widehat{V}_{t_{n}}(x) := J_{t_{n}}(x,\ecu).
		\end{split}
	\end{equation*}
\end{definition}

Similarly to the standard Bellman equation for time-consistent problems, the aim is to recursively characterize the value function $\widehat{V}$. To do so, we need first to introduce an auxiliary function $ f^{\cu} = \{ f_{t_{n}}^{\cu} \}_{n=0}^{N}$, defined as follows: 
\begin{equation}\label{eq:def_f_n_cu_sequence}
	f_{t_{n}}^{\cu}(x,z,t_{m}) = \mathbb{E}_{t_{n},x}\sBrackets{\Phi\rBrackets{\sum_{k=n}^{N}H\Big(t_{k}, X_{t_{k}}^{\cu}, \cu_{t_{k}}\big(X_{t_{k}}^{\cu}\big), t_{m}\Big) + z}},
\end{equation}
for any $(x,z,t_{m}) \in \mathcal{X} \times \mathbb{R} \times \{0,t_{1},\dots,t_{n}\}$ and  $\cu \in \bm{\mathcal{A}}$. In particular, note that we have
\begin{align*}
	f_{t_{n}}^{\cu}(x,0,t_{n}) & = J_{t_{n}}(x, \cu),\\
	f_{t_{n}}^{\ecu}(x,0,t_{n}) & = \widehat{V}_{t_{n}}(x).
\end{align*}

We are now ready to state the central result (cf.  Proposition 6.3 in \cite{BjoerkKhapkoMurgoci2021:TICT}) of the discrete-time equilibrium theory for KM preferences, which is the recursion for the equilibrium value function $\widehat{V}$ and the auxiliary function $f^{\ecu}$. Together, these form the so-called \textit{extended Bellman system}.
\begin{proposition}
	\label{prop:DT_KM_extended_Bellman_system}
	Let the reward functional $J$ be of the form \eqref{eq:KM_DT_reward_func}, and let $f^{\cu}$ be defined as in \eqref{eq:def_f_n_cu_sequence}. For any $t_{n} \in \{0,t_{1},\dots,t_{N - 1}\}$ and $(x,z,t_{m}) \in \mathcal{X} \times \mathbb{R} \times \{0,t_{1},\dots,t_{n}\}$, the extended Bellman system is given by
	\begin{equation}
		\begin{split}
			\label{KM_DT_extendedBellman}
			\widehat{V}_{t_{n}}(x) & =  \sup_{\cu_{t_{n}}(x) \in \mathcal{A}(t_{n},x)}\mathbb{E}_{t_{n},x}\Big[f_{t_{n+1}}^{\ecu}\rBrackets{X^{\cu^{t_{n}}}_{t_{n+1}},H\left(t_{n},x,\cu_{t_{n}}(x),t_{n}\right),t_{n}}\Big] ,\\ 
			\widehat{V}_{T}(x) & = \sup_{\cu_{T}(x) \in \mathcal{A}(T,x)} \, \Phi \big(H(T,x,\cu_{T}(x),T)\big), \\
			f_{t_{n}}^{\ecu}(x,z,t_{m}) &=  \, \mathbb{E}_{t_{n},x}\sBrackets{f_{t_{n+1}}^{\ecu}\left(X^{\ecu}_{t_{n+1}},H\left(t_{n},x,\ecu_{t_{n}}(x),t_{m}\right) + z,t_{m}\right)}, \\
			f_{T}^{\ecu}(x,z,t_{m})&= \Phi\Big(H\big(T,x, \ecu_T(x),t_{m}\big) + z\Big). \\
		\end{split}
	\end{equation}
\end{proposition}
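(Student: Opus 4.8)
The plan is to verify the four identities of the extended Bellman system \eqref{KM_DT_extendedBellman} directly from the definition of the equilibrium control law (Definition~\ref{def:DT_EquilibriumControl}), together with the tower property of conditional expectation and the Markov property of $X^{\cu}$. Because the horizon is finite and the recursion runs over the finitely many grid points $t_{0},\dots,t_{N}$, no convergence or fixed-point argument is needed: each identity reduces to a single backward step, the only genuine work being the careful separation of the running accrual time $t_{k}$ from the fixed reference time $t_{m}$ sitting inside $H$. Throughout I would assume, as the statement implicitly does, that the equilibrium law $\ecu$ is admissible and that the expectations defining $J$ and $f^{\cu}$ are finite.

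First I would dispatch the two terminal identities. At $t_{n} = T = t_{N}$ the sum in \eqref{eq:def_f_n_cu_sequence} collapses to its single $k=N$ term, and since the state is frozen at $X_{T}=x$ the expectation is trivial; this gives $f_{T}^{\ecu}(x,z,t_{m}) = \Phi\big(H(T,x,\ecu_{T}(x),t_{m})+z\big)$. The terminal value-function identity then follows by reading Definition~\ref{def:DT_EquilibriumControl} at $t_{n}=T$: there is no future control to splice, so $J_{T}(x,\cu^{T}) = \Phi\big(H(T,x,\cu_{T}(x),T)\big)$, and taking the supremum over $\cu_{T}(x)$ yields $\widehat{V}_{T}(x)$.

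The recursion for $f^{\ecu}$ is the engine of the argument. I would peel off the $k=n$ summand, which equals $H(t_{n},x,\ecu_{t_{n}}(x),t_{m})$ since $X_{t_{n}}=x$, and absorb it together with the incoming shift $z$ into a single new shift $z' := H(t_{n},x,\ecu_{t_{n}}(x),t_{m})+z$. Conditioning on $X_{t_{n+1}}^{\ecu}$ and invoking the tower and Markov properties, the inner conditional expectation of $\Phi$ applied to $z'$ plus the remaining terms $k=n+1,\dots,N$ is exactly $f_{t_{n+1}}^{\ecu}\big(X_{t_{n+1}}^{\ecu},z',t_{m}\big)$, which is the desired identity. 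It is precisely here that the third argument $t_{m}$ earns its place: it is carried unchanged through the step, so that the accrued utility can be threaded through the scalar $z'$ rather than factored out additively. The value-function recursion then follows by the same splicing logic: Definition~\ref{def:DT_EquilibriumControl} gives $\widehat{V}_{t_{n}}(x) = \sup_{\cu_{t_{n}}(x)} J_{t_{n}}(x,\cu^{t_{n}})$, and under $\cu^{t_{n}}$ the time-$t_{n}$ step uses the deviating control $\cu_{t_{n}}(x)$ -- contributing $H(t_{n},x,\cu_{t_{n}}(x),t_{n})$ and driving the one-step transition to $X_{t_{n+1}}^{\cu^{t_{n}}}$ -- while every later step follows $\ecu$. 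Applying the tower property and recognizing the continuation as $f_{t_{n+1}}^{\ecu}$ with shift $z = H(t_{n},x,\cu_{t_{n}}(x),t_{n})$ and reference time $t_{m}=t_{n}$ produces the first line of \eqref{KM_DT_extendedBellman}.

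The step I expect to require the most care is the bookkeeping that keeps the two time-indices apart and correctly attributes the state dynamics to the right control. Since $\Phi$ is non-separable, the already-accumulated utility cannot be pulled outside the expectation as in the classical Bellman recursion; it must be transported through the shift variable, and the entire three-argument construction $f^{\cu}(x,z,t_{m})$ exists to legitimize this. The delicate point is to confirm that $X_{t_{n+1}}^{\cu^{t_{n}}}$ is a genuine one-step $\cu_{t_{n}}(x)$-transition out of $x$ whose continuation is governed by $\ecu$, so that the Markov property delivers the equilibrium auxiliary function $f_{t_{n+1}}^{\ecu}$ and not some hybrid object; once this is pinned down, the remaining manipulations are routine.
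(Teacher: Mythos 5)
Your proof is correct. Note that the paper itself does not prove this proposition---it imports it from Proposition 6.3 of \cite{BjoerkKhapkoMurgoci2021:TICT}---so there is no in-paper proof to compare against; your argument (peel off the $k=n$ summand, absorb it into the shift $z$, apply the tower and Markov properties, and use the splicing structure of Definition~\ref{def:DT_EquilibriumControl} for the value-function line, with the spliced state $X^{\cu^{t_{n}}}_{t_{n+1}}$ evaluated inside the equilibrium continuation $f^{\ecu}_{t_{n+1}}$) is precisely the standard derivation used in that reference, and it is the same tower-property recursion the authors replicate in continuous time in Lemma~\ref{lem:CT_KM_with_tau_recursion_for_f}.
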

\begin{remark}\label{rem:Bjoerk_results_on_KM_in_DT}
	Let us look at some special cases to build intuition about the system \eqref{KM_DT_extendedBellman}. For instance, consider a discount function $\Delta : \mathcal{T}^{2} \to [0,1]$, with $\Delta(\iota,\iota) = 1$, and assume that $\Phi(y) = y$ and $H$ (with abuse of notation) takes the separable form 
	\begin{equation*}
		H\Big(t_{k}, X_{t_{k}}^{\cu}, \cu_{t_{k}}\big(X_{t_{k}}^{\cu}\big), t_{n}\Big) = \Delta\left(t_{k}, t_{n}\right) H\Big(X_{t_{k}}^{\cu}, \cu_{t_{k}}\big(X_{t_{k}}^{\cu}\big)\Big).
	\end{equation*}
	This yields the following (time-additive) expected utility functional:
	\begin{equation*}
		J_{t_{n}}(x, \bm{u}) = \mathbb{E}_{t_{n},x}\sBrackets{\sum_{k=n}^{N}\Delta\left(t_{k}, t_{n}\right) H\Big(X_{t_{k}}^{\cu}, \cu_{t_{k}}\big(X_{t_{k}}^{\cu}\big)\Big)}.
	\end{equation*}
	In this setting, for $t_{n} \in \{0,t_{1},\dots,t_{N - 1}\}$ and $(x,t_{m}) \in \mathcal{X} \times \{0,t_{1},\dots,t_{n}\}$, the extended Bellman system is given by
	\begin{equation} \label{DT_extendedBellmanGeneralDiscount}
		\begin{split}
			\widehat{V}_{t_{n}}(x) & = \sup_{\cu_{t_{n}}(x) \in \mathcal{A}(t_{n},x)} 
			\left\{ H(x,\cu_{t_{n}}(x)) + \mathbb{E}_{t_{n},x}\Big[\bar{f}_{t_{n+1}}^{\ecu}\rBrackets{X^{\cu^{t_{n}}}_{t_{n+1}},t_{n}}\Big] \right\},\\
			\widehat{V}_{T}(x) & = \sup_{\cu_{T}(x) \in \mathcal{A}(T,x)} \, H(x,\cu_{T}(x)), \\
			\bar{f}^{\ecu}_{t_{n}}(x,t_{m}) & =  \, \Delta\left(t_{n}, t_{m}\right)H(x,\ecu_{t_{n}}(x)) + \mathbb{E}_{t_{n},x}\sBrackets{\bar{f}^{\ecu}_{t_{n+1}}\left(X^{\ecu}_{t_{n+1}},t_{m}\right)}, \\
			\bar{f}^{\ecu}_{T}(x,t_{m})&= \Delta\left(T,t_{m}\right)H\big(x, \ecu_T(x)\big), \\
		\end{split}
	\end{equation}
	where $\bar{f}^{\ecu}_{t_n}(x, t_m) := f^{\ecu}_{t_n}(x, 0, t_m)$ and it has the probabilistic representation
	\begin{equation*}
		\bar{f}^{\ecu}_{t_{n}}(x,t_{m}) = \mathbb{E}_{t_{n},x}\sBrackets{\sum_{k=n}^{N}\Delta\left(t_{k}, t_{m}\right) H\Big(X_{t_{k}}^{\ecu}, \ecu_{t_{k}}\big(X_{t_{k}}^{\ecu}\big)\Big)}.
	\end{equation*}
	
	The key distinction between \eqref{KM_DT_extendedBellman} and \eqref{DT_extendedBellmanGeneralDiscount} is that, in the case of KM preferences, the auxiliary function $f^{\cu}$ is defined in terms of an additional argument $z$. This variable is instrumental in handling the utility induced by present-time actions (which cannot be factored out of the conditional expectation due to non-time-additivity) at later times.\footnote{In the continuous-time model, whereby \eqref{KM_DT_extendedBellman} translates into a coupled system of two PDEs, we will show that for some choices of the preference components, one can suppress the variable $z$ by studying an associated countably infinite system of PDEs.}
	
	When $\Delta\left(t_{k}, t_{n}\right) = \delta^{t_{k}-t_{n}}$ (exponential discounting), we obtain the classic recursion:
	\begin{equation*}
		\begin{split}
			\widehat{V}_{t_{n}}(x) & = \sup_{\cu_{t_{n}}(x) \in \mathcal{A}(t_{n},x)} 
			\left\{ H\left(x,\cu_{t_{n}}(x)\right) + \delta^{t_{n+1}-t_{n}} \, \mathbb{E}_{t_{n},x}\Big[\widehat{V}_{t_{n+1}}\rBrackets{X^{\cu^{t_{n}}}_{t_{n+1}}}\Big] \right\},\\
			\widehat{V}_{T}(x) & = \sup_{\cu_{T}(x) \in \mathcal{A}(T,x)} \, H\left(x,\cu_{T}(x)\right). 
		\end{split}
	\end{equation*}
	
\end{remark}

We can now present our continuous-time equilibrium theory for KM preferences.

\section{From discrete to continuous time}\label{sec:general_theory}

Section \ref{subsec:setup} introduces the setup, including some useful definitions and preliminaries. In Section \ref{subsec:informal_arguments}, we follow the approach of \cite{BjoerkKhapkoMurgoci2021:TICT} and reach via heuristic arguments an extension of the standard Hamilton\textendash Jacobi\textendash Bellman partial differential equation that characterizes the value function and optimal control of time-consistent problems. This informal derivation should serve as a motivation for (and intuition about) the verification theorem, which we present formally in \mbox{Section \ref{subsec:verification_theorem}.} 

\subsection{Setup} \label{subsec:setup}
Consider a probability space $\rBrackets{\Omega, \mathcal{F}, \mathbb{P}}$, endowed with a (right-continuous, increasing, and augmented) filtration $\bm{\mathcal{F}}:= \rBrackets{\mathcal{F}_t}_{\tin}$ generated by a Wiener process $W := (W_t)_{\tin}$. The controlled state process $X^{\cu} := (X_t^{\cu})_{\tin}$, taking values in $\XSet \subseteq \R$, solves the stochastic differential equation (SDE)
\begin{equation}\label{eq:SDE_X}
	dX_t^{\cu} = \mu(t, X_t^{\cu}, \cu(t, X_t^{\cu}))dt + \sigma(t, X_t^{\cu}, \cu(t, X_t^{\cu})) dW_t, \quad X_{0}^{\cu} = x_{0} \in \XSet,
\end{equation}
where  $\mu, \sigma :\TSet \times \XSet \times \R^d \rightarrow \R$ are continuous mappings representing the (controlled) drift and volatility, respectively, and $\cu: \TSet \times \XSet \rightarrow \R^d$ is the control law chosen by the agent, for some dimensionality $d \in \N$. We consider feedback strategies in the form of $
\cu(t, X^{\cu}_t)$, thus $\cu$ in general depends on the current time $t$ and on the value of the state process $X^{\cu}_{t}$. Whenever there is no confusion, we will write for brevity $\cu(t):=\cu(t, X^{\cu}_t)$.

The agent's reward functional is given by
\begin{equation}\label{eq:generalized_KM_CT_reward_func_with_t_dependence}
	J(t, x, \cu) = \EVtx \sBrackets{\Phi\rBrackets{\int \limits_{t}^{T}H(s, X_s^{\cu}, \bm{u}(s, X_s^{\cu}),t)\,ds + G \rBrackets{X_T^{\cu}, t} } }.
\end{equation}
Here $G: \XSet \times \TSet \rightarrow \R$ represents the discounted utility of terminal state, while $\Phi:\R \rightarrow \R$ and  $H:\TSet \times \XSet \times \R^d \times \TSet \rightarrow  \R$ carry their meaning from previous sections. The function $H$ is assumed to be continuous.

We proceed by outlining the conditions for the admissibility of a control \mbox{strategy}. 
\begin{definition}\label{def:CT_KM_without_tau_admissible_controls}
	A control $\cu$ is said to be admissible if, for all $(t, x) \in \TSet \times \XSet$, the following conditions hold:
	\begin{itemize}
		\item[(i)] $\cu(t, x) \in \mathcal{A}(t, x)$,
		where $\AMap:\TSet \times \XSet \rightarrow 2^{\R^{k}}$ is a continuous set-valued function representing the admissible values attained by $\cu(t,x)$.\footnote{More precisely, for each $(t, x) \in [0, T) \times \XSet$ and each $u \in \mathcal{A}(t, x)$, we assume that there exists a continuous control $\cu$ with $\cu(t, x) = u$.} 
		\item[(ii)] The SDE \eqref{eq:SDE_X} has a unique strong solution $X^{\cu}$.
		\item[(iii)] $J(t, x, \cu)$ is well defined and finite.
	\end{itemize}
	The set of admissible controls is denoted by $\ASet$.
\end{definition}

The agent searches for an equilibrium control using the following definition.
\begin{definition}\label{def:equilibrium_u}
	Consider a point $\tx$, a pair of controls $\ecu, \cu \in \ASet,$ and a real number $h \in (0, T - t]$. Define a new control $\cuh$ by setting 
	\begin{equation}\label{eq:perturbed_equilibrium_control}
		\cu_h(s,y) = \left\{
		\begin{aligned}
			&\cu(s,y),\quad \text{for } t \leq s < t + h,\quad y \in \XSet, \\
			&\ecu(s,y),\quad \text{for } t + h \leq s < T,\quad y \in \XSet.
		\end{aligned}
		\right.
	\end{equation}
	If the inequality
	\begin{equation}\label{eq:ecu_def_property}
		\liminf_{h \downarrow 0} \frac{J(t,x, \ecu) - J(t,x, \cuh)}{h} \geq 0
	\end{equation}
	holds for any $\cu \in \ASet$ and $(t,x) \in [0, T) \times \XSet$, then $\ecu$ is an (intrapersonal) equilibrium control. When $\ecu$ exists, the corresponding equilibrium value function $\widehat{V}$ is defined as
	\begin{equation}\label{eq:KM_CT_with_tau_equilibrium_VF}
		\begin{split}
			\widehat{V}(t, x) := J(t, x, \ecu).
		\end{split}
	\end{equation} 
\end{definition}

\begin{remark}
	This notion of equilibrium, which is based on the first-order expansion of the reward functional around the candidate $\ecu$ (thus often referred to as a \textit{weak equilibrium}), was initially formulated in \cite{EkelandPirvu2008:MFE} and \cite{EkelandLazrak2010:MFE}. 
	
	Two other definitions have emerged in the literature. The first is a natural extension of Definition \ref{def:DT_EquilibriumControl} in continuous time, and it has been studied in different settings by \cite{HuangZhou2021:MOR} and \cite{HeJiang2021:SICON}:
	\begin{itemize} 
		\item \textit{Strong equilibrium.} A control $\ecu \in \ASet$ is a strong equilibrium if for any $(t,x) \in [0,T) \times \mathcal{X}$ and $\cu\in \ASet,$ there exists $\epsilon_{0} \in (0,T-t)$ such that
		\begin{equation} \label{eq:StrongEquilibrium}
			J(t,x, \cuh) - J(t,x, \ecu)  \leq 0, \quad \mbox{for any } \epsilon \in (0,\epsilon_{0}],
		\end{equation}
		where $\cuh$ is defined in \eqref{eq:perturbed_equilibrium_control}.
	\end{itemize}
	In the same paper cited above, He and Jiang argued that a strong equilibrium may not exist under several sources of time-inconsistency. For this reason, they suggested the following less restrictive definition, which allows the agent to contemplate only alternative strategies that differ from $\ecu$ at a given time and state:
	\begin{itemize}
		\item \textit{Regular equilibrium.} A control $\ecu \in \ASet$ is a regular equilibrium if for any $(t,x) \in [0,T) \times \mathcal{X}$ and $\cu\in \ASet,$ with $\ecu(t,x) \neq \cu(t,x)$, there exists $\epsilon_{0} \in (0,T-t)$ such that \eqref{eq:StrongEquilibrium} holds. 
	\end{itemize}
	We leave the question of whether strategies that achieve such equilibria exist (or even coincide) under KM preferences for future study.
\end{remark}

Like in the discrete-time case, to characterize the equilibrium value function $\widehat{V}$ (ultimately as the solution of a system of PDEs), we bring in an auxiliary function $f^{\cu} : \TSet \times \mathcal{X} \times \mathbb{R} \times \mathcal{T} \to \mathbb{R}$.
\begin{definition}\label{def:KM_CT_with_tau_equilibrium_V_f}
	For any $\cu \in \ASet$,  $(t,x,z,\tau) \in \TSet \times \XSet \times \mathbb{R} \times [0, t]$, the function $f^{\cu}$ is defined by
	\begin{equation}\label{eq:KM_CT_with_tau_f_probabilistic_representation}
		f^{\cu}(t, x,z, \tau) = \EVtx \sBrackets{\Phi\rBrackets{\int \limits_{t}^{T}H(s, X_s^{\cu}, \cu(s),\tau)\,ds + G \rBrackets{ X^{\cu}_T,\tau} + z } }.
	\end{equation}
	In particular, note that
	\begin{align}
		f^{\cu}(t, x,0, t) &= J(t, x, \cu), \label{eq:KM_CT_with_tau_f_probabilistic_representation_z_0_tau_t}\\
		f^{\ecu}(t, x,0, t) &= J(t, x, \ecu) = \widehat{V}(t, x) \label{eq:KM_CT_with_tau_f_ecu_at_0}.
	\end{align}
	In what follows, if one of the coordinates of $f^{\cu}(t,x,z, \tau)$ is fixed, we place it in the superscript after a vertical bar. For instance, $f^{\cu|\tau}(t,x,z)$ indicates that $\tau$ takes some constant value, and the resulting function is seen as depending on $(t,x,z)$ only. The same principle applies to other functions.
\end{definition}

In Section \ref{subsec:EqTheory_kM_DT}, we discussed the role of the variable $z$ to ``keep track'' of the current utility $H$ at later points in time. With this intuition in mind, for an arbitrary but fixed $\tau \in [0,t]$, we introduce a stochastic process $Z^{\cu} =(Z^{\cu}_s)_{s \in [t, T]}$ characterized by the dynamics
\begin{equation}\label{eq:SDE_Z}
	dZ_{s}^{\cu} = H(s, X^{\cu}_s, \cu(s), \tau) ds,\qquad  Z^{\cu}_t = z.
\end{equation}

This leads to our definition of a differential operator (or infinitesimal generator) for the controlled state processes $(X^{\cu}, Z^{\cu})$.
\begin{definition}\label{def:differential_operator}
	Let $X^{\cu}$ be given by \eqref{eq:SDE_X}, $Z^{\cu}$ be given by \eqref{eq:SDE_Z}, and $\xi$ be a map from  $(t, x, z, \tau) \in \TSet \times \XSet \times \R \times \TSet$ to $\mathbb{R}$. Suppose $\xi \in \textgoth{C}^{1,2, 1, 1}\left(\TSet \times \XSet \times \R \times \TSet\right)$,\footnote{Given a positive integer $r$, $\textgoth{C}^{r}(\mathbb{D})$ indicates the space of functions that are continuously differentiable up to order $r$ on the domain $\mathbb{D}$. For functions of multiple variables, the order of continuous differentiability in each variable is listed in the superscript.} and denote by $\partial_y \, \xi(y,\cdot) $ and $\partial_{yy} \, \xi(y,\cdot)$ its first-order and second-order partial derivative in $y$, respectively. For any $\cu \in \ASet$, the controlled differential operator $\mathcal{D}^{\cu}$ applied to $\xi$ is defined as follows:
	\begin{equation}\label{eq:def_differential_operator}
		\begin{aligned}
			\mathcal{D}^{\cu}\xi(t,x,z,\tau) &=  \partial_t \xi(t,x,z,\tau) + \mu(t,x,\cu(t,x)) \partial_x \xi(t,x,z,\tau) \\
			&  \quad + \dfrac{1}{2}\rBrackets{\sigma(t,x,\cu(t,x))}^{2}\partial_{xx}\xi(t,x,z,\tau)\\
			& \quad + H(t, x, \cu(t,x), \tau) \partial_z \xi(t,x,z,\tau) + \partial_\tau \xi(t,x,z,\tau).
		\end{aligned}
	\end{equation}
	For a constant control $u$, the differential operator is denoted by $\mathcal{D}^{u}$ and defined analogously. 
\end{definition}

The next definition of an $\mathcal{L}^2$ function space deals with integrability conditions (cf. Definition 3.3 in \cite{Lindensjoe2019:ORL}).
\begin{definition}
	\label{def:admissible_space_of_functions}
	Consider an arbitrary control $\cu \in \ASet$. A function $ \xi:\TSet \times \XSet \times \mathbb{R} \times \TSet \rightarrow \mathbb{R}$ is said to belong to the space $\mathcal{L}^2(X^{\cu})$ if, for any $(t, x, z, \tau) \in \TSet \times \XSet \times \mathbb{R} \times [0, t]$, there exists a constant $\bar{h} \in (0, T - t)$ such that
	\begin{equation*}
		\begin{split}
			& \mathbb{E}_{t,x, z} \Bigg[ \sup_{0 \leq h \leq \Bar{h}} \Bigg\vert \int_{t}^{t+h}\dfrac{1}{h}\mathcal{D}^{\cu}\xi(s,X_{s}^{\cu}, Z_{s}^{\cu}, \tau) ds \; \Bigg\vert \Bigg.\\
			& \hspace{2cm} \Bigg. + \int_{t}^{t+\bar{h}}\Bigg(  \partial_{x} \xi(s,X_{s}^{\cu}, Z_{s}^{\cu}, \tau)\sigma \bigl(s,X_{s}^{\cu},\cu(s,X_{s}^{\cu})\bigr)\Bigg)^{2} ds \Bigg] < \infty,
		\end{split}
	\end{equation*}
	where $\mathbb{E}_{t,x,z}\left[\cdot\right]$ denotes the conditional expectation given $X_{t}^{\cu} = x$ and $Z^{\cu}_t = z$.
\end{definition}

In the end, we have the ensuing lemma. This serves a twofold aim: First, it gives a recursive representation of $f^{\cu}$. Second, it yields a Feynman\textendash Kac-type formula showing that $f^{\cu}$ solves an associated PDE. Both results will be useful later on.
\begin{lemma}\label{lem:CT_KM_with_tau_recursion_for_f}
	Assume that $f^{\cu} \in \mathcal{L}^2(X^{\cu})$. For any admissible control $\cu \in \ASet$, $(t,x,z,\tau) \in \TSetNoT \times \XSet \times \mathbb{R} \times [0, t]$, and $h \in [0, T - t]$, $f^{\cu}$ satisfies the recursion
	\begin{align}
		f^{\cu}(t, x, z, \tau)&= \EVtx\sBrackets{f^{\cu}\rBrackets{t + h, X^{\cu}_{t+h},  \inttth H(s, X_s^{\bm{u}}, \bm{u}(s), \tau) \, ds + z, \tau}}, \label{eq:CT_KM_with_tau_f_cu_recursion_rule} \\
		f^{\cu}(T, x, z, \tau)&= \Phi\rBrackets{G(x, \tau) + z}.\label{eq:CT_KM_with_tau_f_cu_recursion_terminal_condition}
	\end{align}
	In addition, if $f^{\cu|\tau} \in \textgoth{C}^{1,2, 1}\left(\TSet \times \XSet \times \R \right)$, then $f^{\cu|\tau}$ solves
	\begin{equation}\label{eq:CT_KM_with_tau_PDE_f_cu}
		\mathcal{D}^{\cu}f^{\cu|\tau}(t,x,z) = 0.
	\end{equation}
\end{lemma}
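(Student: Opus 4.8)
The plan is to establish the three claims in sequence, relying on the Markov structure of the augmented state process $(X^{\cu}, Z^{\cu})$ for the recursion, a direct substitution for the terminal condition, and an It\^o-plus-localization argument for the PDE. Throughout, the decisive observation is that the running utility accumulated over $[t, t+h]$ together with the offset $z$ is exactly the increment recorded by the auxiliary coordinate $Z^{\cu}$ of \eqref{eq:SDE_Z}, so that the non-separable argument of $\Phi$ can be carried forward in time by augmenting the state; this is what makes $(X^{\cu}, Z^{\cu})$ the right object to work with.

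For the recursion \eqref{eq:CT_KM_with_tau_f_cu_recursion_rule}, I would fix $\tau \in [0,t]$ and split the integral in the probabilistic representation \eqref{eq:KM_CT_with_tau_f_probabilistic_representation} as $\int_t^T = \int_t^{t+h} + \int_{t+h}^T$, absorbing the first piece together with $z$ into the single $\mathcal{F}_{t+h}$-measurable quantity $Z^{\cu}_{t+h} = z + \inttth H(s, X_s^{\cu}, \cu(s), \tau)\,ds$ inside $\Phi$. Conditioning on $\mathcal{F}_{t+h}$ and invoking the tower property, the inner conditional expectation depends on the past only through $(X^{\cu}_{t+h}, Z^{\cu}_{t+h})$, since $\cu$ is a feedback control and hence the closed-loop pair $(X^{\cu}, Z^{\cu})$ is Markov. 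Recognizing that this inner expectation, evaluated at the offset $Z^{\cu}_{t+h}$, is precisely $f^{\cu}(t+h, X^{\cu}_{t+h}, Z^{\cu}_{t+h}, \tau)$ by Definition \ref{def:KM_CT_with_tau_equilibrium_V_f} yields \eqref{eq:CT_KM_with_tau_f_cu_recursion_rule}. The terminal identity \eqref{eq:CT_KM_with_tau_f_cu_recursion_terminal_condition} then follows by setting $t = T$ in \eqref{eq:KM_CT_with_tau_f_probabilistic_representation}: the running integral vanishes, $X^{\cu}_T = x$, and only $\Phi(G(x, \tau) + z)$ survives.

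For the PDE \eqref{eq:CT_KM_with_tau_PDE_f_cu}, I would apply It\^o's formula to $f^{\cu|\tau}(s, X_s^{\cu}, Z_s^{\cu})$ on $[t, t+h]$, using $f^{\cu|\tau} \in \textgoth{C}^{1,2,1}$, the dynamics \eqref{eq:SDE_X}, and the driftless, diffusion-free dynamics \eqref{eq:SDE_Z} of $Z^{\cu}$. Since $\tau$ is held fixed along the path, the resulting drift is exactly $\mathcal{D}^{\cu}f^{\cu|\tau}$, the $\partial_\tau$ term in \eqref{eq:def_differential_operator} contributing nothing because $f^{\cu|\tau}$ does not vary in $\tau$, while the martingale part is $\int \partial_x f^{\cu|\tau}\,\sigma\,dW$. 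Taking $\EVtx$, the $\mathcal{L}^2(X^{\cu})$ hypothesis guarantees that this stochastic integral has zero mean; combined with the recursion \eqref{eq:CT_KM_with_tau_f_cu_recursion_rule}, which forces the increment of the left-hand side to vanish, this gives $\EVtx[\inttth \mathcal{D}^{\cu}f^{\cu|\tau}(s, X_s^{\cu}, Z_s^{\cu})\,ds] = 0$. Dividing by $h$ and letting $h \downarrow 0$, the continuity of $\mathcal{D}^{\cu}f^{\cu|\tau}$ (inherited from the regularity of $f^{\cu|\tau}$ and the continuity of $\mu, \sigma, H$) together with the dominated-convergence bound built into Definition \ref{def:admissible_space_of_functions} recovers $\mathcal{D}^{\cu}f^{\cu|\tau}(t, x, z) = 0$.

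The main obstacle I anticipate is the analytic justification of these last two steps rather than the algebra: verifying that $\int \partial_x f^{\cu|\tau}\,\sigma\,dW$ is genuinely of zero expectation and that the difference-quotient limit may be passed under the expectation. Both are exactly what the two terms in Definition \ref{def:admissible_space_of_functions} are designed to control, the $\sup_{h}$ term dominating the difference quotient of the drift integral and the $(\partial_x \xi \, \sigma)^2$ term securing square-integrability of the martingale part, so the argument reduces to checking that the membership $f^{\cu} \in \mathcal{L}^2(X^{\cu})$ licenses dominated convergence cleanly as $h \downarrow 0$.
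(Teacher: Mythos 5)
Your proposal is correct and takes essentially the same route as the paper's proof: the same integral-splitting and tower-property argument for the recursion \eqref{eq:CT_KM_with_tau_f_cu_recursion_rule}, direct substitution at $t=T$ for \eqref{eq:CT_KM_with_tau_f_cu_recursion_terminal_condition}, and the same It\^{o}-formula argument for the PDE, in which the stochastic integral is killed by the $\mathcal{L}^2(X^{\cu})$ hypothesis, the quadratic-variation terms involving $Z^{\cu}$ vanish, and the limit $h \downarrow 0$ of the difference quotient is passed inside the expectation by dominated convergence (the paper does this last step via an $\omega$-wise mean value theorem, which matches your continuity argument). The only slip is terminological: the dynamics \eqref{eq:SDE_Z} of $Z^{\cu}$ are not \emph{driftless} but rather diffusion-free with drift $H$ --- your subsequent inclusion of the $H\,\partial_z f^{\cu|\tau}$ term in the drift produced by It\^{o}'s expansion shows you handle this correctly.
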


\subsection{Informal arguments}\label{subsec:informal_arguments}
In discrete time, Proposition \ref{prop:DT_KM_extended_Bellman_system} provided us with the recursive program for the equilibrium value function $\widehat{V}$ and auxiliary function $f^{\ecu}$. In what follows, allowing for some degree of informality,  we evince that the limit of this recursion -- as the time-discretization step goes to zero -- is given by a system of PDEs of HJB type. In Section \ref{subsec:verification_theorem}, we confirm that this is the correct system to study.\footnote{For a different approach to establish an extended dynamic programming principle for time-inconsistent problems via backward SDEs, see \cite{HernandezPossamai2023:AAP}.}

First, let us recall the discrete-time recursion from \eqref{KM_DT_extendedBellman}:
\begin{equation*}
	\widehat{V}_{t_{n}}(x) = \sup_{u \in \mathcal{A}(t_{n},x)}\mathbb{E}_{t_{n},x}\sBrackets{f_{t_{n+1}}^{\ecu}\rBrackets{X^{\cu^{t_{n}}}_{t_{n+1}},H(t_{n},x,u,t_{n}),t_{n}}},
\end{equation*}
where we use the notation $u := \cu_{t_{n}}(x)$. We can equivalently write the above as
\begin{equation*}
	\begin{split}
		\widehat{V}_{t_{n}}(x) =& \sup_{u \in \mathcal{A}(t_{n},x)}\bigg\{ \mathbb{E}_{t_{n},x}\sBrackets{\widehat{V}_{t_{n+1}}(X_{t_{n+1}}^{\cu^{t_{n}}})} + \mathbb{E}_{t_{n},x}\sBrackets{f_{t_{n+1}}^{\ecu}\rBrackets{X^{\cu^{t_{n}}}_{t_{n+1}},H(t_{n},x,u,t_{n}),t_{n}}} \bigg. \nonumber \\
		& \hspace{2.cm} \bigg. - \mathbb{E}_{t_{n},x}\sBrackets{f_{t_{n+1}}^{\ecu}\rBrackets{X^{\cu^{t_{n}}}_{t_{n+1}},0,t_{n+1}}} \bigg\},
	\end{split}
\end{equation*}
by using the fact that $\mathbb{E}_{t_{n},x}\sBrackets{\widehat{V}_{t_{n+1}}(X_{t_{n+1}}^{\cu^{t_{n}}})} = \mathbb{E}_{t_{n},x}\sBrackets{f_{t_{n+1}}^{\ecu}\rBrackets{X^{\cu^{t_{n}}}_{t_{n+1}},0,t_{n+1}}}$. This implies that, for any $\cu^{t_n}$, the following inequality holds:
\begin{equation*}
	\begin{split}
		& 0 \geq  \mathbb{E}_{t_{n},x}\sBrackets{\widehat{V}_{t_{n+1}}(X_{t_{n+1}}^{\cu^{t_n}})} - \widehat{V}_{t_{n}}(x) \\
		& \qquad + \mathbb{E}_{t_{n},x}\sBrackets{f_{t_{n+1}}^{\ecu}\rBrackets{X^{\cu^{t_n}}_{t_{n+1}},H(t_{n},x,\cu^{t_n}_{t_n}(x),t_{n}),t_{n}}}  \nonumber - \mathbb{E}_{t_{n},x}\sBrackets{f_{t_{n+1}}^{\ecu}\rBrackets{X^{\cu^{t_n}}_{t_{n+1}},0,t_{n+1}}}.
	\end{split}
\end{equation*}
Now take an arbitrary but fixed point $(t, x) \in [0, T) \times \XSet$ and consider a control $\cu_h \in \ASet$ as in \eqref{eq:perturbed_equilibrium_control}. Writing $V$ instead of $\widehat{V}$, $\cu_{h}$ instead of $\cu^{t_n}$, and replacing $t_n$ by $t$ and $t_{n+1}$ by $t + h$, with $h = t_{n+1} -  t_{n}$, we reformulate the above inequality as follows:
\begin{flalign}
	& 0 \geq \EV_{t,x} \sBrackets{V(t+h,X_{t+h}^{\cuh})} - V(t,x) \notag \\
	& \qquad + \EV_{t,x} \sBrackets{f \rBrackets{t+h, X^{\cuh}_{t+h}, \int \limits_{t}^{t+h} H(s,X^{\cuh}_{s}, \cuh(s), t)\,ds, t}}  \notag \\
	& \qquad  - \EV_{t,x} \left[f \rBrackets{t+h, X^{\cuh}_{t+h}, 0, t + h}\right], \notag
\end{flalign}
where the time index has been moved inside the parentheses as an argument. Adding and subtracting the term $\EV_{t,x} \sBrackets{f \rBrackets{t+h, X^{\cuh}_{t+h}, 0, t}}$, dividing the above inequality by $h>0$, and taking the limit as $h \downarrow 0$, we anticipate to obtain
\begin{flalign*}
	0 & \geq \mathcal{D}^{u}V(t,x) + \partial_{z} f\rBrackets{t, x, 0, t} H(t, x, u, t) - \partial_{\tau} f(t, x, 0, t).
\end{flalign*}
Taking the supremum with respect to $u \in \mathcal{A}(t,x)$ over the infinitesimal time period $[t, t+ h)$, we write
\begin{align*}
	0 & \geq \sup_{u \in \mathcal{A}(t,x)}\left\{\mathcal{D}^{u}V(t,x) + \partial_{z} f\rBrackets{t, x, 0, t} H(t, x, u, t) - \partial_{\tau} f(t, x, 0, t) \right\},
\end{align*}
which becomes an equality at the equilibrium control $\ecu(t,x)$:
\begin{align*}
	0 & = \mathcal{D}^{\ecu}V(t,x) + \partial_{z} f\rBrackets{t, x, 0, t} H(t, x, \ecu(t, x), t) - \partial_{\tau} f(t, x, 0, t).
\end{align*}
The terminal condition for $V$ is given by $
V(T, x) = \Phi\rBrackets{G(x,T)}$. In addition, an application of Lemma \ref{lem:CT_KM_with_tau_recursion_for_f} for $\cu = \ecu$ yields the PDE and terminal condition for $f$.

We summarize these derivations below.
\paragraph*{The extended HJB system.} Let the reward functional $J$ be of the form \eqref{eq:generalized_KM_CT_reward_func_with_t_dependence}. For any $(t,x,z,\tau) \in [0, T) \times \XSet \times \mathbb{R} \times [0, t]$, the extended HJB system is given by
\begin{align}
	\hspace{-1cm} 0 & = \sup_{u \in \mathcal{A}(t,x)}\left\{\mathcal{D}^{u}V(t,x) + \partial_{z} f(t, x, 0, t) H(t, x, u, t) - \partial_{\tau} f(t, x, 0, t)  \right\},\label{eq:CT_KM_with_tau_EHJB_PDE_V_t_x_z} \tag{$S1$}\\
	\hspace{-1cm}  0 & =  \mathcal{D}^{\ecu}f^{|\tau}(t, x, z),
	\label{eq:CT_KM_with_tau_EHJB_PDE_f_t_x_z} \tag{$S2$}\\
	& V(T,x)  = \Phi\rBrackets{G\rBrackets{x, T}}, \label{eq:CT_KM_with_tau_EHJB_PDE_V_T_x} \tag{$S3$}\\ 
	\hspace{-0.3cm}  & f(T, x, z, \tau) = \Phi\rBrackets{G\rBrackets{x, \tau} + z}. \tag{$S4$}\label{eq:CT_KM_with_tau_EHJB_PDE_f_T_x_z_tau} 
\end{align}

\subsection{Verification theorem} \label{subsec:verification_theorem}

The following theorem verifies that, under suitable regularity assumptions, the candidate functions $f$ and $V$ solving the extended HJB system \eqref{eq:CT_KM_with_tau_EHJB_PDE_V_t_x_z}-\eqref{eq:CT_KM_with_tau_EHJB_PDE_f_T_x_z_tau} characterize the equilibrium value function, and that the argument of the supremum in \eqref{eq:CT_KM_with_tau_EHJB_PDE_V_t_x_z} is an equilibrium control.

\begin{theorem}\label{th:verification_theorem_with_tau}
	Assume that the following conditions are satisfied:
	\begin{enumerate}[itemindent=0.4cm, itemsep=-0.5ex]
		\item[(C1)] An admissible equilibrium control $\ecu$ exists and realizes the $\sup$ \mbox {in \eqref{eq:CT_KM_with_tau_EHJB_PDE_V_t_x_z}.}
		\item[(C2)] $V(t,x)$ and $f(t,x,z, \tau)$ solve the extended HJB system  \eqref{eq:CT_KM_with_tau_EHJB_PDE_V_t_x_z}-\eqref{eq:CT_KM_with_tau_EHJB_PDE_f_T_x_z_tau}.
		\item[(C3)] $V\in \textgoth{C}^{1,2}\left(\TSet \times \XSet\right)$ and $f\in \textgoth{C}^{1,2, 1, 2}\left(\TSet \times \XSet \times \R \times \TSet\right)$.
		\item[(C4)] $V, f \in \mathcal{L}^2(X^{\cu})$ for any $\cu \in \ASet$.
		\item[(C5)] For any $\cu \in \ASet$, there exists $\overline{h} > 0$ such that  
		\begin{equation*}
			\sup_{h \in (0, \overline{h}), \, \eta : \Omega \to [t,t+h]}\mathbb{E}_{t,x,0}\bigg[ \big\vert \partial_{z} f(t + h, X^{\cu}_{t+h}, Z^{\cu|t}_{\eta}, t)H(\eta, X^{\cu}_{\eta}, \cu(\eta, X^{\cu}_{\eta}), t) \big\vert \bigg] < \infty.
		\end{equation*} 
		\item[(C6)] For any $\cu \in \ASet$, there exists $ \overline{h} > 0$ such that 
		\begin{equation*}
			\sup_{h \in (0, \overline{h}), \, \iota: \Omega \rightarrow [t, t+h]} \EVtx\bigg[\big\vert \partial_{\tau} f(t + h, X^{\cu}_{t+h}, 0, t) \big\vert + \big\vert f_{\tau \tau }(t + h, X^{\cu}_{t+h}, 0, \iota) h \big\vert\bigg]< \infty.
		\end{equation*}
	\end{enumerate}
	Then:
	\begin{enumerate}[itemindent=0.4cm, itemsep=-0.5ex]
		\item[(R1)] $f(t, x, z, \tau) = f^{\ecu}(t, x, z, \tau)$ and has the probabilistic representation \eqref{eq:KM_CT_with_tau_f_probabilistic_representation}.
		\item[(R2)] $V(t,x) = J(t,x,\ecu)$ for $\ecu$ realizing the sup in \eqref{eq:CT_KM_with_tau_EHJB_PDE_V_t_x_z}.
		\item[(R3)] $\ecu$ is an equilibrium control in the sense of Definition \ref{def:equilibrium_u}.
		\item[(R4)] $\widehat{V}(t,x) = V(t, x)$ is the equilibrium value function and has the probabilistic representation \eqref{eq:KM_CT_with_tau_equilibrium_VF}.
	\end{enumerate}
\end{theorem}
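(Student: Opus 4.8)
The plan is to prove the claims in the order (R1), then (R2) and (R4) jointly, and finally (R3), since each step relies on the identifications made before it. For (R1), I would run a Feynman--Kac (Dynkin) argument on the candidate $f$ along the equilibrium-controlled pair $(X^{\ecu}, Z^{\ecu})$ with the last coordinate frozen at an arbitrary $\tau \in [0,t]$. With $\tau$ fixed, $Z^{\ecu}$ in \eqref{eq:SDE_Z} carries no $\tau$-dynamics, so the generator of $(X^{\ecu}, Z^{\ecu})$ coincides with $\mathcal{D}^{\ecu}$ acting on $f^{|\tau}$. Under the regularity (C3) and integrability (C4), It\^{o}'s formula applied to $s \mapsto f(s, X_s^{\ecu}, Z_s^{\ecu|\tau}, \tau)$ on $[t,T]$ has a true-martingale stochastic integral; taking expectations and using that the drift vanishes by (S2) yields
\[
f(t,x,z,\tau) = \EVtx\big[f(T, X_T^{\ecu}, Z_T^{\ecu|\tau}, \tau)\big].
\]
The terminal condition (S4) turns the right-hand side into the probabilistic representation \eqref{eq:KM_CT_with_tau_f_probabilistic_representation}, giving $f = f^{\ecu}$; this is precisely the converse direction already contained in Lemma \ref{lem:CT_KM_with_tau_recursion_for_f}.

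For (R2) and (R4), I would introduce the diagonal function $W(t,x) := f(t,x,0,t)$ and show $W\equiv V$. By (R1) and \eqref{eq:KM_CT_with_tau_f_ecu_at_0}, $W(t,x) = f^{\ecu}(t,x,0,t) = J(t,x,\ecu) = \widehat V(t,x)$, so it only remains to match $W$ with $V$. The chain rule (the variable $t$ enters $W$ through both the first and the fourth slot) gives $\partial_t W = \partial_t f + \partial_\tau f$, $\partial_x W = \partial_x f$, and $\partial_{xx} W = \partial_{xx} f$, all evaluated at $(t,x,0,t)$. Evaluating (S2) at $z=0$, $\tau = t$ lets me substitute $\partial_t f + \mu\,\partial_x f + \tfrac12\sigma^2\partial_{xx}f = -H(t,x,\ecu(t,x),t)\,\partial_z f$; on functions of $(t,x)$ alone $\mathcal{D}^{\ecu}$ is just the generator of $X^{\ecu}$, and both $\mathcal{D}^{\ecu}V(t,x)$ (through the equilibrium form of (S1)) and $\mathcal{D}^{\ecu}W(t,x)$ are then seen to equal the common expression
\[
\partial_\tau f(t,x,0,t) - \partial_z f(t,x,0,t)\,H(t,x,\ecu(t,x),t).
\]
Hence $V-W$ solves the homogeneous linear equation $\mathcal{D}^{\ecu}(V-W)=0$ with terminal value $0$ (by (S3) and (S4)), and a Dynkin/uniqueness argument under (C3)--(C4) forces $V=W$, establishing (R2) and (R4).

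Finally, (R3) is the substantive step. Fix $(t,x)$ and $\cu\in\ASet$ and take $\cuh$ as in \eqref{eq:perturbed_equilibrium_control}. Since $\cuh=\ecu$ on $[t+h,T)$, the recursion \eqref{eq:CT_KM_with_tau_f_cu_recursion_rule} combined with (R1) gives
\[
J(t,x,\cuh) = \EVtx\Big[f\Big(t+h,\, X_{t+h}^{\cu},\, \inttth H(s,X_s^{\cu},\cu(s),t)\,ds,\; t\Big)\Big],
\]
while $J(t,x,\ecu)=V(t,x)$ by (R2). I would then split
\[
J(t,x,\ecu) - J(t,x,\cuh) = (I) + (II),
\]
with $(I) := V(t,x) - \EVtx[V(t+h,X_{t+h}^{\cu})]$ and $(II) := \EVtx\big[V(t+h,X_{t+h}^{\cu}) - f(t+h,X_{t+h}^{\cu}, Z_{t+h}^{\cu|t}, t)\big]$. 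Dynkin's formula gives $(I) = -\EVtx[\inttth \mathcal{D}^{\cu}V(s,X_s^{\cu})\,ds]$, so $(I)/h \to -\mathcal{D}^{u}V(t,x)$ with $u := \cu(t,x)$. In $(II)$, I replace $V(t+h,\cdot)$ by $f(t+h,\cdot,0,t+h)$ via the diagonal identity $V=W$ and Taylor-expand in the last two arguments: the fourth argument contributes $\partial_\tau f\cdot h$ with an $f_{\tau\tau}h^2$ remainder, and the third contributes $-\partial_z f\cdot Z_{t+h}^{\cu|t}$ with $Z_{t+h}^{\cu|t}/h \to H(t,x,u,t)$. Assembling,
\[
\liminf_{h\downarrow 0}\frac{J(t,x,\ecu)-J(t,x,\cuh)}{h} = -\Big(\mathcal{D}^{u}V(t,x) + \partial_z f(t,x,0,t)\,H(t,x,u,t) - \partial_\tau f(t,x,0,t)\Big),
\]
and the bracket is exactly the argument of the supremum in (S1), hence $\le 0$ by (C1)--(C2); this delivers the required nonnegativity and thus (R3).

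The main obstacle is the rigorous passage to the limit in (R3), that is, interchanging $\lim_{h\downarrow0}$ with the conditional expectation in each remainder term. This is what the integrability hypotheses are designed for: (C4) and the space $\mathcal{L}^2(X^{\cu})$ of Definition \ref{def:admissible_space_of_functions} control the Dynkin term in $(I)$ and justify discarding the stochastic integrals; (C5) supplies the uniform bound making $h^{-1}\EVtx[\partial_z f\cdot Z_{t+h}^{\cu|t}] \to \partial_z f(t,x,0,t)H(t,x,u,t)$; and (C6) dominates both the $\partial_\tau f$ contribution and the quadratic $f_{\tau\tau}h$ remainder, forcing the latter to vanish. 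Some care is also required with the intermediate (random) evaluation points $\eta,\iota\in[t,t+h]$ produced by the mean-value/Taylor remainders---exactly the suprema over random times appearing in (C5)--(C6)---and with the continuity of the integrands at $s=t$, where $(X_t^{\cu},Z_t^{\cu|t})=(x,0)$; these upgrade the pointwise convergence of integrands to convergence of the $h^{-1}$-scaled expectations.
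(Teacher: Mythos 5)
Your proposal is correct and takes essentially the same route as the paper: (R1) is the identical Feynman--Kac/It\^{o} argument (drift killed by (S2), martingale property from (C4), terminal condition (S4)), and (R3) uses the same decomposition of $J(t,x,\ecu)-J(t,x,\cuh)$ into the $V$-increment, the $z$-increment and the $\tau$-increment, each handled by mean-value/Taylor expansions with random intermediate points plus dominated convergence under (C4)--(C6), exactly as in the paper. The only difference is cosmetic and concerns (R2): where the paper combines two explicit It\^{o} expansions (of $V$ and of the diagonal $g(t,x)=f(t,x,0,t)$) to obtain the probabilistic representation of $V$ directly, you observe that $V$ and $W:=f(\cdot,\cdot,0,\cdot)$ satisfy the same linear PDE with identical terminal data --- via the same chain rule, (S1) at the equilibrium and (S2) on the diagonal --- and conclude $V=W$ by a Dynkin-type uniqueness argument; the ingredients are identical, just packaged as uniqueness, and the only bookkeeping caveat is that the identification of $W$ with $\widehat{V}$ (hence the full claim (R4)) formally becomes available only after (R3) establishes that $\ecu$ is an equilibrium control, which your ordering accommodates since (R3) is proved last.
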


\begin{remark}
	Let us comment on the assumptions of the theorem.  (C1) and (C2) are equivalent to standard first-order conditions for optimality. (C3) is a differentiability requirement on $V$ and $f$ to apply It\^{o}'s lemma. (C4)-(C6) are sufficient integrability conditions on $V, f$ and their derivatives under which the dominated convergence theorem can be applied. In addition, (C4) ensures that the relevant stochastic integrals are martingales with expectation zero within our setting.
\end{remark}

\begin{remark}
	\label{rem:KM_CT_comparison_to_cases_from_literature} In parallel with Remark \ref{rem:Bjoerk_results_on_KM_in_DT}, we relate the system \eqref{eq:CT_KM_with_tau_EHJB_PDE_V_t_x_z}-\eqref{eq:CT_KM_with_tau_EHJB_PDE_f_T_x_z_tau} to some special cases. Consider a discount function $\Delta : \mathcal{T}^{2} \to [0,1]$, with $\Delta(\iota,\iota) = 1$, and assume that $\Phi(y) = y$ and that $H$ and $G$ (again with abuse of notation) take the separable form 
	\begin{equation} \label{eq:CT_RewardFunctional_NonExponentialDiscounting}
		\begin{split}
			H(t, x, u, \tau) & = \Delta(t,\tau)H(x,u), \\
			G(x, \tau) & = \Delta(T,\tau) G(x).
		\end{split}
	\end{equation}
	In this case, the objective function reads as
	\begin{equation*}
		J(t, x, \cu) = \EVtx \sBrackets{\int \limits_{t}^{T}\Delta(s,t)H(X_s^{\cu}, \bm{u}(s,X_s^{\cu}))\,ds + \Delta(T,t)G \rBrackets{X_T^{\cu}} } ,
	\end{equation*}
	and the extended HJB system  is given by
	\begin{equation}
 \label{eq:ExtendedHJBNonExponentialDiscounting}
 \begin{split}
			\hspace{-1cm} 0 & = \sup_{u \in \mathcal{A}(t,x)}\left\{\mathcal{D}^{u}V(t,x) + H(x,  u) - \partial_{\tau} \bar{f}(t, x, t)  \right\}, \\
			\hspace{-1cm}  0 & =  \mathcal{D}^{\ecu}\bar{f}^{\, | \, \tau}(t, x),\\
			& V(T,x)  = G(x), \\ 
			\hspace{-0.3cm}  & \bar{f}(T, x, \tau) = \Delta(T,\tau)G(x), 
   \end{split}
	\end{equation}
	where $\bar{f}(t,x,\tau) := f(t,x,0,\tau)$.
 
	Clearly, when $\Delta\left(t, \tau\right) = e^{-\delta(t-\tau)}$, we have $\partial_{\tau} \bar{f}(t, x, t) = \delta  \bar{f}(t, x, t) = \delta V(t,x)$ (which follows from the definition of $\bar{f}$ and \eqref{eq:KM_CT_with_tau_f_probabilistic_representation}), and we retrieve the classic HJB equation
	\begin{equation*}
		\begin{split}
			\hspace{-1cm} 0 & = \sup_{u \in \mathcal{A}(t,x)}\left\{\mathcal{D}^{u}V(t,x) + H(x, u) -\delta V(t,x) \right\},\\
			& V(T,x)  = G(x).
		\end{split}
	\end{equation*}
\end{remark}

We have already commented that, under KM utility, the auxiliary function $f$ takes an additional argument $z$. Comparing \eqref{eq:CT_KM_with_tau_EHJB_PDE_V_t_x_z} to the first equation in \eqref{eq:ExtendedHJBNonExponentialDiscounting}, the novelty lies in the term $\partial_{z}f(t,x,0,t)H(t,x,u,t)$. In essence, the additional source of time-inconsistency induced by the nonlinearity of $\Phi$ is encoded in the $z$-derivative of $f$ at $z=0$, which becomes there an adjustment factor of the instantaneous utility $H$.\footnote{When $\Phi$ is the identity function, i.e. $\Phi(y) = y$ as in \eqref{eq:CT_RewardFunctional_NonExponentialDiscounting}, we have $\partial_{z}f(t,x,0,t) = 1$.}

In some instances, it is possible to derive a characterization of the extended HJB system in which $f$ assumes a simpler form. This is the subject of the next section (specifically, see Corollary \ref{cor:CT_KM_with_cur_t_dependence_EHJB_system_infinite}), where we focus on CRRA-CES preferences.\footnote{We illustrate a similar procedure in Appendix \ref{app:NoDependenceOnCurrentTime}, where we suppose that the agent's utility functions $H$ and $G$ do not depend on the current time $\tau = t$. }

\section[Application: Consumption-investment with CRRA-CES preferences and exponential discounting]{Application: Consumption-investment with \\ CRRA-CES preferences and exponential discounting} \label{sec:CRRA-CES_utility_exponential_discounting}
We consider a consumption-investment problem for an agent with exponentially discounted KM preferences with constant relative risk aversion and constant elasticity of intertemporal substitution. Notably, we show that in this case the equilibrium value function is separable in wealth and time and can be characterized as the solution of a \mbox{system of ODEs.} 

The financial market consists of a risk-free money market account $B$ and a (non-dividend paying) stock $S$. At time $t \in \TSet$, price dynamics are given by
\begin{equation*}
	\begin{split}
		& d B_t = B_t r dt, \quad B_0 = 1, \\
		& dS_t =  S_t\left( r+\lambda  \right) dt + S_t \sigma d W_t,  \quad S_0 = s_0 \in \mathbb{R}^{+},
	\end{split}
\end{equation*}
where $r, \lambda, \sigma$ are positive constants -- canonically interpreted as the risk-free rate, the risk premium, and the standard deviation of the stock return, respectively -- and $W$ is a one-dimensional Wiener process. The agent's decisions at time $t$ regarding consumption and investment are described in feedback form by the vector $\cu(t,x) := \left(\pi(t,x), c(t,x)\right)$, with $\pi(t,x)$ and $c(t,x)$ denoting the fraction of wealth invested in the risky asset and the consumption rate, respectively. For simplicity, we do not impose constraints on the optimal policies. The controlled wealth process $\left(X_{t}^{\pi,c}\right)_{t\in \TSet}$ then solves
\begin{equation*}
	\begin{split}
		dX^{\pi,c}_{t} & =  \big( X^{\pi,c}_{t}\left(r + \lambda\pi(t,X^{\pi,c}_{t})\right) -c(t,X^{\pi,c}_{t}) \big)dt + X^{\pi,c}_{t}\pi(t,X^{\pi,c}_{t})\sigma dW_{t},\\
		X^{\pi,c}_{0} & = x_{0} \in \mathbb{R}^{+}.
	\end{split}
\end{equation*}

In the notation of previous sections, the agent's preferences are described by
\begin{equation*}
	\begin{split}
		\Phi(x) & = \dfrac{1}{1-\alpha}x^{1-\alpha}, \\
		H(s,x,(\pi,c),t) & = e^{-\delta(s-t)}c^{\rho}, \\
		G(x,t) & = e^{-\delta(T-t)}x^{\rho}, \\
	\end{split}
\end{equation*}
for constant parameters $\alpha \geq 0, \, \delta \in [0,1], \, \rho < 1$. As mentioned, $\alpha$ identifies the relative risk aversion, $\delta$ is a discount rate, and $\rho$ specifies (but is not equal to) the elasticity of substitution. To avoid unnecessary complications, we suppose $\alpha \neq 1$ and postpone the case of $\alpha = 1$ to Appendix \ref{subec:Example_CRRA_unitaryRRA}.

The reward functional is thus
\begin{equation}\label{eq:CRRA-CES_RewardFunctional_KM_CT}
	J(t, x, (\pi, c)) = \mathbb{E}_{t,x}\sBrackets{\frac{1}{1 - \alpha}\rBrackets{\int \limits_{t}^{T} e^{-\delta \rBrackets{s - t}} \left(c(s)\right)^{\rho}\,ds + e^{-\delta \rBrackets{T - t}}\rBrackets{X^{\pi, c}_T}^{\rho}}^{\frac{1 - \alpha}{\rho}}},
\end{equation}
with the equilibrium value function defined as $\widehat{V}(t,x) := J(t, x, (\hat{\pi}, \hat{c}))$.

The extended HJB system for the decision maker with reward functional \eqref{eq:CRRA-CES_RewardFunctional_KM_CT} is given by
\begin{equation} \label{eq:EHJBsystem_CRRA}
	\begin{split}
		\displaystyle
		0 & = \sup_{(\pi, c) \in \mathcal{A}(t,x)}\Big\{\partial_{t} V(t,x) + \partial_x V(t, x) (x(r + \pi \lambda) - c) + \dfrac{1}{2}\partial_{xx} V(t, x) \sigma^2 \pi^2 x^2  \Big. \\
		& \hspace{2.5cm} \Big. + \partial_{z} f(t, x, 0, t) c^\rho- \partial_{\tau} f(t, x, 0, t) \Big\} , \\
		0 & = \mathcal{D}^{\hat{\pi},\hat{c}  }f^{\vert \tau}(t,x,z), \\
		& \quad V(T,x)  = \dfrac{1}{1 - \alpha} x^{1 - \alpha},\quad \\
		& \quad f(T,x,z,\tau)  = \dfrac{1}{1 - \alpha}\rBrackets{e^{-\delta \rBrackets{T - \tau}}  x^{\rho} + z}^{\frac{1 - \alpha}{\rho}},
	\end{split}
\end{equation}
and the probabilistic representation of $V$ and $f$ is given by
\begin{align*}
	V(t,x) &= \EVtx\sBrackets{\frac{1}{1 - \alpha}\rBrackets{\int \limits_{t}^{T}e^{-\delta \rBrackets{s - t}}\left(\hat{c}(s)\right)^{\rho}\,ds + e^{-\delta \rBrackets{T - t}}\rBrackets{X^{\hat{\pi}, \hat{c}}_T}^{\rho}}^{\frac{1 - \alpha}{\rho}}},\\
	f(t,x,z,\tau) &= \EVtx\sBrackets{\frac{1}{1 - \alpha}\rBrackets{\int \limits_{t}^{T}e^{-\delta \rBrackets{s - \tau}}\left(\hat{c}(s)\right)^{\rho}\,ds + e^{-\delta \rBrackets{T - \tau}}\rBrackets{X^{\hat{\pi}, \hat{c}}_T}^{\rho} + z}^{\frac{1 - \alpha}{\rho}} }.
\end{align*}

The above system appears challenging to work with, mainly due to the nature of the function $f$. Fortunately, we can obtain an alternative, more tractable form.

\begin{corollary}\label{cor:CT_KM_with_cur_t_dependence_EHJB_system_infinite}
	The extended HJB system \eqref{eq:EHJBsystem_CRRA} can be written in the form
	\begin{equation} \label{eq:KM_CT_PDE_for_V}
		\begin{split}
			\displaystyle
			0 & = \sup_{(\pi,c) \in \mathcal{A}(t,x)}\Big\{\partial_{t} V(t,x) + \partial_x V(t, x) (x(r + \pi \lambda) - c) + \dfrac{1}{2}\partial_{xx} V(t, x) \sigma^2 \pi^2 x^2  \Big. \\ 
			& \hspace{2.5cm}\Big. + \dfrac{1}{\rho}\widetilde{V}^{(1)}(t,x) c^\rho- \delta \dfrac{1 - \alpha}{\rho} V(t,x) \Big\}, \\
			0 & = \widetilde{V}_{t}^{(k)}(t,x) + \partial_x\widetilde{V}^{(k)}(t, x) (x(r + \hat{\pi} \lambda) - \hat{c}) + \dfrac{1}{2}\partial_{xx}\widetilde{V}^{(k)}(t, x) \sigma^2 \hat{\pi}^2 x^2 \\
			& \qquad + \rBrackets{\dfrac{1 - \alpha}{\rho} - k}\widetilde{V}^{(k+1)}(t,x) \hat{c}^{\, \rho}  - \delta\rBrackets{\dfrac{1 - \alpha}{\rho} - k}\widetilde{V}^{(k)}(t,x), \\
			& \hspace{0.5cm} V(T,x)  = \dfrac{1}{1 - \alpha} x^{1 - \alpha} ,\\
			& \hspace{0.5cm} \widetilde{V}^{(k)}(T,x)  = x^{1 - \alpha - k\rho}.
		\end{split}
	\end{equation}
	In addition, the probabilistic representation of $\widetilde{V}^{(k)}$ is as follows:
	\begin{align*}
		\widetilde{V}^{(k)}(t,x) &= \EVtx\sBrackets{\rBrackets{\int \limits_{t}^{T}e^{-\delta \rBrackets{s - t}}\left(\hat{c}(s)\right)^{\rho}\,ds + e^{-\delta \rBrackets{T - t}}\rBrackets{X^{\hat{\pi}, \hat{c}}_T}^{\rho}}^{\frac{1 - \alpha}{\rho} - k}}.
	\end{align*}
\end{corollary}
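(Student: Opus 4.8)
The plan is to work directly from the probabilistic representations of $V$ and $f$ and to exploit the self-similar structure of the random variable
\[
Y_t := \int_t^T e^{-\delta(s-t)}\hat c(s)^\rho\,ds + e^{-\delta(T-t)}\rBrackets{X_T^{\hat\pi,\hat c}}^\rho ,
\]
which satisfies, for every $s \in [t,T]$, the pathwise identity $Y_t = \int_t^s e^{-\delta(u-t)}\hat c(u)^\rho\,du + e^{-\delta(s-t)}Y_s$. Writing $m_k := \tfrac{1-\alpha}{\rho} - k$, the target functions are $\widetilde V^{(k)}(t,x) = \EVtx[Y_t^{m_k}]$, while the splitting $e^{-\delta(s-\tau)} = e^{-\delta(t-\tau)}e^{-\delta(s-t)}$ lets me factor the given representation of $f$ as $f(t,x,z,\tau) = \tfrac{1}{1-\alpha}\EVtx[(e^{-\delta(t-\tau)}Y_t + z)^{(1-\alpha)/\rho}]$. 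Everything then reduces to two independent tasks: (i) rewriting the two $f$-dependent terms of the $V$-equation through the $\widetilde V^{(k)}$, and (ii) producing a Feynman--Kac PDE for each $\widetilde V^{(k)}$.

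For (i), I would differentiate the factored form of $f$ under the expectation. At $\tau=t$ and $z=0$ the prefactor $e^{-\delta(t-\tau)}$ equals one, so $\partial_z f(t,x,0,t) = \tfrac1\rho \EVtx[Y_t^{(1-\alpha)/\rho-1}] = \tfrac1\rho \widetilde V^{(1)}(t,x)$, and since $\partial_\tau e^{-\delta(t-\tau)} = \delta e^{-\delta(t-\tau)}$, the chain rule gives $\partial_\tau f(t,x,0,t) = \tfrac{\delta}{\rho}\EVtx[Y_t^{(1-\alpha)/\rho}] = \delta\tfrac{1-\alpha}{\rho}V(t,x)$, the last step using $\EVtx[Y_t^{(1-\alpha)/\rho}] = (1-\alpha)V(t,x)$. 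Substituting these two identities into the first line of \eqref{eq:EHJBsystem_CRRA}, and expanding $\mathcal D^{(\pi,c)}V = \partial_t V + (x(r+\pi\lambda)-c)\partial_x V + \tfrac12\sigma^2\pi^2x^2\partial_{xx}V$ (the $\partial_z$ and $\partial_\tau$ terms vanishing because $V$ is independent of $z,\tau$), reproduces verbatim the first equation of \eqref{eq:KM_CT_PDE_for_V}.

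For (ii), I would run a martingale argument in the spirit of Proposition \ref{prop:CT_KM_without_tau_EHJB_system_infinite}. Fix the outer time $t$ and set $Q_s := \int_t^s e^{\delta(s-u)}\hat c(u)^\rho\,du$, which solves $dQ_s = (\delta Q_s + \hat c(s)^\rho)\,ds$ with $Q_t=0$. Rescaling the self-similarity identity gives $Y_t = e^{-\delta(s-t)}(Q_s + Y_s)$, hence $Y_t^{m_k} = e^{-\delta m_k (s-t)}(Q_s+Y_s)^{m_k}$. Because $Y_t$ is $\mathcal F_T$-measurable and fixed, the process $\EV[Y_t^{m_k}\mid\mathcal F_s] = e^{-\delta m_k(s-t)}\Xi^{(k)}(s,X_s,Q_s)$ is a martingale, where the Markov property identifies $\Xi^{(k)}(s,x,q) := \EV_{s,x}[(q+Y_s)^{m_k}]$ and, in particular, $\Xi^{(k)}(s,x,0) = \widetilde V^{(k)}(s,x)$. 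Applying It\^{o} to this product, setting the drift to zero, dividing by $e^{-\delta m_k(s-t)}$, and evaluating at $q=0$, the $Q$-drift contributes $\hat c(s)^\rho\,\partial_q\Xi^{(k)}(s,x,0)$; since $\partial_q\Xi^{(k)}(s,x,0) = m_k\EV_{s,x}[Y_s^{m_k-1}] = m_k\widetilde V^{(k+1)}(s,x)$ (as $m_k-1 = m_{k+1}$), I would obtain $\partial_s\widetilde V^{(k)} + \partial_x\widetilde V^{(k)}(x(r+\hat\pi\lambda)-\hat c) + \tfrac12\sigma^2\hat\pi^2x^2\partial_{xx}\widetilde V^{(k)} + m_k\hat c^\rho\widetilde V^{(k+1)} - \delta m_k\widetilde V^{(k)} = 0$, which is the second equation of \eqref{eq:KM_CT_PDE_for_V} after renaming $s\mapsto t$. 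The terminal condition follows from $Y_T = x^\rho$, giving $\widetilde V^{(k)}(T,x) = x^{\rho m_k} = x^{1-\alpha-k\rho}$.

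The main obstacle is the bookkeeping that produces the coupling between consecutive levels: the nonlinear power $m_k$ does not distribute over the additive decomposition $Y_t = e^{-\delta(s-t)}(Q_s+Y_s)$, so the only way the running utility $\hat c^\rho$ re-enters the PDE is through the $q$-derivative of $\Xi^{(k)}$ at the origin, which simultaneously brings down the factor $m_k$ and lowers the exponent by one --- exactly the mechanism that forces an infinite hierarchy rather than a single equation. Securing this term, together with the discount term $-\delta m_k\widetilde V^{(k)}$ arising from the $s$-derivative of $e^{-\delta m_k(s-t)}$, with the correct signs and coefficients is the crux. The remaining effort is routine: justifying the interchange of differentiation and expectation and the vanishing of the It\^{o} drift rests on the same regularity and integrability hypotheses invoked in Theorem \ref{th:verification_theorem_with_tau} (in particular $\mathcal L^2(X^{\cu})$ membership and the dominated-convergence bounds), which I would verify for the CRRA-CES data exactly as in the proof of Proposition \ref{prop:CT_KM_without_tau_EHJB_system_infinite}.
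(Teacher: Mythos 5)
Your proposal is correct. Your treatment of the first equation coincides with the paper's own proof (Appendix \ref{subsec:ProofCorollaryReformulationHJBSystemCRRA}): both obtain $\partial_{z} f(t,x,0,t)=\tfrac{1}{\rho}\widetilde{V}^{(1)}(t,x)$ by differentiating under the expectation, and $\partial_{\tau} f(t,x,0,t)=\delta\tfrac{1-\alpha}{\rho}V(t,x)$ by factoring the exponential discount out of the power. Your derivation of the hierarchy, however, takes a genuinely different route. The paper introduces, at each level $k$, an auxiliary function $f^{(k)}(t,x,z,\tau)$ carrying both extra variables, proves $\mathcal{D}^{\hat{\pi},\hat{c}}f^{(k)\vert\tau}(t,x,z)=0$ by repeating the recursion-plus-It\^{o} argument of Lemma \ref{lem:CT_KM_with_tau_recursion_for_f}, and then evaluates on the diagonal $z=0$, $\tau=t$, where the coupling term $\rBrackets{\tfrac{1-\alpha}{\rho}-k}\widetilde{V}^{(k+1)}\hat{c}^{\,\rho}$ is produced by $\partial_{z}f^{(k)}$ and the discount term by $\partial_{\tau}f^{(k)}$, each computed separately from the probabilistic representation; this forces the paper to track the diagonal chain rule $\partial_t\bigl[f^{(k)}(t,x,0,t)\bigr]=\partial_t f^{(k)}+\partial_\tau f^{(k)}$. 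You instead exploit homogeneity at the outset to strip the discount out as the explicit prefactor $e^{-\delta m_k(s-t)}$, trade the paper's accumulation variable for the forward-compounded $Q_s$ (the two are deterministic rescalings of one another, $Z_s=e^{-\delta(s-t)}Q_s$), and read the PDE off the vanishing drift of the martingale $\EV\bigl[Y_t^{m_k}\mid \mathcal{F}_s\bigr]=e^{-\delta m_k(s-t)}\Xi^{(k)}(s,X_s,Q_s)$. The two mechanisms are equivalent---up to constants your $\Xi^{(k)}(s,x,q)$ is the paper's $f^{(k)}(s,x,q,s)$---but yours buys cleaner bookkeeping: the martingale property replaces the level-by-level recursion, the exponential prefactor absorbs the $\tau$-derivative, and the diagonal chain rule disappears. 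What the paper's packaging buys is generality: it is the direct specialization of the $(z,\tau)$-machinery of Section \ref{sec:general_theory} and Proposition \ref{prop:CT_KM_without_tau_EHJB_system_infinite} and does not hinge on the power/exponential structure that makes your factorization possible; for the CRRA-CES corollary this costs nothing. Two points to make explicit in a full write-up: passing from the vanishing It\^{o} drift to the pointwise PDE at $(t,x,0)$ requires the same divide-by-$h$, mean-value, dominated-convergence limit as in Lemma \ref{lem:CT_KM_with_tau_recursion_for_f} (it suffices that $q=0$ is attained at $s=t$ for every starting point $(t,x)$, which is all you evaluate); and since the hierarchy is infinite, the integrability hypotheses you defer to should be stated uniformly in $k$, in the spirit of condition (C9) of Proposition \ref{prop:CT_KM_without_tau_EHJB_system_infinite}.
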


From the first-order conditions for the supremum in \eqref{eq:KM_CT_PDE_for_V}, we obtain the candidate equilibrium controls:
\begin{align}
	\hat{\pi}(t,x) &= -\frac{\partial_x V(t, x) \lambda}{\partial_{xx} V(t, x) x \sigma^2}, \label{eq:KM_CT_candidate_for_equilibrium_pi}\\
	\hat{c}(t,x) &= \rBrackets{\frac{\partial_x V(t, x) }{\widetilde{V}^{(1)}(t,x)}}^{\frac{1}{\rho - 1}}. \label{eq:KM_CT_candidate_for_equilibrium_c}
\end{align}
At this point, we conjecture that the variables $t$ and $x$ can be separated via the ansatz
\begin{align*}
	V(t,x) & = \frac{1}{1 - \alpha} A(t) x^{1 - \alpha},\\
	\widetilde{V}^{(k)}(t,x) & =  A^{(k)}(t) x^{1 - \alpha - k\rho}, \quad k \in \mathbb{N},
\end{align*}
where $A$ and $A^{(k)}$ are functions to be determined. Using the ansatz, we rewrite the (candidate) equilibrium strategies as
\begin{align}
	\hat{\pi}(t,x) &= \dfrac{\lambda}{\alpha \sigma^{2} }, \label{eq:EquilibriumInvestmentKM}\\
	\hat{c}(t,x) &= x \left(\dfrac{A(t)}{A^{(1)}(t)}\right)^{\frac{1}{\rho-1}}. \label{eq:EquilibriumConsumptionKM}
\end{align}
Replacing \eqref{eq:EquilibriumInvestmentKM}-\eqref{eq:EquilibriumConsumptionKM} in the system of PDEs and performing straightforward calculations, we derive a system of ODEs:
\begin{equation}\label{eq:ODEsystem_CRRA-CES}
\begin{split}
0 = & \; \partial_{t}A(t) + (1-\alpha) A(t) \left(r + \dfrac{\lambda^{2}}{2\alpha\sigma^{2} } - \dfrac{\delta}{\rho}\right) - (1-\alpha)\rBrackets{1-\dfrac{1}{\rho}} (A(t))^{\frac{\rho}{\rho-1}} (A^{(1)}(t))^{-\frac{1}{\rho-1}}, \\
0 = & \; \partial_{t}A^{(k)}(t)  +(1-\alpha-k\rho) A^{(k)}(t) \left(r + \dfrac{\lambda^{2}}{2\alpha\sigma^{2}}  -  \dfrac{\delta}{\rho} - k\rho\dfrac{\lambda^{2}}{2\alpha^{2}\sigma^{2}} \right) \\
  & - (1-\alpha-k\rho)A^{(k)}(t)\left(\dfrac{A(t)}{A^{(1)}(t)}\right)^{\frac{1}{\rho-1}} \\
		& \quad + \left(\dfrac{1-\alpha}{\rho} - k \right)A^{(k+1)}(t) \left(\dfrac{A(t)}{A^{(1)}(t)}\right)^{\frac{\rho}{\rho-1}}, \quad k \in \mathbb{N}, \\
		& A(T) = 1, \\
		& A^{(k)}(T) = 1, \quad k \in \mathbb{N}.
	\end{split}
\end{equation}

\begin{remark}
	Generally, system \eqref{eq:ODEsystem_CRRA-CES} requires the solution of infinitely many equations. However, if $\frac{1-\alpha}{\rho}$ is a positive integer, let us call it $\bar{k}$, when $k = \bar{k}$ several terms cancel out and we end up with $A^{(k)}(t)\vert_{k = \bar{k}} = 1$ for any $t$. De facto, this reduces the number of equations to $\bar{k}+1$, as we would not need to consider indices $k > \bar{k}$. For instance, when $\frac{1-\alpha}{\rho} = 1$, we obtain that $A^{(1)}(t) = 1$ and $A(t)$ solves the same ODE for the time-additive CRRA utility studied in \cite{Merton1969:RES} (thus yielding, in turn, the same consumption policy).
	
\end{remark}

Before examining the consumption policy over the lifecycle, let us briefly comment on the investment strategy. It turns out that the equilibrium investment in \eqref{eq:EquilibriumInvestmentKM} coincides with the optimal investment obtained for CRRA-CES continuous-time recursive utility:
\begin{equation} \label{eq:CRRA-CES_EZpreferences_CT}
	\begin{split}
		J(t) &= \mathbb{E}_{t,x} \left[ \int_{t}^{T} m\big(c(s),J(s)\big) ds + \dfrac{1}{1-\alpha}\left(X_{T}^{\pi,c}\right)^{1-\alpha}\right],  \\
		\mbox{with} \quad  m(c,J) &:= \dfrac{1-\alpha}{\rho}\delta J \left( c^{\rho}\left( \frac{1}{(1-\alpha) J} \right)^{\frac{\rho}{1-\alpha}} - 1\right),
	\end{split}
\end{equation}
where we use the short notation $J(t) = J\left(t,x,\big(\pi, c\big)\right)$; see, for instance, \cite{KraftSeifriedSteffensen2013:FS}.  In addition, this solution is equivalent in the CRRA case (\cite{Merton1969:RES}), with the interpretative caveat that in that case, there is no wedge between the parameters underlying attitudes concerning time and risk - that is, $1-\alpha = \rho$.

A more subtle point pertains to the discussion on page 641 of \cite{Kihlstrom2009:JME}. Therein, Kihlstrom compares the solution of a two-period consumption-investment problem under the CRRA-CES specification of KM preferences with the solution under recursive utility. In that regard, he seems to state that the investment strategy under KM preferences depends on both the risk aversion and the elasticity of intertemporal substitution -- hence in conflict with the optimum under recursive utility and our equilibrium strategy \eqref{eq:EquilibriumInvestmentKM}, which depends solely on the risk aversion. However, as Kihlstrom's characterization of the maximization problem does not produce an explicit solution, how this joint dependence unfolds is \mbox{not transparent}.

\paragraph*{Numerical illustration.} To conclude this example, we display the average consumption, annuity demand,\footnote{The annuity demand can be interpreted as the inverse of the time-dependent percentage of wealth consumed by the agent.} and average wealth over time for different levels of risk aversion $\alpha$; see Figure \ref{fig:EquilibriumStrategiesPlot} (left column). Other parameters are listed in the caption. 

We compare these quantities with those obtained from a CRRA-CES recursive utility as in \eqref{eq:CRRA-CES_EZpreferences_CT}; see Figure \ref{fig:EquilibriumStrategiesPlot} (right column). Note that, in this case, the optimal consumption policy is available in closed form:
\begin{equation*}
\begin{split}
	&\hspace{3cm}  c^{*}_{\scriptscriptstyle EZ}(t,x) = \dfrac{x}{a(t)},\,\,\,\\
		 \text{where} \quad  & a(t) =  \dfrac{1}{\nu} + \left(1-\dfrac{1}{\nu}\right)e^{\nu(t-T)}, \quad \nu  = \dfrac{\delta}{1-\rho} + \left(1-\dfrac{1}{1-\rho}\right)\left(r + \dfrac{\lambda^{2}}{2\alpha\sigma^{2}}\right). 
   \end{split}
\end{equation*}
We can obtain Merton's solution by setting $1-\alpha = \rho$.\footnote{In Figure \ref{fig:EquilibriumStrategiesPlot}, with $\rho$ set equal to $-1$, we retrieve Merton's solution when $\alpha = 2$. In that case, the curves for KM and EZ (left and right column, respectively) are identical.}

\afterpage{
\begin{figure}[t!]
\begin{minipage}[t]{0.5\textwidth}
	\centering
 \centering
 \hspace{-7cm}
 \subfloat[Average consumption KM]{\includegraphics[width=1\textwidth]{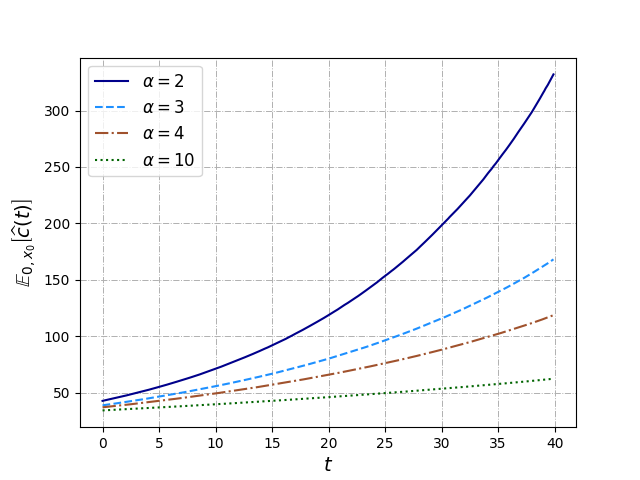}} 
			\subfloat[Average consumption EZ]{\includegraphics[width=1\textwidth]{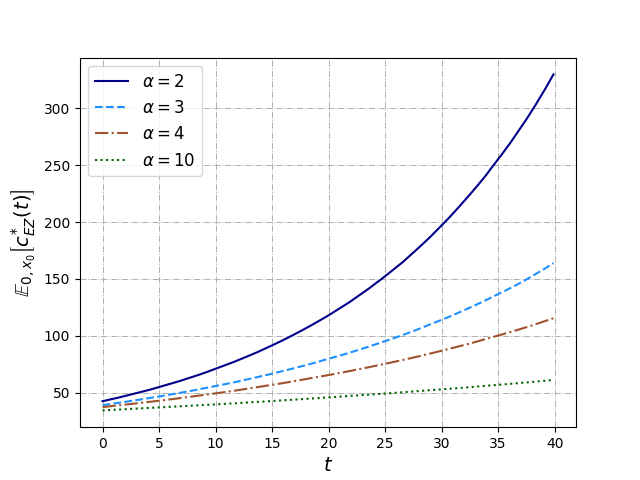}}
			\par
			\subfloat[Annuity KM]{\includegraphics[width=1\textwidth]{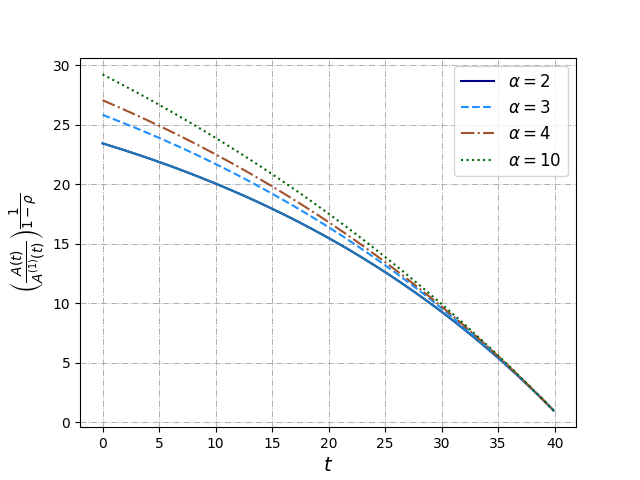}} 
			\subfloat[Annuity EZ]{\includegraphics[width=1\textwidth]{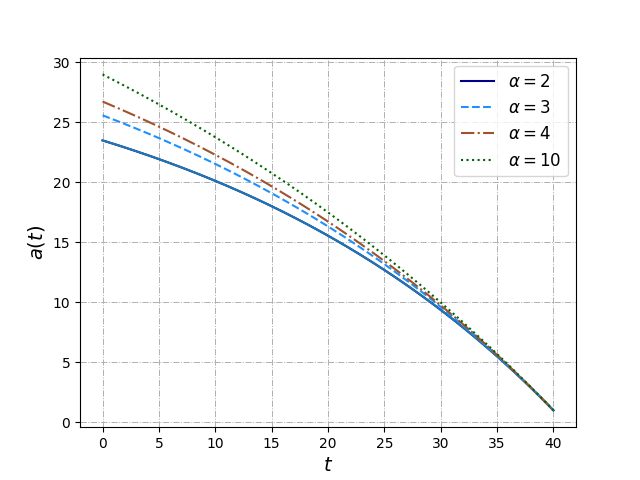}}
			\par
			\subfloat[Average wealth KM]{\includegraphics[width=1\textwidth]{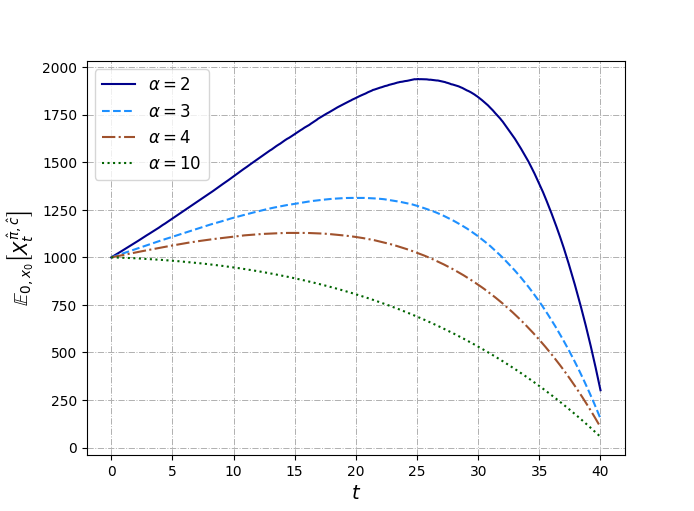}}
			\subfloat[Average wealth EZ]{\includegraphics[width=1\textwidth]{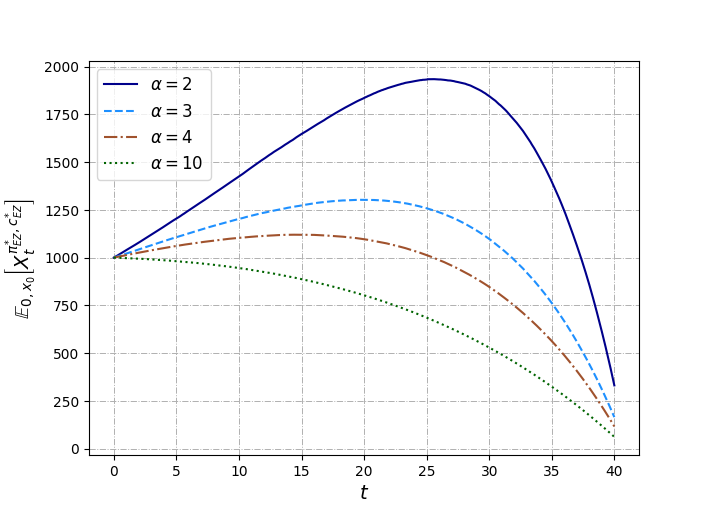}}
		\end{minipage}
		\caption{\footnotesize Average consumption, annuity demand, and average wealth over time for CRRA-CES Kihlstrom\textendash Mirman preferences \eqref{eq:CRRA-CES_RewardFunctional_KM_CT} (left column) and CRRA-CES Epstein\textendash Zin preferences \eqref{eq:CRRA-CES_EZpreferences_CT} (right column), with different levels of risk aversion $\alpha$. Market parameters: $(r,\lambda, \sigma) = (0.02, 0.07, 0.2)$. Discounting: $\delta = 0.01$. EIS: $\rho = -1$. Investment horizon: $T = 40$. Initial wealth: $x_0 = \$ 1000 $. \\
		}
		\label{fig:EquilibriumStrategiesPlot}
	\end{figure}
	\clearpage
}

Net of marginal quantitative differences, which can be seen from Table \ref{table:KMvsEZ_numericalresults}, the two models yield qualitatively similar results: average consumption increases steadily over the lifetime (more rapidly as the agent is less risk averse), both in absolute terms and in the percentage of wealth. On the other hand, wealth accumulates in the first phase. It decreases then towards the end of the time horizon - unless the agent is too risk averse, in which case the investment in the risky asset is not substantial enough to sustain an increase in wealth.

\afterpage{
	\thispagestyle{empty}
	\begin{landscape}
		\begin{table}[t!]
			\parbox{.45\linewidth}{
				\centering
				\begin{tabular}{c  c c c c c c }
					\multicolumn{7}{c}{Kihlstrom\textendash Mirman} \\
					\hline
					& & & \multicolumn{4}{c}{$t$} \\
					\cline{4-7}
					& &  \qquad & 5 & 15 & 25 & 35 \\
					\hline 
					& & $\alpha$ \qquad  & & & &\\
					\cline{3-3}
					\multirow{4}{2.cm}{$ \; \mathbb{E}_{0,x_0}\left[\hat{c}(t)\right]$} &  & 2 \qquad  & 54.95 & 91.9 & 153.39 & 256.77 \\
					& & 3 \qquad  & 46.34 & 66.89 & 96.47 & 139.76 \\
					& & 4 \qquad & 42.62 & 57.01 & 76.25 & 102.39 \\
					& & 10 \qquad & 36.81 & 42.74 & 49.63 & 57.74\\
					\hline 
					\multirow{4}{2cm}{$\left(\frac{A(t)}{A^{(1)}(t)}\right)^{\frac{1}{1-\rho}}$} &  & 2 \qquad  & 21.88 & 17.95 & 12.63 & 5.42 \\
					& & 3 \qquad  & 23.9 & 19.22 & 13.20 & 5.51 \\
					& & 4 \qquad & 24.93 & 19.83 & 13.47 & 5.55 \\
					& & 10 \qquad & 26.69 & 20.86 & 13.91 & 5.62 \\
					\hline 
					\multirow{4}{2cm}{ $\quad \mathbb{E}_{0,x_0}\left[X_t^{\hat{\pi},\hat{c}}\right]$} &  & 2 \qquad  & 1202.33 & 1649.7 & 1936.63 & 1390.53 \\
					& & 3 \qquad  & 1107.96 & 1285.54 & 1273.87 & 770.31 \\
					& & 4 \qquad & 1062.32 & 1130.76 & 1027.28 & 568.63 \\
					& & 10 \qquad & 982.69 & 891.65 & 690.09 & 324.44 
				\end{tabular}
			}
			\hspace{1cm}
			\parbox{.45\linewidth}{
				\centering
				\begin{tabular}{c  c c c c c c }
					\multicolumn{7}{c}{Epstein\textendash Zin} \\
					\hline
					& & & \multicolumn{4}{c}{$t$} \\
					\cline{4-7}
					& &  \qquad & 5 & 15 & 25 & 35 \\
					\hline 
					& & $\alpha$ \qquad  & & & &\\
					\cline{3-3}
					\multirow{4}{2cm}{$ \; \mathbb{E}_{0,x_0}\left[c_{EZ}^{*}(t)\right]$} &  & 2 \qquad  & 54.87 & 91.78 & 153.19 & 256.43 \\
					& & 3 \qquad  & 46.71 & 66.96 & 95.79 & 137.43 \\
					& & 4 \qquad & 43.04 & 57.11 & 75.69 & 100.45 \\
					& & 10 \qquad & 37.06 & 42.77 & 49.33 & 56.98\\
					\hline 
					\multirow{4}{2cm}{$\hspace{1cm}a(t)$} &  & 2 \qquad  & 21.92 & 17.99 & 12.69 & 5.49 \\
					& & 3 \qquad  & 23.66 & 19.08 & 13.17 & 5.58 \\
					& & 4 \qquad & 24.62 & 19.65 & 13.43 & 5.62 \\
					& & 10 \qquad & 26.47 & 20.75 & 13.9 & 5.69 \\
					\hline 
					\multirow{4}{2cm}{ $\quad \mathbb{E}_{0,x_0}\left[X_{t}^{\pi^{*}_{EZ},c^{*}_{EZ}}\right]$} &  & 2 \qquad  & 1202.69 & 1651.81 & 1943.64 & 1410.09 \\
					& & 3 \qquad  & 1105.6 & 1277.55 & 1262.1 & 766.77 \\
					& & 4 \qquad & 1059.65 & 1122.45 & 1015.88 & 564.53 \\
					& & 10 \qquad & 981.04 & 887.67 & 685.88 & 324.49
				\end{tabular}
			}
			\caption{Selected values from Figure \ref{fig:EquilibriumStrategiesPlot} on the average consumption, annuity demand, and average wealth over time for CRRA-CES Kihlstrom\textendash Mirman preferences \eqref{eq:CRRA-CES_RewardFunctional_KM_CT} (left panel) and CRRA-CES Epstein\textendash Zin preferences \eqref{eq:CRRA-CES_EZpreferences_CT} (right panel). \\ 
				Averages are computed via Monte Carlo simulation of 100000 paths. To avoid differences due to random number generators, all experiments have been initialized with the same seed.}
			\label{table:KMvsEZ_numericalresults}
		\end{table}
		\raisebox{0cm}{\makebox[\linewidth]{\thepage}}
		
	\end{landscape}
	\clearpage
}

That being said, the stark similarity between the results for KM preferences and recursive utility suggests that, at least contextually to the common assumption of constant elasticity of intertemporal substitution and constant relative risk aversion, the two models align on a fundamental level. 

We can strengthen this point further by considering a deterministic setting. Let $\bar{c} := (\bar{c}_t)_{t \in \mathcal{T}}$ be a deterministic consumption stream and evaluate
\begin{equation*}
	U_{t}^{\bar{c}} := \dfrac{1}{1-\alpha}\left( \int_{t}^{T} e^{-\delta(s-t)}(\bar{c}_s)^{\rho} ds + e^{-\delta (T-t)} (X^{\bar{c}}_{T})^{\rho} \right)^{\frac{1-\alpha}{\rho}}
\end{equation*}
as the total KM utility over $\bar{c}$. Differentiating with respect to $t$, we find that $U^{\bar{c}}$ follows the dynamics
\begin{equation*}
	\begin{split}
		& dU_{t}^{\bar{c}} = \left(-\dfrac{1}{\rho}(\bar{c}_t)^{\rho}\big((1-\alpha) U_{t}^{\bar{c}}\,\big)^{1-\frac{\rho}{1-\alpha}}  + \delta \dfrac{1-\alpha}{\rho}U_{t}^{\bar{c}}\right) dt,  \\
		& U_{T}^{\bar{c}}= \dfrac{1}{1-\alpha} \left(X^{\bar{c}}_T\right)^{1-\alpha}.
	\end{split}
\end{equation*}
Therefore, we can write
\begin{equation} \label{eq:deterministicKM}
	\begin{split}
		U_{t}^{\bar{c}} & = \int_{t}^{T}\left(\,\dfrac{1}{\rho} (\bar{c}(s))^{\rho}\big((1-\alpha) U_{s}^{\bar{c}}\,\big)^{1-\frac{\rho}{1-\alpha}}  -\delta\dfrac{1-\alpha}{\rho}U_{s}^{\bar{c}}  \,\right) ds  + \dfrac{1}{1-\alpha}(X^{\bar{c}}_T)^{1-\alpha}.
	\end{split}
\end{equation}
Thus, in the absence of risk, $U^{\bar{c}}$ has a similar recursive representation of the EZ utility \eqref{eq:CRRA-CES_EZpreferences_CT}.\footnote{We thank Ninna Reitzel Heegaard (née Jensen) for directing our attention to \eqref{eq:deterministicKM}. Based on this relation, she obtained in unpublished calculations a result analogous to Corollary \ref{cor:CT_KM_with_cur_t_dependence_EHJB_system_infinite} during her PhD studies.} In uncertain environments, the difference between KM and EZ preferences is due to how the expectation of future outcomes is formed (or, in other words, the position of the conditional expectation operator).

\section{Conclusions and outlook} \label{sec:Conclusions}

The long-standing approach of \cite{KihlstromMirman1974:JET, KihlstromMirman1981:RES} to separate between risk aversion and substitution across goods has been scarcely applied in intertemporal choice models due to issues of time-inconsistency. Our analysis provided a template to tackle such time-inconsistent problems in continuous-time Markovian environments from an equilibrium point of view. 

One can picture several directions for further exploration. In particular, it should be possible to consider variations of the preferences considered herein that include state-dependence (for instance, in the spirit of \cite{BjorkMurgociZhou2014:MF}, with a state-dependent risk aversion) or robust formulations that account for the uncertainty concerning the underlying prior (as in \cite{GilboaSchmeidler1989:JME}, \cite{KlibanoffMarinacciMukerji2005:Econometrica}) or concerning risk attitudes (\cite{DesmettreSteffensen2023:MF}).

Throughout the paper, we assumed what \cite{Kihlstrom2009:JME} called a forward-looking reward functional, because the aggregation of utilities only concerns immediate and future actions. Another interesting avenue would be to study the objectives of the form
\begin{equation*} 
	\mathbb{E}_{t}\left[ \Phi \left(\int_{s \geq 0} H(s,c(s),t) ds \right) \right], \quad t \geq 0,
\end{equation*}
that is, the risk assessment also comprehends past consumptions (it is both backward- and forward-looking). We speculate that this is related to the discussion around precommitted strategies.

\begin{appendix}
\section{Proofs}
\label{app:proofs_for_general_framework}

\subsection{Proof of Lemma \ref{lem:CT_KM_with_tau_recursion_for_f}}\label{subsec:proof_lemma_CT_KM_with_tau_recursion_for_f}
Starting with the definition of $f^{\cu}$ in \eqref{eq:KM_CT_with_tau_f_probabilistic_representation}, we have the following equalities:
\begin{align*}
	f^{\cu}(t,x,z,\tau) & \\
	& \hspace{-2cm}= \EVtx\sBrackets{\Phi\rBrackets{\int \limits_{t}^{T}H(s, X_s^{\cu}, \cu(s),\tau)\,ds + G \rBrackets{ X^{\cu}_T,\tau} + z}} \\
	& \hspace{-2cm} = \EVtx\left[\mathbb{E}_{t+h, X_{t+h}^{\cu}} \Biggl[ \Phi\left(\,\,\intthT H(s, X_s^{\cu}, \cu(s),\tau)\,ds \right. \right.\\
	& \hspace{3.cm} \left. \left. + G \rBrackets{ X^{\cu}_T,\tau} + \inttth H(s, X_s^{\cu}, \cu(s),\tau)\,ds + z \right) \Bigg] \right] \\
	& \hspace{-2cm} = \EVtx\left[f^{\cu}\rBrackets{t+h,  X_{t+h}^{\cu}, \inttth H(s, X_s^{\cu}, \cu(s),\tau)\,ds + z, \tau}\right],
\end{align*}
which proves \eqref{eq:CT_KM_with_tau_f_cu_recursion_rule}. From this, \eqref{eq:CT_KM_with_tau_f_cu_recursion_terminal_condition} follows readily after:
\begin{align*}
	f^{\cu}(T, x, z, \tau) & = \mathbb{E}_{T,x}\sBrackets{\Phi\rBrackets{\int \limits_{T}^{T}H(s, X_s^{\cu}, \cu(s), \tau)\,ds + G \rBrackets{ X^{\cu}_T, \tau} + z } } \\
	& = \mathbb{E}_{T,x} \sBrackets{ \Phi\rBrackets{ G \rBrackets{ X^{\cu}_T, \tau} + z } } = \Phi\rBrackets{ G \rBrackets{ x, \tau } + z }.
\end{align*}
Finally, we prove \eqref{eq:CT_KM_with_tau_PDE_f_cu}. Rewriting \eqref{eq:CT_KM_with_tau_f_cu_recursion_rule} for a fixed $\tau$, we have
\begin{align*}
	0 = \EVtx\sBrackets{f^{\cu|\tau}\rBrackets{t + h, X^{\cu}_{t+h},  \inttth H^{|\tau}(s, X_s^{\cu}, \cu(s)) \, ds + z}} - f^{\cu|\tau}(t, x, z).
\end{align*}
Dividing by $h>0$ and taking the limit as $h \downarrow 0$ gives that
\begin{align*}
	0 &= \lim_{h \downarrow 0} \frac{1}{h}\rBrackets{\EVtx\sBrackets{f^{\cu|\tau}\rBrackets{t + h, X^{\cu}_{t+h},  \inttth H^{|\tau}(s, X_s^{\cu}, \cu(s)) \, ds + z}} - f^{\cu|\tau}(t, x, z)}\\
	& = \lim_{h \downarrow 0} \frac{1}{h}\mathcal{D}_h f^{\cu|\tau}(t,x,z),
\end{align*}
where we defined 
\begin{equation}\label{eq:def_D_h_f_cu_with_tau}
	\mathcal{D}_{h}f^{\cu|\tau}(t,x, z) := \EV_{t,x, z}\sBrackets{f^{\cu|\tau}\left(t + h, X^{\cu}_{t+h}, Z^{\cu|\tau}_{t+h}\right)} - f^{\cu|\tau}(t,x, z),
\end{equation}
with the dynamics of $\rBrackets{Z^{\cu|\tau}_s}_{s \in [t, T]}$ given in \eqref{eq:SDE_Z}. Applying It\^{o}'s lemma to the two-dimensional process $\rBrackets{X^{\cu}_t, Z^{\cu|\tau}_t}_{\tin}$ and the function $f^{\cu|\tau}(t, x, z)$, we have
\begin{align*}
	f^{\cu|\tau}&\rBrackets{ t + h, X^{\cu}_{t+h}, Z^{\cu| \tau}_{t+h}}  - f^{\cu|\tau}(t,x,z) \\
	& = \inttth \partial_t f^{\cu|\tau}\rBrackets{s, X^{\cu}_{s}, Z^{\cu| \tau}_{s}}\,ds  + \inttth \partial_z f^{\cu|\tau}\rBrackets{s, X^{\cu}_{s}, Z^{\cu| \tau}_{s}}\, H^{|\tau}(s,X^{\cu}_{s}, \cu(s))\,ds \\
	& \quad + \inttth \partial_x f^{\cu|\tau}\rBrackets{s, X^{\cu}_{s}, Z^{\cu| \tau}_{s}}\, \Big(\mu\rBrackets{s, X^{\cu}_{s}, \cu(s)}\,ds + \sigma\rBrackets{s, X^{\cu}_{s}, \cu(s)}dW_s\Big)
	\\
	& \quad + \frac{1}{2}\inttth \partial_{xx} f^{\cu|\tau}\rBrackets{s, X^{\cu}_{s}, Z^{\cu| \tau}_{s}}\, \Big(\sigma\rBrackets{s, X^{\cu}_{s}, \cu(s)}\Big)^2\,ds\\
	&= \inttth \mathcal{D}^{\cu} f^{\cu|\tau}\rBrackets{s, X^{\cu}_{s}, Z^{\cu| \tau}_{s}}\,ds + \inttth \partial_x f^{\cu|\tau}\rBrackets{s, X^{\cu}_{s}, Z^{\cu| \tau}_{s}} \sigma\rBrackets{s, X^{\cu}_{s}, \cu(s)}\,dW_s,
\end{align*}
where in the first equality we use the definition of $X^{\cu}$ and $Z^{\cu}$, and the fact that $d\langle Z^{\cu| \tau}, Z^{\cu| \tau} \rangle_t = 0$ and \mbox{$d \langle X^{\cu}, Z^{\cu| \tau} \rangle _t = 0$}, and in the second equality we use the definition of $\mathcal{D}^{\cu}$. Taking the expectation and using the fact that the stochastic integral is a martingale with expectation $0$, we obtain
\begin{align}
	\EVtxz \Bigl[ f^{\cu|\tau}&\rBrackets{ t + h, X^{\cu}_{t+h}, Z^{\cu|\tau}_{t+h} }  - f^{\cu|\tau}(t,x,z) \Bigr] \notag\\
	& = \EVtxz \sBrackets{\inttth \mathcal{D}^{\cu} f^{\cu|\tau}\rBrackets{s, X^{\cu}_{s}, Z^{\cu| \tau}_{s}}\, ds}.\label{eq:A_h_f_tau_integral_representation}
\end{align}
Finally, dividing \eqref{eq:def_D_h_f_cu_with_tau} by $h$, taking the limit as $h\downarrow 0$, and using \eqref{eq:A_h_f_tau_integral_representation}, we derive
\begin{align*}
	\lim_{h \downarrow 0} \frac{1}{h} \mathcal{D}_h f^{\cu|\tau}(t,x,z) & = \lim_{h \downarrow 0} \frac{1}{h} \; \EVtxz \sBrackets{f^{\cu|\tau}\rBrackets{ t + h, X^{\cu}_{t+h}, Z^{\cu|\tau}_{t+h} }  - f^{\cu|\tau}(t,x,z)} \\
	& = \lim_{h \downarrow 0} \frac{1}{h} \; \EVtxz \sBrackets{\inttth \mathcal{D}^{\cu} f^{\cu|\tau}\rBrackets{s, X^{\cu}_{s}, Z^{\cu| \tau}_{s}}\, ds} \\
	& = \EVtxz \sBrackets{\lim_{h \downarrow 0} \frac{1}{h} \inttth \mathcal{D}^{\cu} f^{\cu|\tau}\rBrackets{s, X^{\cu}_{s}, Z^{\cu| \tau}_{s}}\, ds} \\
	& = \EVtxz \sBrackets{\lim_{h \downarrow 0} \frac{1}{h} \mathcal{D}^{\cu} f^{\cu|\tau}\rBrackets{ \eta, X^{\cu}_{\eta}, Z^{\cu|\tau}_{\eta}}\, h}\\
	&= \EVtxz \sBrackets{\mathcal{D}^{\cu} f^{\cu|\tau}(t, x, z)} = \mathcal{D}^{\cu}f^{\cu|\tau}(t,x,z),
\end{align*}
where in the third equality we use the dominated convergence theorem (guaranteed by the assumption $f^{\cu} \in \mathcal{L}^2(X^{\cu})$) to pass the limit inside the expectation, and in the fourth equality we apply the mean value theorem $ \omega$-wise and $\eta(\omega) \in [t, t+ h]$ for $\omega \in \Omega$.\qed

\subsection{Proof of Theorem \ref{th:verification_theorem_with_tau}}\label{subsec:proof_verification_theorem_with_tau}

\textit{Proof of (R1).} By (C1), $\ecu$ realizes the supremum in \eqref{eq:CT_KM_with_tau_EHJB_PDE_V_t_x_z} and is admissible. Thus, we consider a stochastic process $\rBrackets{Z^{\ecu|\tau}_s}_{s \in [t, T]}$ that satisfies \eqref{eq:SDE_Z} for $\cu = \ecu$. By the regularity assumption on $f$ in (C3), we can apply It\^{o}'s lemma to $f^{|\tau}(t, X^{\ecu}_t, Z^{\ecu|\tau}_t)$ on $[t,T]$ and, analogously to $f^{\cu|\tau}\rBrackets{ t + h, X^{\cu}_{t+h}, Z^{\cu| \tau}_{t+h}}  - f^{\cu|\tau}(t,x,z)$ in the proof of Lemma \ref{lem:CT_KM_with_tau_recursion_for_f}, get  that
\begin{align*}
	f^{|\tau}&\rBrackets{T, X^{\ecu}_{T}, Z^{\ecu|\tau}_{T} }  - f^{|\tau}(t, X^{\ecu}_{t}, Z^{\ecu|\tau}_{t}) \\
	& = \inttT \mathcal{D}^{\ecu} f^{|\tau}(s, X^{\ecu}_{s}, Z^{\ecu|\tau}_{s}) \,ds +  \inttT \partial_x f^{|\tau}\rBrackets{s, X^{\ecu}_{s}, Z^{\ecu|\tau}_{s}} \sigma\rBrackets{s, X^{\ecu}_{s}, \ecu(s)}\,dW_s.
\end{align*}
By (C2), $f$ satisfies \eqref{eq:CT_KM_with_tau_EHJB_PDE_f_T_x_z_tau}, which implies that $\int_t^{T}\mathcal{D}^{\ecu} f^{|\tau}(s, X^{\ecu}_{s}, Z^{\ecu|\tau}_{s}) \,ds= 0$. And by (C4), $f \in \mathcal{L}^2(X^{\ecu})$, thus $ \int_t^{T} \partial_x f^{|\tau}\rBrackets{s, X^{\ecu}_{s}, Z^{\ecu|\tau}_{s}} \sigma\rBrackets{s, X^{\ecu}_{s}, \ecu(s)}\,dW_s$ is a martingale and its expectation is equal to $0$. Therefore, we obtain
\begin{equation*}
	\mathbb{E}_{t, x, z}\sBrackets{f^{|\tau}\rBrackets{T, X^{\ecu}_{T}, Z^{\ecu|\tau}_{T} }  - f^{|\tau}\rBrackets{t, X^{\ecu}_{t}, Z^{\ecu|\tau}_{t}}} = 0.
\end{equation*}
Using the linearity of the expectation operator and the above equality, we get
\begin{align}
	\mathbb{E}_{t, x, z}\sBrackets{f^{|\tau}(t, X^{\ecu}_{t}, Z^{\ecu|\tau}_{t})} &= \mathbb{E}_{t, x, z}\sBrackets{f^{|\tau}\rBrackets{T, X^{\ecu}_{T}, Z^{\ecu|\tau}_{T} } }  = \mathbb{E}_{t, x, z}\sBrackets{f\rBrackets{T, X^{\ecu}_{T}, Z^{\ecu|\tau}_{T}, \tau} } \notag \\
	& \stackrel{\eqref{eq:CT_KM_with_tau_EHJB_PDE_f_T_x_z_tau}}{=} \mathbb{E}_{t, x, z}\sBrackets{\Phi\rBrackets{G\rBrackets{X^{\ecu}_{T},\tau} + Z^{\ecu|\tau}_{T}}}. \label{eq:f_t_x_z_tau_representation_2nd_last_step}
\end{align}
From the SDE of $Z^{\ecu|\tau}$, we get $Z^{\ecu|\tau}_T = \int_t^{T} H^{|\tau}(s, X_s^{\ecu}, \ecu(s))\,ds + Z^{\ecu|\tau}_t$. Plugging it into the right-hand side of \eqref{eq:f_t_x_z_tau_representation_2nd_last_step} and using that $Z^{\ecu|\tau}_t = z$, we obtain \eqref{eq:KM_CT_with_tau_f_probabilistic_representation}, which proves (R1).

\textit{Proof of (R2).} We now prove that $V(t,x) = J(t,x,\ecu)$ follows from the assumptions of the theorem and from what we have shown in (R1). By (C2), $V$ solves \eqref{eq:CT_KM_with_tau_EHJB_PDE_V_t_x_z}. By (C1), $\ecu$ realizes the supremum in \eqref{eq:CT_KM_with_tau_EHJB_PDE_V_t_x_z} and is admissible. Thus:
\begin{equation}\label{eq:KM_CT_with_tau_EHJB_with_V_at_ecu}
	0  = \mathcal{D}^{\ecu}V(t,x) + \partial_{z} f(t, x, 0, t) H(t, x, \ecu(t), t) - \partial_{\tau} f(t, x, 0, t).
\end{equation}
Since $V$ satisfies (C3), we can apply It\^{o}'s lemma to $V(t, X^{\ecu}_t)$ on $[t, T]$ and obtain:
\begin{align*}
	V&\rBrackets{T, X^{\ecu}_{T}}  - V(t, X^{\ecu}_{t}) = \inttT \mathcal{D}^{\ecu} V(s, X^{\ecu}_{s}) \,ds +  \inttT \partial_x V\rBrackets{s, X^{\ecu}_{s}} \sigma\rBrackets{s, X^{\ecu}_{s}, \ecu(s)}\,dW_s.
\end{align*}
Rearranging the terms and taking expectations yields
\begin{align*}
	\EVtx\sBrackets{V(t, X^{\ecu}_{t})} = & \;\EVtx\sBrackets{V\rBrackets{T, X^{\ecu}_{T}}}  - \EVtx\sBrackets{\inttT \mathcal{D}^{\ecu} V(s, X^{\ecu}_{s}) \,ds} \\
	& -  \EVtx\sBrackets{\inttT \partial_x V\rBrackets{s, X^{\ecu}_{s}} \sigma\rBrackets{s, X^{\ecu}_{s}, \ecu(s)}\,dW_s}.
\end{align*}
For the first term in the right-hand side of the above equality, we use that $V(T,x) = \Phi\rBrackets{G\rBrackets{x,T}}$ by (C2). For the second term, we use that $\mathcal{D}^{\ecu}V(t,x) = \partial_{\tau} f(t, x, 0, t) - \partial_z f(t, x, 0, t) H(t, x, \ecu(t), t)$ by \eqref{eq:KM_CT_with_tau_EHJB_with_V_at_ecu}. For the third term, we use the fact that $V \in \mathcal{L}^2(X^{\ecu})$ by (C4), which implies again that the stochastic integral is a martingale with expectation $0$. Therefore, we derive that
\begin{align}
	\underbrace{\EVtx\sBrackets{V(t, X^{\ecu}_{t})}}_{=V(t,x)} & = \EVtx\sBrackets{\Phi\rBrackets{G\rBrackets{X^{\ecu}_{T}, T}}}  - \EVtx\sBrackets{\inttT \partial_{\tau} f(s, X^{\ecu}_{s}, 0, s)\,ds} \notag \\
	& \quad +  \EVtx\sBrackets{\inttT \partial_z f(s, X^{\ecu}_{s}, 0, s) H(s, X^{\ecu}_{s}, \ecu(s), s) \,ds}. \label{eq:KM_CT_with_tau_V_t_x_representation_intermediate_step_1}
\end{align}
Next, define $g(t,x) := f(t, x, 0, t)$. Since $f$ is sufficiently smooth by (C3), $g$ is sufficiently smooth too. Applying It\^{o}'s lemma to $g(t,X^{\ecu}_t)$ on  $[t, T]$, we have
\begin{align*}
	g & (T, X^{\ecu}_T) - g(t,X^{\ecu}_t) = \inttT \partial_t g(s,X^{\ecu}_s)\,ds + \inttT \partial_x g(s, X^{\ecu}_s)\,dX^{\ecu}_s \\
	& \quad  + \frac{1}{2} \inttT \partial_{xx} g(s,X^{\ecu}_s)\,d\langle X^{\ecu}, X^{\ecu} \rangle_s \\
	& = \inttT \partial_t g(s,X^{\ecu}_s)\,ds + \inttT \partial_x g(s, X^{\ecu}_s) \mu(s, X^{\ecu}_s, \ecu(s))\,ds \\
	& \quad + \frac{1}{2} \inttT \partial_{xx} g(s,X^{\ecu}_s) \rBrackets{\sigma\rBrackets{s, X^{\ecu}_{s}, \ecu(s)}}^2\,ds  + \inttT \partial_x g(s, X^{\ecu}_s) \sigma(s, X^{\ecu}_s, \ecu(s))\,dW_s.
\end{align*}
Observing that the partial derivatives of $g(t,x)$ are equivalent to
\begin{align*}
	\partial_t g(t,x) &= \partial_t f(t, x, 0, t) + \partial_{\tau} f(t, x, 0, t),\\
	\partial_x g(t,x) & = \partial_x f(t, x, 0, t),\\
	\partial_{xx} g(t,x) & = \partial_{xx} f(t, x, 0, t),
\end{align*}
we get
\begin{align*}
	& f(T, X^{\ecu}_T, 0, T)  
    = f(t,X^{\ecu}_t, 0, t) + \inttT \partial_t f(s,X^{\ecu}_s, 0, s)\,ds + \inttT \partial_{\tau} f(s,X^{\ecu}_s, 0, s)\,ds    \\
	& \quad + \inttT \partial_x f(s, X^{\ecu}_s, 0, s) \mu(s, X^{\ecu}_s, \ecu(s))\,ds + \frac{1}{2} \inttT \partial_{xx} f(s, X^{\ecu}_s, 0, s) \rBrackets{\sigma\rBrackets{s, X^{\ecu}_{s}, \ecu(s)}}^2\,ds  \\
	& \quad +  \inttT \partial_z f(s, X^{\ecu}_s, 0, s) H(s, X^{\ecu}_s, \ecu(s), s)\,ds + \inttT \partial_x f(s, X^{\ecu}_s, 0, s) \sigma(s, X^{\ecu}_s, \ecu(s))\,dW_s \\
	& \quad - \inttT \partial_z f(s, X^{\ecu}_s, 0, s) H(s, X^{\ecu}_s, \ecu(s), s) \,ds \\
	& = f(t,X^{\ecu}_t, 0, t) + \inttT \mathcal{D}^{\ecu}f^{|s}(s, X^{\ecu}_s, 0)\,ds + \inttT \partial_x f(s, X^{\ecu}_s, 0, s) \sigma(s, X^{\ecu}_s, \ecu(s))\,dW_s \\
	&\quad + \inttT \partial_{\tau}f(s,X^{\ecu}_s, 0, s)\,ds - \inttT \partial_z f(s, X^{\ecu}_s, 0, s)H(s, X^{\ecu}_s, \ecu(s), s)\,ds\\
	& = f(t,X^{\ecu}_t, 0, t)  + \inttT \partial_x f(s, X^{\ecu}_s, 0, s) \sigma(s, X^{\ecu}_s, \ecu(s))\,dW_s \\
	& \quad + \inttT \partial_{\tau} f(s,X^{\ecu}_s, 0, s)\,ds - \inttT \partial_z f(s, X^{\ecu}_s, 0, s)H(s, X^{\ecu}_s, \ecu(s), s)\,ds,
\end{align*}
where in the first equality we add and subtract the term $\int_t^{T} \partial_z f(s, X^{\ecu}_s, 0, s) \linebreak \times H(s, X^{\ecu}_s, \ecu(s), s)\,ds$, in the second equality we use the definition of $\mathcal{D}^{\ecu}f^{|\tau}$, and in the third equality that $\mathcal{D}^{\ecu}f^{|t}(t,x,0) = 0$, as by (C2) $f$ satisfies \eqref{eq:CT_KM_with_tau_EHJB_PDE_f_t_x_z} for any $(t,x, z, \tau)$. Rearranging the terms and applying the expectation operator, we have
\begin{align}
	\EVtx\Bigg[ & \inttT \partial_z  f (s, X^{\ecu}_s, 0, s)  H(s, X^{\ecu}_s, \ecu(s), s)\,ds\Biggr]  \notag\\
	& = \EVtx\sBrackets{f(t,X^{\ecu}_t, 0, t)} + \EVtx\sBrackets{\inttT \partial_x f(s, X^{\ecu}_s, 0, s) \sigma(s, X^{\ecu}_s,\ecu(s))\,dW_s} \notag\\
	& \quad - \EVtx\sBrackets{f(T, X^{\ecu}_T, 0, T)} + \EVtx\sBrackets{\inttT \partial_{\tau}f(s,X^{\ecu}_s, 0, s)\,ds}\notag\\
	& = f(t,x, 0, t) - \EVtx\sBrackets{f(T, X^{\ecu}_T, 0, T)}  + \EVtx\sBrackets{\inttT \partial_{\tau} f(s,X^{\ecu}_s, 0, s)\,ds} \notag \\
	& = \mathbb{E}_{t,x} \sBrackets{\Phi\rBrackets{\int \limits_{t}^{T}H(s, X_s^{\ecu}, \ecu(s), t)\,ds + G \rBrackets{ X^{\ecu}_T, t} }} - \EVtx\sBrackets{\Phi\rBrackets{G\rBrackets{X^{\ecu}_{T}, T}}} \notag \\
	& \quad + \EVtx\sBrackets{\inttT \partial_{\tau} f(s,X^{\ecu}_s, 0, s)\,ds}, \label{eq:KM_CT_with_tau_V_t_x_representation_intermediate_step_2}
\end{align}
where in the second equality we use (C2) regarding the terminal condition satisfied by $f$ and the probabilistic representation \eqref{eq:KM_CT_with_tau_f_probabilistic_representation} of $f$, which was proven earlier in (R1). Inserting \eqref{eq:KM_CT_with_tau_V_t_x_representation_intermediate_step_2} into \eqref{eq:KM_CT_with_tau_V_t_x_representation_intermediate_step_1}, we get
\begin{align}
	V(t,x) & = \EVtx\sBrackets{\Phi\rBrackets{G\rBrackets{X^{\ecu}_{T}, T}}} - \EVtx\sBrackets{\inttT \partial_{\tau} f(s, X^{\ecu}_{s}, 0, s)\,ds} \notag \\
	& \quad +  \mathbb{E}_{t,x} \sBrackets{\Phi\rBrackets{\int \limits_{t}^{T}H(s, X_s^{\ecu}, \ecu(s), t)\,ds + G \rBrackets{ X^{\ecu}_T, t} } } - \EVtx\sBrackets{\Phi\rBrackets{G\rBrackets{X^{\ecu}_{T}, T}}} \notag \\
	&\quad  + \EVtx\sBrackets{\inttT \partial_{\tau} f(s, X^{\ecu}_{s}, 0, s)\,ds} \notag \\
	& = \mathbb{E}_{t,x} \sBrackets{\Phi\rBrackets{\int \limits_{t}^{T}H(s, X_s^{\ecu}, \ecu(s), t)\,ds + G \rBrackets{ X^{\ecu}_T, t} }} = J(t, x, \ecu),\notag
\end{align}
thus proving (R2).

\textit{Proof of (R3).} Next, we show that $\ecu$ is an equilibrium control as per Definition \ref{def:equilibrium_u}. Fix an arbitrary point $(t, x) \in [0, T) \times \mathcal{X}$. Choose an admissible control $\cu \in \ASet$ and a real number $h \in [0, T - t]$. Define a new control $\cuh$ as in \eqref{eq:perturbed_equilibrium_control}. We will show that \eqref{eq:ecu_def_property} holds, i.e.:
\begin{equation*}
	\liminf_{h \downarrow 0} \frac{J(t,x, \ecu) - J(t,x, \cuh)}{h} \geq 0.
\end{equation*}
Let us derive a recursive representation for $J(t,x, \cuh)$. We will use the auxiliary functions $f^{\cuh}$ and $f^{\ecu}$, where the latter is equal to $f$ due to \eqref{eq:KM_CT_with_tau_f_probabilistic_representation} proven in (R1). We have
\begin{align*}
	J(t, x, \cuh) & \stackrel{\eqref{eq:KM_CT_with_tau_f_probabilistic_representation_z_0_tau_t}}{=} f^{\cuh}(t, x, 0, t) \\
	& \stackrel{\eqref{eq:CT_KM_with_tau_f_cu_recursion_rule}}{=} \EVtx\sBrackets{f^{\cuh}\rBrackets{t + h, X^{\cuh}_{t+h},  \inttth H(s, X_s^{\cuh}, \cuh(s), t) \, ds, t}}.
\end{align*}
Adding and subtracting $J(t + h, X^{\cuh}_{t+h}, \cuh)$ in the right-hand side of the previous equation, and using \eqref{eq:KM_CT_with_tau_f_probabilistic_representation_z_0_tau_t}, we obtain
\begin{align}
	J(t, x, \cuh) & = \EVtx\sBrackets{f^{\cuh}\rBrackets{t + h, X^{\cuh}_{t+h},  \inttth H(s, X_s^{\cuh}, \cuh(s), t) \, ds, t}}  \notag\\
	& \quad + J(t + h, X^{\cuh}_{t+h}, \cuh) - J(t + h, X^{\cuh}_{t+h}, \cuh) \notag\\
	& = J(t + h, X^{\cuh}_{t+h}, \cuh) \notag\\
	& \quad + \EVtx\sBrackets{f^{\cuh}\rBrackets{t + h, X^{\cuh}_{t+h},  \inttth H(s, X_s^{\cuh}, \cuh(s), t) \, ds, t}} \notag \\
	& \quad - f^{\cuh}(t + h, X^{\cuh}_{t+h}, 0, t + h). \label{eq:CT_KM_with_tau_verifying_ecu_intermediate_step_1}
\end{align}
By the definition of $\cuh$, we also have the following equalities:
\begin{align*}
	& J(t + h, X^{\cuh}_{t+h}, \cuh) = V(t + h, X^{\cu}_{t+h}),\\
	& f^{\cuh}\rBrackets{t + h, X^{\cuh}_{t+h},  \inttth H(s, X_s^{\cuh}, \cuh(s), t) \, ds, t } \\
	& \hspace{0.2cm}\qquad = f^{\ecu}\rBrackets{t + h, X^{\cu}_{t+h},  \inttth H(s, X_s^{\cu}, \cu(s), t) \, ds, t }, \\
	& f^{\cuh}\left(t + h, X^{\cuh}_{t+h}, t+h\right) = f^{\ecu}\left(t + h, X^{\cu}_{t+h}, 0, t + h\right). \\
	& 
\end{align*}
Plugging the results into \eqref{eq:CT_KM_with_tau_verifying_ecu_intermediate_step_1} and applying the expectation operator, we get
\begin{align}
	J(t, x, \cuh) &= \EVtx\sBrackets{V(t + h, X^{\cu}_{t+h})}  \notag \\
	& \quad + \EVtx\sBrackets{f\rBrackets{t + h, X^{\cu}_{t+h},  \inttth H(s, X_s^{\cu}, \cu(s), t) \, ds, t}} \notag \\
	&\quad - \EVtx\sBrackets{f(t + h, X^{\cu}_{t+h}, 0, t + h)} .\label{eq:CT_KM_with_tau_verifying_ecu_intermediate_step_2}
\end{align}
Therefore, using \eqref{eq:CT_KM_with_tau_verifying_ecu_intermediate_step_2}, we have
\begin{align}
	J(t, x, \ecu) &- J(t,x, \cuh) \notag\\
	& = V(t, x) - J(t,x, \cuh) \notag \\
	& =: -\mathcal{D}_h^{\cu} V(t,x) - \mathcal{D}_{h}^{\cu}f^{|t+h, X^{\cu}_{t+h}, t}(z)|_{z = 0} + \mathcal{D}_h f^{|t + h, X^{\cu}_{t+h}, 0}(\tau)|_{\tau = t}, \label{eq:KM_CT_with_tau_verifying_ecu_intermediate_step_3}
\end{align}
where we specify the following notation:
\begin{align}
	\mathcal{D}_{h}^{\cu}V(t,x) & := \EV_{t,x}\sBrackets{V(t + h, X^{\cu}_{t+h})} - V(t,x), \label{eq:CT_KM_with_tau_def_D_h_V}\\
	\mathcal{D}_{h}^{\cu}f^{|t+h, X^{\cu}_{t+h}, t}(z) & := \EV_{t,x, z}\sBrackets{f^{|t + h, X^{\cu}_{t+h}, t}\left(Z^{\cu|t}_{t + h}\right)} - \EV_{t,x, z}\sBrackets{f^{|t + h, X^{\cu}_{t+h}, t}(z)}, \label{eq:CT_KM_with_tau_def_D_h_f_z}\\
	\mathcal{D}_h f^{|t + h, X^{\cu}_{t+h}, 0}(\tau) &:= \EVtx\sBrackets{f^{|t + h, X^{\cu}_{t+h}, 0}(\tau + h)} - \EVtx\sBrackets{f^{|t + h, X^{\cu}_{t+h}, 0}(\tau)} ,\label{eq:CT_KM_with_tau_def_D_h_f_tau}
\end{align}
for the process $\rBrackets{Z^{\cu|t}_{s}}_{s \in [t, t + h]}$ defined in \eqref{eq:SDE_Z} with $\tau = t$ fixed. Dividing both sides of \eqref{eq:KM_CT_with_tau_verifying_ecu_intermediate_step_3} by $h$, taking the limit as $h \downarrow 0$ and using the linearity of the limit operator yields that
\begin{equation} \label{eq:KM_CT_with_tau_verifying_ecu_intermediate_step_4}
	\begin{aligned}
		\liminf_{h \downarrow 0}& \frac{J(t, x, \ecu) - J(t,x, \cuh)}{h} = -\liminf_{h \downarrow 0}\frac{1}{h}\mathcal{D}_h^{\cu} V(t,x) \\
		& - \liminf_{h \downarrow 0}\frac{1}{h}\mathcal{D}_{h}^{\cu}f^{|t+h, X^{\cu}_{t+h}, t}(z)|_{z = 0} + \liminf_{h \downarrow 0}\frac{1}{h}\mathcal{D}_h f^{|t + h, X^{\cu}_{t+h}, 0}(\tau)|_{\tau = t}.
	\end{aligned}
\end{equation}
As for the first term in \eqref{eq:KM_CT_with_tau_verifying_ecu_intermediate_step_4}, due to $V \in L^2(X^{\cu})$ by (C4), we have
\begin{equation}\label{eq:CT_KM_with_tau_A_h_V_t_x_convergence}
	\liminf_{h \downarrow 0} \frac{1}{h}\mathcal{D}_{h}^{\cu}V(t,x) = \mathcal{D}^{\cu}V(t,x).
\end{equation}
To simplify the second term in \eqref{eq:KM_CT_with_tau_verifying_ecu_intermediate_step_4}, we apply It\^{o}'s lemma to the function $g(z):=f^{|t+h, X^{\cu}_{t+h}, t}(z)$, the process $\rBrackets{Z_s^{\cu|t}}_{s \in [t, t + h]}$ for $Z_t^{\cu} = z = 0$, that is:
\begin{equation}\label{eq:SDE_Z_tau_fixed_to_t_z_to_0}
	d Z_s^{\cu|t} = H^{|t}(s, X^{\cu}_s, \cu(s))ds,\quad Z_t^{\cu|t} = 0.
\end{equation}
We obtain that
\begin{align}
	g\left(Z_{t+h}^{\cu|t}\right) &= g\left(Z_{t}^{\cu|t}\right) + \inttth \partial_{z} g\left(Z_{s}^{\cu|t}\right)\,d Z_{s}^{\cu|t} + \frac{1}{2} \inttth \partial_{zz} g\left(Z_{s}^{\cu|t}\right) \,d\langle Z^{\cu|t}, Z^{\cu\vert t} \rangle_s \notag\\
	& = g\left(Z_{t}^{\cu|t}\right) + \inttth \partial_{z} g\left(Z_{s}^{\cu|t}\right)H^{\vert t}(s, X^{\cu}_s, \cu(s))\,ds,\label{eq:g_z_ito_lemma}
\end{align}
which follows from \eqref{eq:SDE_Z_tau_fixed_to_t_z_to_0}. 
Then, from the definition of $\mathcal{D}_{h}^{\cu}f^{|t+h, X^{\cu}_{t+h}, t}(z)$ in \eqref{eq:CT_KM_with_tau_def_D_h_f_z}, and using \eqref{eq:g_z_ito_lemma}, we compute
\begin{align}
	\liminf_{h \downarrow 0} \frac{1}{h} \notag & \mathcal{D}_{h}^{\cu}f^{|t+h, X^{\cu}_{t+h}, t}(0) \notag \\
	= & \liminf_{h \downarrow 0} \frac{1}{h} \; \EV_{t,x, 0} \sBrackets{g\left(Z_{t+h}^{\cu|t}\right) - g\left(Z_{t}^{\cu|t}\right)}  \notag \\
	= &  \liminf_{h \downarrow 0} \frac{1}{h}  \; \EV_{t,x, 0} \sBrackets{g\left(Z_{t}^{\cu|t}\right) + \inttth \partial_{z} g\left(Z_{s}^{\cu|t}\right)H^{|t}(s, X^{\cu}_s, \cu(s))\,ds - g\left(Z_{t}^{\cu|t}\right)} \notag \\
	= & \; \liminf_{h \downarrow 0} \frac{1}{h} \; \EV_{t,x, 0} \sBrackets{\partial_{z} g\left(Z_{\eta}^{\cu|t}\right)H^{|t}\left(\eta, X^{\cu}_{\eta}, \cu(\eta)\right)h}  \notag \\
	= & \;  \EV_{t,x, 0} \sBrackets{\liminf_{h \downarrow 0} \rBrackets{ \partial_{z} f^{\vert \, t + h, X^{\cu}_{t+h}, t}\left(Z_{\eta}^{\cu|t}\right) H^{|t}\left(\eta, X^{\cu}_{\eta}, \cu(\eta)\right)} }  \notag \\
	= & \; \EV_{t,x, 0} \sBrackets{\partial_{z} f^{\vert \, t, X^{\cu}_{t}, t}\left(Z_{t}^{\cu|t}\right)H^{\vert \, t}\left(t, X^{\cu}_{t}, \cu(t)\right)} \notag  \\
	= &  \partial_{z} f^{\vert \, t, x, t}(0)H^{|t}(t, x, \cu(t)) \notag \\
	= & \; \partial_{z} f(t, x, 0, t)H(t, x, \cu(t), t), \label{eq:CT_KM_with_tau_A_h_f_z_convergence} 
\end{align}
where in the third equality we apply the mean value theorem $\omega$-wise, for $ \omega \in \Omega$, and $\eta(\omega) \in [t, t+ h]$, in the fourth equality we use (C5) to apply the dominated convergence theorem to interchange expectation and limit operators, and in the last steps we use the fact that $Z^{\cu|t}_t = 0$ and the definition of $\partial_{z} f^{\vert t, x, t}$. Thus, the second term in \eqref{eq:KM_CT_with_tau_verifying_ecu_intermediate_step_3} converges to $\partial_{z} f(t, x, 0, t)H(t, x, \cu(t), t)$. Similarly, for the convergence of the third term in \eqref{eq:KM_CT_with_tau_verifying_ecu_intermediate_step_3}, starting from the definition of $\mathcal{D}_h f^{|t + h, X^{\cu}_{t+h}, 0}(\tau)$ in \eqref{eq:CT_KM_with_tau_def_D_h_f_tau}, we derive
\begin{align}
	\liminf_{h \downarrow 0}\frac{1}{h}& \mathcal{D}_h f^{|t + h, X^{\cu}_{t+h}, 0}(\tau)|_{\tau = t}  \notag \\
	&=  \liminf_{h \downarrow 0}\frac{1}{h}\bigg(\EVtx\sBrackets{f^{|t + h, X^{\cu}_{t+h}, 0}(t + h)} - \EVtx\sBrackets{f^{|t + h, X^{\cu}_{t+h}, 0}(t)}\bigg) \notag\\
	& =\liminf_{h \downarrow 0}{\EVtx\sBrackets{\frac{1}{h}\rBrackets{f^{|t + h, X^{\cu}_{t+h}, 0}(t + h) - f^{|t + h, X^{\cu}_{t+h}, 0}(t)}}} \notag \\
	& = \liminf_{h \downarrow 0}\EVtx\Biggl[\frac{1}{h}\biggl(f^{|t + h, X^{\cu}_{t+h}, 0}(t) + ( t + h - t)\partial_{\tau}f^{|t + h, X^{\cu}_{t + h}, 0}(t) \notag \\
	&\qquad \qquad \qquad + \dfrac{1}{2}(t + h - t)^2\partial_{\tau \tau}f^{|t + h, X^{\cu}_{t+h}, 0}(\iota)  - f^{|t + h, X^{\cu}_{t+h}, 0}(t) \biggr)\Biggr] \notag \\
	& = \liminf_{h \downarrow 0}\EVtx\Biggl[\frac{1}{h}\biggl(h \, \partial_{\tau}f^{|t + h, X^{\cu}_{t + h}, 0}(t)   + \dfrac{1}{2}  h^2 \partial_{\tau \tau }f^{|t + h, X^{\cu}_{t+h}, 0}(\iota)  \biggr)\Biggr] \notag \\
	& = \liminf_{h \downarrow 0}\EVtx\Biggl[\partial_{\tau}f^{|t + h, X^{\cu}_{t + h}, 0}(t) \Biggr] + \liminf_{h \downarrow 0}\EVtx\Biggl[\dfrac{1}{2}h \,\partial_{\tau \tau }f^{|t + h, X^{\cu}_{t+h}, 0}(\iota)   \Biggr] \notag \\
	& =\EVtx\Biggl[\liminf_{h \downarrow 0}\partial_{\tau}f^{|t + h, X^{\cu}_{t + h}, 0}(t) \Biggr] + \EVtx\Biggl[\liminf_{h \downarrow 0}\dfrac{1}{2}h \,\partial_{\tau \tau }f^{|t + h, X^{\cu}_{t+h}, 0}(\iota)  \Biggr] \notag \\
	& = \EVtx\Biggl[\partial_{\tau}f^{|t, X^{\cu}_{t}, 0}(t) \Biggr] + \EVtx\Biggl[0 \cdot \partial_{\tau \tau}f^{|t, X^{\cu}_{t}, 0}(t)  \Biggr] \notag \\
	& = \partial_{\tau} f(t, x, 0, t),\label{eq:CT_KM_with_tau_A_h_f_tau_convergence}
\end{align}
where in the third equality we use for each $\omega \in \Omega$ the first-order Taylor expansion of $f^{|t + h, X^{\cu}_{t+h}(\omega), 0}(\tau)$ around the point $\tau = t$, with $\iota(\omega) \in [t, t+ h]$ being a random intermediate value, in the fourth equality we apply again the dominated convergence theorem (using (C6)), and in the last steps the continuity of $\partial_{\tau} f$ and $\partial_{\tau \tau} f$ with respect to $t$ and $x$. 

Finally, inserting \eqref{eq:CT_KM_with_tau_A_h_V_t_x_convergence}, \eqref{eq:CT_KM_with_tau_A_h_f_z_convergence} and \eqref{eq:CT_KM_with_tau_A_h_f_tau_convergence} into \eqref{eq:KM_CT_with_tau_verifying_ecu_intermediate_step_4}, we obtain
\begin{align*}
	\liminf_{h \downarrow 0} & \frac{J(t, x, \ecu) - J(t,x, \cuh)}{h} \\
	& = -\rBrackets{\mathcal{D}^{\cu} V(t,x) + \partial_{z} f(t, x, 0, t)H(t, x, \cu(t,x), t) - \partial_{\tau} f(t, x, 0, t)} \geq 0,
\end{align*}
due to \eqref{eq:CT_KM_with_tau_EHJB_PDE_V_t_x_z} as per (C1). Therefore, $\ecu$ is an equilibrium control,
which proves (R3).

\textit{Proof of (R4).} We conclude that $V(t, x)$ is indeed the equilibrium value function, i.e., $\widehat{V}(t, x) = V(t, x)$, for $V(t, x) = J(t, x, \ecu)$ by (R2) and $\ecu$ is an equilibrium control by (R3). This proves (R4) and completes the proof of the verification theorem. \qed

\subsection{Proof of Corollary \ref{cor:CT_KM_with_cur_t_dependence_EHJB_system_infinite}}
\label{subsec:ProofCorollaryReformulationHJBSystemCRRA}
For readability, we recall the system of PDEs in \eqref{eq:EHJBsystem_CRRA}:
\begin{align} 
		0 =  & \; \sup_{(\pi, c) \in \mathcal{A}(t,x)}\Big\{\partial_{t} V(t,x) + \partial_x V(t, x) (x(r + \pi \lambda) - c) + \dfrac{1}{2}\partial_{xx} V(t, x) \sigma^2 \pi^2 x^2  \Big. \notag \\
		& \hspace{2cm} \Big. + \partial_{z} f(t, x, 0, t) c^\rho- \partial_{\tau} f(t, x, 0, t) \Big\}, \label{eq:EHJBsystem_CRRAproof} \\
		0 =  & \; \mathcal{D}^{\hat{\pi},\hat{c}  }f^{\vert \tau}(t,x,z), \notag \\
		& V(T,x) = \dfrac{1}{1 - \alpha} x^{1 - \alpha},\quad \notag \\
		& f(T,x,z,\tau) = \dfrac{1}{1 - \alpha}\rBrackets{e^{-\delta \rBrackets{T - \tau}}  x^{\rho} + z}^{\frac{1 - \alpha}{\rho}}. \notag
\end{align}
Let us compute the term $\partial_{\tau} f(t, x, 0, t)$ in the first equation:
\begin{align*}
	\partial_{\tau} f(t, x, 0, t) & =  \rBrackets{\partial_{\tau} f(t,x,0,\tau)}_{\tau = t} \\
	&\hspace{-1.5cm} \stackrel{}{=} \delta\frac{1 - \alpha}{\rho} \EVtx \sBrackets{ \frac{1}{1 - \alpha}\rBrackets{\int \limits_{t}^{T} e^{-\delta (s - t) } (\hat{c}(s))^{\,\rho}\,ds + e^{-\delta (T - t) }\rBrackets{X^{\hat{\pi}, \hat{c}}_T}^{\rho}}^{\frac{1 - \alpha}{\rho}} } \\
	& \hspace{-1.5cm} = \delta \frac{1 - \alpha}{\rho} V(t, x),
\end{align*}
where in the first equality we use the definition of $f(t,x,z,\tau)$, and in the last equality the definition of $V(t, x) = f(t, x, 0, t)$. Next, we compute the term $\partial_{z} f(t, x, 0, t)$: 
\begin{equation*}
	\partial_{z} f(t, x, 0, t) =\EVtx\sBrackets{\frac{1}{\rho}\rBrackets{\int \limits_{t}^{T} e^{-\delta \rBrackets{s - t}} (\hat{c}(s))^{\,\rho} \,ds + e^{-\delta \rBrackets{T - t}}\rBrackets{X^{\hat{\pi}, \hat{c}}_T}^{\rho} }^{\frac{1 - \alpha}{\rho}-1}}.
\end{equation*}
Defining $\widetilde{V}^{(1)}(t,x) := \rho \partial_{z} f(t, x, 0, t)$, the first equation in \eqref{eq:EHJBsystem_CRRAproof} becomes
\begin{equation}
	0 = \sup_{(\pi,c) \in \mathcal{A}(t,x)}\left\{\mathcal{D}^{\pi,c }V(t,x) + \dfrac{1}{\rho}\widetilde{V}^{(1)}(t,x) c^{\rho} - \delta \frac{1 - \alpha}{\rho} V(t, x)  \right\},
\end{equation}
with terminal condition for $\widetilde{V}^{(1)}(t,x)$ given by $\widetilde{V}^{(1)}(T,x) =  x^{1 - \alpha - \rho}.$ Now let
\begin{equation*}
	f^{(1)}(t,x,z,\tau) := \EVtx\sBrackets{\frac{1}{\rho}\rBrackets{\int \limits_{t}^{T} e^{-\delta \rBrackets{s - \tau}} \left(\hat{c}(s)\right)^{\rho}\,ds + e^{-\delta \rBrackets{T - \tau}}\rBrackets{X^{\hat{\pi}, \hat{c}}_T}^{\rho} + z }^{\frac{1 - \alpha}{\rho}-1}}.
\end{equation*}
Analogously to the proof of Lemma \ref{lem:CT_KM_with_tau_recursion_for_f}, we can show that for any $h \in [0, T - t],$ $\tau \in [0, t]$,
\begin{align*}
	f^{(1)}(t,x,z,\tau) 
	& \stackrel{}{=} \EVtx\sBrackets{f^{(1)}\rBrackets{t+h,  X_{t+h}^{\hat{\pi},\hat{c}}, \inttth e^{-\delta \rBrackets{s - \tau}} \left(\hat{c}(s)\right)^{\rho}\,ds + z, \tau}}.
\end{align*}
Thus, $f^{(1)}(t,x,z,\tau)$ satisfies the following PDE:
\begin{eqnarray}
		 && 0 = \mathcal{D}^{\hat{\pi},\hat{c}}f^{(1) \, \vert \, \tau}(t,x,z) \notag  \\
		 && \;\;\, = \partial_{t}f^{(1)}(t,x,z, \tau) + \partial_x f^{(1)}(t,x,z, \tau)\mu\big(t, x, (\hat{\pi}(t,x),\hat{c}(t,x)) \big) \label{eq:PDE_f1_CRRAproof} \\
		&& \qquad + \frac{1}{2}\partial_{xx}f^{(1)}(t,x,z, \tau) \rBrackets{\sigma\big(t, x, (\hat{\pi}(t,x),\hat{c}(t,x)) \big)}^2 + \partial_z f^{(1)}(t,x, z, \tau)(\hat{c}(t,x))^{\,\rho}. \notag
\end{eqnarray}
Inserting $z = 0$ and $\tau = t$, and using the definition of $\widetilde{V}^{(1)}(t,x)$, we obtain
\begin{align*}
	0 & = \partial_t\widetilde{V}^{(1)}(t,x) + \partial_x\widetilde{V}^{(1)}(t,x)\mu\big(t, x, (\hat{\pi}(t,x),\hat{c}(t,x))\big) \\  
	& \qquad + \frac{1}{2}\partial_{xx}\widetilde{V}^{(1)}(t,x) \rBrackets{\sigma\big(t, x, (\hat{\pi}(t,x),\hat{c}(t,x))\big)}^2 - \rho \partial_{\tau} f(t, x, 0, t) \\
	& \qquad + \rho \partial_z f^{(1)}(t,x,0, t)(\hat{c}(t,x))^{\,\rho}\\
	& = \mathcal{D}^{\hat{\pi},\hat{c}} \, \widetilde{V}^{(1)}(t,x) - \rho \partial_{\tau}f^{(1)}(t, x, 0, t) + \rho \partial_z f^{(1)}(t,x,0, t)(\hat{c}(t,x))^{\,\rho}.
\end{align*}
Furthermore, by similar calculations to those performed above, notice that
$\allowdisplaybreaks \\ \partial_\tau f^{(1)}(t,x,0, t) = \frac{\delta}{\rho} \rBrackets{\frac{1 - \alpha}{\rho} - 1}\widetilde{V}^{(1)}(t,x),$
and that
\begin{align*}
	\partial_z f^{(1)}(t,x,0, t) & = \partial_z \EVtx\sBrackets{\frac{1}{\rho}\rBrackets{\int \limits_{t}^{T} e^{-\delta \rBrackets{s - t}} \left(\hat{c}(s)\right)^{\rho} \,ds + e^{-\delta \rBrackets{T - t}}\rBrackets{X^{\hat{\pi}, \hat{c}}_T}^{\rho} }^{\frac{1 - \alpha}{\rho}-1}}\\
	& \hspace{-2cm} = \EVtx\sBrackets{\frac{1}{\rho} \rBrackets{\frac{1 - \alpha}{\rho} - 1}\rBrackets{\int \limits_{t}^{T} e^{-\delta \rBrackets{s - t}} \left(\hat{c}(s)\right)^{\rho} \,ds + e^{-\delta \rBrackets{T - t}}\rBrackets{X^{\hat{\pi}, \hat{c}}_T}^{\rho} }^{\frac{1 - \alpha}{\rho}-2}}.
\end{align*}
Therefore, letting $\widetilde{V}^{(2)}(t,x) := \rho\rBrackets{\frac{1 - \alpha}{\rho} - 1}^{-1} \partial_z f^{(1)}(t,x,0, t),$
we rewrite \eqref{eq:PDE_f1_CRRAproof} as
\begin{equation*}
	0 = \mathcal{D}^{\hat{\pi},\hat{c}} \,\widetilde{V}^{(1)}(t,x) - \delta\rBrackets{\frac{1 - \alpha}{\rho} - 1}\widetilde{V}^{(1)}(t,x) +  \rBrackets{\frac{1 - \alpha}{\rho} - 1}\widetilde{V}^{(2)}(t,x)(\hat{c}(t,x))^{\,\rho}.
\end{equation*}
Repeating the same procedure for all $ k \geq 2$, we obtain the PDE for a generic $\widetilde{V}^{(k)}(t,x)$:
\begin{equation*}
	\begin{split}
		& 0  =  \mathcal{D}^{\hat{\pi},\hat{c}}\widetilde{V}^{(k)}(t,x) - \delta\rBrackets{\dfrac{1 - \alpha}{\rho} - k}\widetilde{V}^{(k)}(t,x) +  \rBrackets{\dfrac{1 - \alpha}{\rho} - k}\widetilde{V}^{(k+1)}(t,x) (\hat{c}(t,x))^{\,\rho}, \\
		& \widetilde{V}^{(k)}(T,x)  =  x^{1-\alpha - k\rho}.
	\end{split}
\end{equation*}
Putting everything together, the claim of the corollary follows. \qed

\section{Special cases}

In this appendix, we first specialize the theory of Section \ref{sec:general_theory} to the case where the utility functions $H$ and $G$ do not depend on the current time. Then, we provide details on the consumption-investment problem studied in Section \ref{sec:CRRA-CES_utility_exponential_discounting} for the case of $\alpha = 1$ (unit RRA).

\subsection{No dependence on current time} \label{app:NoDependenceOnCurrentTime}

Consider the setup introduced in Section \ref{subsec:setup}, with the agent's reward functional given by
\begin{equation}\label{eq:KM_CT_without_tau_reward_functional}
	J(t, x, \cu) = \EVtx \sBrackets{\Phi\rBrackets{\int \limits_{t}^{T}H(s, X_s^{\cu}, \bm{u}(s, X_s^{\cu}))\,ds + G \rBrackets{X_T^{\cu}} } }.
\end{equation}
In the spirit of Corollary \ref{cor:CT_KM_with_cur_t_dependence_EHJB_system_infinite}, we aim to show that we can rewrite the extended HJB system in an alternative form. 
\begin{proposition}\label{prop:CT_KM_without_tau_EHJB_system_infinite} Assume that the following conditions are satisfied:
	\begin{enumerate}
		\item[(C7)] An admissible equilibrium control $\ecu$ exists and realizes the supremum in $$0  = \sup_{u \in \AMap(t,x)}\left\{\mathcal{D}^{u}V(t,x) + \partial_{z} f(t, x, 0) H(t, x, u) \right\}.$$
		\item[(C8)] $\Phi \in \textgoth{C}^{\infty}\left( \R \right)$.
		\item[(C9)] For $\Phi_{[r]}(x)$, $r \in \mathbb{N}$, denoting the $r$-th order derivative of $\Phi(x)$, and by convention, $\Phi_{[0]}(x) := \Phi(x)$, it holds that
		\begin{equation*}
			\begin{aligned}
				\sup_{n \in \N \cup \{ 0 \}} \; \Biggl| \; & \EVtx\Biggl[{\Phi_{[n]}\Biggl({\int \limits_{t}^{T}H(s, X_s^{\ecu}, \ecu(s))\,ds + G \rBrackets{ X^{\ecu}_T}}\Biggr)}\Biggr] \Biggr| < \infty.
			\end{aligned}
		\end{equation*}

	\end{enumerate}
	Then, the extended HJB system reads:
	\begin{align*}
		0 & = \sup_{u \in \mathcal{A}(t, x)}\left\{\mathcal{D}^{u}V(t,x) + V^{(1)}(t, x) H(t, x, u) \right\}, \\
		0 & =  \mathcal{D}^{\ecu}V^{(k)}(t,x) + V^{(k+1)}(t, x) H(t, x, \ecu(t, x)), \quad k \in \mathbb{N}, \\
		V(T,x) & = \Phi\rBrackets{G\rBrackets{x}},  \\ 
		V^{(k)}(T,x) & = \Phi_{[k]}\rBrackets{G\rBrackets{x}},\quad \quad k \in \mathbb{N}.
	\end{align*}
	Furthermore, $V^{(k)}$ has the probabilistic representation
	\begin{equation*}
		V^{(k)}(t,x) = \EVtx\sBrackets{\Phi_{[k]}\rBrackets{\int \limits_{t}^{T}H\left(s, X_s^{\ecu}, \ecu(s)\right)\,ds + G \rBrackets{ X^{\ecu}_T}}}, \quad k \in \mathbb{N}.
	\end{equation*}
\end{proposition}

\begin{proof}
	By (C7), we know from Theorem \ref{th:verification_theorem_with_tau} that the extended HJB system is given by
	\begin{align*}
		0 & = \sup_{u \in \AMap(t,x)}\left\{\mathcal{D}^{u}V(t,x) + \partial_{z} f(t, x, 0) H(t, x, u) \right\}, \\
		0 & =  \mathcal{D}^{\ecu}f(t,x,z), 
	\end{align*}
	where
	\begin{equation*}
		f(t, x, z) = \mathbb{E}_{t,x} \sBrackets{\Phi\rBrackets{\int \limits_{t}^{T}H\left(s, X_s^{\ecu}, \ecu(s)\right)\,ds + G \rBrackets{ X^{\ecu}_T} + z}}.
	\end{equation*}
	Note that, in this case, the auxiliary function does not depend on $\tau$. 
	
	Let $V^{(1)}(t,x) := \partial_{z} f(t,x,0)$ and observe that
	\begin{align*}
		V^{(1)}(t,x) & =  \rBrackets{\partial_z\EVtx\sBrackets{\Phi\rBrackets{\int \limits_{t}^{T}H\left(s, X_s^{\ecu}, \ecu(s)\right)\,ds + G \rBrackets{ X^{\ecu}_T} + z}}}_{z = 0}\\
		& =\rBrackets{\EVtx\sBrackets{{\partial_z} \Phi\rBrackets{\int \limits_{t}^{T}H\left(s, X_s^{\ecu}, \ecu(s)\right)\,ds + G \rBrackets{ X^{\ecu}_T} + z}}}_{z = 0}\\
		& = \EVtx\sBrackets{\Phi_{[1]}\rBrackets{\int \limits_{t}^{T}H\left(s, X_s^{\ecu}, \ecu(s)\right)\,ds + G \rBrackets{ X^{\ecu}_T}}},
	\end{align*}
	where, in the second equality, we use the dominated convergence theorem, which is justified by (C9). Define a new auxiliary function $f^{(1)}(t,x,z)$ that equals $V^{(1)}(t,x)$ for $z=0$:
	\begin{equation*}
		f^{(1)}(t,x,z) := \EVtx\sBrackets{\Phi_{[1]}\rBrackets{\int \limits_{t}^{T}H\left(s, X_s^{\ecu}, \ecu(s)\right)\,ds + G \rBrackets{ X^{\ecu}_T}+ z} }.
	\end{equation*}
	Analogously to the derivation of the recursion for $f^{\cu}$ in the proof of Lemma \ref{lem:CT_KM_with_tau_recursion_for_f}, we have
	\begin{align*}
		f^{(1)}(t,x,z) & =\EVtx\sBrackets{\Phi_{[1]}\rBrackets{\int \limits_{t}^{T}H\left(s, X_s^{\ecu}, \ecu(s)\right)\,ds + G \rBrackets{ X^{\ecu}_T} + z}} \\
		&  = \EVtx\Bigg[\mathbb{E}_{t+h, X_{t+h}^{\ecu}} \Bigg[\Phi_{[1]}\Bigg(\,\,\intthT H\left(s, X_s^{\ecu}, \ecu(s)\right)\,ds + G \rBrackets{ X^{\ecu}_T} \Bigg. \Bigg. \Bigg.\\
		&  \Bigg. \Bigg. \Bigg. \hspace{5cm}+ \inttth H\left(s, X_s^{\ecu}, \ecu(s)\right)\,ds + z \Bigg) \Bigg] \Bigg] \\
		& = \EVtx\sBrackets{f^{(1)}\rBrackets{t+h,  X_{t+h}^{\ecu}, \inttth H\left(s, X_s^{\ecu}, \ecu(s)\right)\,ds + z}}.
	\end{align*}
	Thus, we have
	\begin{align*}
		0 = \EVtx\sBrackets{f^{(1)}\rBrackets{t+h,  X_{t+h}^{\ecu}, \inttth H\left(s, X_s^{\ecu}, \ecu(s)\right)\,ds + z}} - f^{(1)}(t,x, z),
	\end{align*}
	for any $h \in [0, T - t]$. Dividing both sides of the above equality by $h \downarrow 0$, one can show as in Lemma \ref{lem:CT_KM_with_tau_recursion_for_f} that $\mathcal{D}^{\ecu}f^{(1)}(t,x,z) = 0$, or equivalently, that 
	\begin{align*}
		0 &= \partial_{t} f^{(1)}(t,x,z) + \partial_{x}f^{(1)}(t,x,z)\mu(t,x,\ecu(t,x)) + \partial_{z}f^{(1)}(t,x,z)H(t,x,\ecu(t,x))\\
		& \quad + \frac{1}{2} \partial_{xx}f^{(1)}(t,x,z) \rBrackets{\sigma(t, x,\ecu(t,x))}^2.
	\end{align*}
	Inserting $z = 0$ into the above PDE, we obtain
	\begin{align*}
		0 & = \partial_{t}f^{(1)}(t,x,0) + \partial_{x} f^{(1)}(t,x,0)\mu(t,x,\ecu(t,x)) + \partial_{z} f^{(1)}(t,x,0)H(t,x,\ecu(t,x))\\
		& \quad + \partial_{xx}\frac{1}{2}f^{(1)}(t,x,0) \rBrackets{\sigma(t, x,\ecu(t,x))}^2\\
		& = \partial_{t}V^{(1)}(t,x) + \partial_x V^{(1)}(t,x)\mu(t,x,\ecu(t,x)) + \frac{1}{2} \partial_{xx}V^{(1)}(t,x) \rBrackets{\sigma(t, x,\ecu(t,x))}^2 \\
		& \quad + \partial_{z}f^{(1)}(t,x,0)H(t,x,\ecu(t,x))\\
		& =\mathcal{D}^{\ecu}V^{(1)}(t,x) +  \underbrace{\partial_{z}f^{(1)}(t,x,0)}_{=:V^{(2)}(t,x)}H(t,x,\ecu(t,x)),
	\end{align*}
	where we use that $V^{(1)}(t,x) = f^{(1)}(t,x, 0)$. The terminal condition is given by
	\begin{equation*}
		V^{(1)}(T,x) = \Phi_{[1]}(G(x)).
	\end{equation*}
	Repeating the process for $V^{(2)}(t,x)$, we obtain that
	\begin{align*}
		& 0 = \mathcal{D}^{\ecu}V^{(2)}(t,x) +  \underbrace{\partial_z f^{(2)}(t,x,0)}_{=:V^{(3)}(t,x)}H(t,X^{\ecu}_t,\ecu(t,x)),\\
		& V^{(2)}(T,x)  = \Phi_{[2]}(G(x)).
	\end{align*}
	By (C8) and (C9), we can repeat the same process for $V^{(k)}(t,x) := (\partial_{z})^{k}f(t,x,0)$ for any $k \in \mathbb{N}$. The claim of the proposition follows. 
\end{proof}

\begin{remark}
    The decision-making problem with the reward functional in \eqref{eq:KM_CT_without_tau_reward_functional} is time-inconsistent in the original state space if the function $\Phi$ is nonlinear. Therefore, the extended system of HJB equations (or its equivalent representation in terms of a system of infinitely many equations as per Corollary \ref{cor:CT_KM_with_cur_t_dependence_EHJB_system_infinite}) is needed to characterize an equilibrium strategy and equilibrium value function. But, if we enlarge the state space from $X^{\cu}$ to $\rBrackets{X^{\cu}, Z^{\cu}}$ with $dZ^{\cu}_t = H(t, X^{\cu}_t, \cu(t))dt$, the decision-making problem becomes time-consistent\footnote{We thank Jianfeng Zhang for directing our attention to this fact.} in the enlarged state space and the respective value function (which depends on $t,x, z$) satisfies the standard HJB PDE.
    
    However, for the reward functional \eqref{eq:generalized_KM_CT_reward_func_with_t_dependence}, the enlargement of the state space does not resolve the issue of time-inconsistency due to the dependence of $H$ and $G$ on the current time $t$. 
\end{remark}

 \subsection{Unit RRA} \label{subec:Example_CRRA_unitaryRRA}

Consider the same market setting utilized in Section \ref{sec:CRRA-CES_utility_exponential_discounting}. Recalling that $ \lim_{\alpha \to 1} \frac{1}{1-\alpha}x^{1-\alpha} = \log(x)$, we set
\begin{equation*}
	\begin{split}
		\Phi(x) & = \log(x), \\
		H(s,x,(\pi,c),t) & = e^{-\delta(s-t)}c^{\rho}, \\
		G(x,t) & = e^{-\delta(T-t)}x^{\rho},
	\end{split}
\end{equation*}
In this case, the reward functional becomes
\begin{equation}\label{eq:KM_CT_with_cur_t_dependence_reward_functional_alpha=1}
	\begin{aligned}
		J(t, x, (\pi, c)) =  \mathbb{E}_{t,x}\sBrackets{   \dfrac{1}{\rho}\log \rBrackets{\int \limits_{t}^{T} e^{-\delta (s-t) } \left(c(s)\right)^{\rho}\,ds + e^{-\delta (T-t) }\rBrackets{X^{\pi, c}_T}^{\rho}}},
	\end{aligned}
\end{equation}
with the equilibrium value function defined as $\widehat{V}(t,x) := J(t, x, (\hat{\pi}, \hat{c}))$.

By Proposition \ref{prop:CT_KM_without_tau_EHJB_system_infinite}, the extended HJB system is given by
\begin{equation}
	\begin{split} \label{eq:EHJBsystem_CRRA_unitaryriskaversion}
		\displaystyle
		0 = & \; \sup_{(\pi,c)\in \mathbb{R}^{2}} \Big\{\partial_{t}V(t,x) + \partial_x V(t, x) (x(r + \pi \lambda) - c) + \dfrac{1}{2}\partial_{xx}V(t, x) \sigma^2 \pi^2 x^2 \Big. \\
		&      \hspace{2cm} \Big. + \partial_{z} f(t, x, 0,t) c^\rho  - \partial_{\tau} f(t, x, 0, t) \Big\}, \\
		0 = & \; \mathcal{D}^{\hat{\pi},\hat{c}} 
		f^{\vert \tau}(t,x,z),   \\
		& V(T,x) =  \log(x),\quad \\
		& \hspace{-0.7cm} f(T,x,z,\tau) =  \dfrac{1}{\rho} \log\rBrackets{e^{-\delta (T-\tau)}  x^{\rho} + z}, \\
	\end{split}
\end{equation}
and the probabilistic representation of $V$ and $f$ is
\begin{align*}
	V(t,x) &= \EVtx\sBrackets{\frac{1}{\rho}\log\rBrackets{\int \limits_{t}^{T}e^{-\delta (s-t)}\left(\hat{c}(s)\right)^{\rho}\,ds + e^{-\delta (T-t) }\rBrackets{X^{\hat{\pi}, \hat{c}}_T}^{\rho}}},\\
	f(t,x,z,\tau) &= \EVtx\sBrackets{\frac{1}{\rho}\log\rBrackets{\int \limits_{t}^{T}e^{-\delta (s-\tau)}\left(\hat{c}(s)\right)^{\rho}\,ds + e^{-\delta (T-\tau) }\rBrackets{X^{\hat{\pi}, \hat{c}}_T}^{\rho} + z}}.
\end{align*}

As in Corollary \ref{cor:CT_KM_with_cur_t_dependence_EHJB_system_infinite}, we can recast the above system in an alternative form. 
\begin{corollary} \label{cor:CT_KM_with_cur_t_dependence_EHJB_system_infinite_alpha=1}
	The extended HJB system for a decision maker with reward functional \eqref{eq:KM_CT_with_cur_t_dependence_reward_functional_alpha=1} can be rewritten as
	\begin{equation}
		\hspace{-0.5cm}
		\begin{split}
			\displaystyle
			0 = & \; \sup_{(\pi,c)\in \mathbb{R}^{2}}\bigg\{\partial_{t}V(t,x) + \partial_x V(t, x) (x(r + \pi \lambda) - c) + \dfrac{1}{2}\partial_{xx}V(t, x) \sigma^2 \pi^2 x^2 \bigg.\\
			& \hspace{2cm} \bigg. + \dfrac{1}{\rho}\widetilde{V}^{(1)}(t,x) c^\rho -\dfrac{\delta }{\rho} \bigg\}, \label{eq:KM_CT_PDE_for_V_alpha=1}\\
			0 = & \; \partial_{t}\widetilde{V}^{(k)}(t,x) + \partial_x\widetilde{V}^{(k)}(t, x) (x(r + \hat{\pi} \lambda) - \hat{c}) +  \dfrac{1}{2}\partial_{xx}\widetilde{V}^{(k)}(t, x) \sigma^2 \hat{\pi}^2 x^2  \\
			& \qquad - \rBrackets{-1}^{k}k\delta\widetilde{V}^{(k)}(t,x) + \rBrackets{-1}^{k}k\widetilde{V}^{(k+1)}(t,x) \hat{c}^{\,\rho} ,  \quad k \in \mathbb{N}, \\
			& V(T,x) = \log \left(x\right),\\
			& \hspace{-0.4cm}\widetilde{V}^{(k)}(T,x) = x^{ -k\rho}, \quad k \in \mathbb{N}. \\
		\end{split}
	\end{equation}
	In addition, the probabilistic representation of $\widetilde{V}^{(k)}, k \in \mathbb{N}$,  is given by
	\begin{align*}
		\widetilde{V}^{(k)}(t,x) &= \EVtx\sBrackets{\rBrackets{\int \limits_{t}^{T}e^{-\delta (s-t) }\left(\hat{c}(s)\right)^{\rho}\,ds + e^{-\delta (T-t)}\rBrackets{X^{\hat{\pi}, \hat{c}}_T}^{\rho}}^{- k}}.
	\end{align*}
\end{corollary}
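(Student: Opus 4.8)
The plan is to mirror the proof of Corollary \ref{cor:CT_KM_with_cur_t_dependence_EHJB_system_infinite}, adapting each step to the logarithmic transform. The guiding idea is again to eliminate the awkward $z$-dependence of the auxiliary function $f$ by introducing a countable family $\widetilde{V}^{(k)}$ obtained as suitably normalized $z$-derivatives of $f$, evaluated on the diagonal $z=0$, $\tau=t$, and to show that these solve a triangular system of PDEs. Throughout I abbreviate $Y_\tau := \inttT e^{-\delta(s-\tau)}(\hat{c}(s))^\rho\,ds + e^{-\delta(T-\tau)}\rBrackets{X_T^{\hat{\pi},\hat{c}}}^\rho$ and record the factorization $Y_\tau = e^{\delta\tau}\widetilde{Y}$ with $\widetilde{Y} := \inttT e^{-\delta s}(\hat{c}(s))^\rho\,ds + e^{-\delta T}\rBrackets{X_T^{\hat{\pi},\hat{c}}}^\rho$ independent of $\tau$; this is the exact analogue of the reduction used in the CRRA-CES case and is what makes the present-time dependence tractable under exponential discounting.

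First I would treat the $V$-equation. Differentiating under the expectation (justified by dominated convergence, in the spirit of (C9)), the two boundary derivatives follow at once. From $f(t,x,0,\tau) = \EVtx[\tfrac1\rho\log(e^{\delta\tau}\widetilde{Y})] = \tfrac{\delta\tau}{\rho} + \EVtx[\tfrac1\rho\log\widetilde{Y}]$ one reads off $\partial_\tau f(t,x,0,t) = \delta/\rho$, which supplies the constant term $-\delta/\rho$ in the first equation of \eqref{eq:KM_CT_PDE_for_V_alpha=1}; likewise $\partial_z f(t,x,0,t) = \tfrac1\rho\EVtx[Y_t^{-1}]$, so setting $\widetilde{V}^{(1)}(t,x) := \rho\,\partial_z f(t,x,0,t) = \EVtx[Y_t^{-1}]$ turns the supremand into the stated form and simultaneously fixes the $k=1$ probabilistic representation and the terminal value $\widetilde{V}^{(1)}(T,x) = x^{-\rho}$.

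Next comes the recursion for general $k$. Since the derivatives of $\tfrac1\rho\log$ are $\Psi^{(k)}(w) = \tfrac{(-1)^{k-1}(k-1)!}{\rho}\,w^{-k}$, I would set $\widetilde{V}^{(k)}(t,x) := \EVtx[Y_t^{-k}]$ and introduce the auxiliary functions $f^{(k)}(t,x,z,\tau) := \EVtx[\Psi^{(k)}(Y_\tau + z)] = (\partial_z)^k f(t,x,z,\tau)$, whose value at $z=0$, $\tau=t$ is a fixed multiple of $\widetilde{V}^{(k)}$. Exactly as in Lemma \ref{lem:CT_KM_with_tau_recursion_for_f}, each $f^{(k)}$ satisfies the flow recursion and hence (after Itô and Feynman--Kac) the PDE $\mathcal{D}^{\hat{\pi},\hat{c}}f^{(k)\,|\,\tau}(t,x,z) = 0$; equivalently, this follows immediately by differentiating $\mathcal{D}^{\hat{\pi},\hat{c}}f^{|\tau} = 0$ repeatedly in $z$, the coefficients $\mu$, $\sigma$, $H$ being $z$-free. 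Evaluating this identity at $z=0$, $\tau=t$ and passing to $\widetilde{V}^{(k)}$ requires the diagonal chain rule $\tfrac{d}{dt}f^{(k)}(t,x,0,t) = \partial_t f^{(k)}(t,x,0,t) + \partial_\tau f^{(k)}(t,x,0,t)$, where the second term is computed from $\partial_\tau Y_\tau = \delta Y_\tau$, yielding $\partial_\tau f^{(k)}(t,x,0,t) \propto -k\delta\,\widetilde{V}^{(k)}$. Collecting terms produces the PDE for $\widetilde{V}^{(k)}$ with the claimed coupling to $\widetilde{V}^{(k+1)}\hat{c}^\rho$, while the terminal condition $\widetilde{V}^{(k)}(T,x) = x^{-k\rho}$ drops out of the terminal condition for $f$; the feedback $(\hat{\pi},\hat{c})$ entering $\mu$, $\sigma$ is the one realizing the supremum in the first equation, so the system closes.

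The main obstacle is the sign-and-constant bookkeeping in this last step: because the derivatives of $\log$ alternate in sign and carry factorial weights $a_k := \tfrac{(-1)^{k-1}(k-1)!}{\rho}$, one must track the normalization $f^{(k)}(t,x,0,t) = a_k\widetilde{V}^{(k)}$ and the ratio $a_{k+1}/a_k$ carefully when translating the $f^{(k)}$-recursion into the $\widetilde{V}^{(k)}$-recursion, and must combine this with the diagonal chain rule through $\tau=t$ without dropping the term generated by $\partial_\tau$. A useful consistency check is that the coefficients of $\widetilde{V}^{(k)}$ and of $\widetilde{V}^{(k+1)}\hat{c}^\rho$ should coincide with the $\tfrac{1-\alpha}{\rho}\to 0$ specialization of those in Corollary \ref{cor:CT_KM_with_cur_t_dependence_EHJB_system_infinite}. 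The remaining points are routine: the integrability justifications for differentiating under the expectation and for the vanishing of the stochastic integral in the Feynman--Kac step are inherited from the hypotheses already invoked for Lemma \ref{lem:CT_KM_with_tau_recursion_for_f} and condition (C9), and an induction on $k$ assembles the full countable system.
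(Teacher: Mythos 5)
Your overall route is the same as the paper's: the paper also removes the $z$-variable by differentiating $f$ in $z$ on the diagonal $z=0$, $\tau=t$, setting $\widetilde{V}^{(1)}:=\rho\,\partial_z f(t,x,0,t)$ and $\widetilde{V}^{(2)}:=-\rho\,\partial_z f^{(1)}(t,x,0,t)$, proving that each auxiliary function solves $\mathcal{D}^{\hat{\pi},\hat{c}}f^{(k)\,|\,\tau}=0$ via the recursion argument of Lemma \ref{lem:CT_KM_with_tau_recursion_for_f}, computing $\partial_\tau f^{(k)}$ on the diagonal from the exponential factorization $Y_\tau=e^{\delta\tau}\widetilde{Y}$, and then iterating in $k$; your explicit constants $a_k=(-1)^{k-1}(k-1)!/\rho$ merely package that induction uniformly.

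The problem is the final step, which you assert rather than perform, and which in fact fails against the statement as printed. Run your own bookkeeping to the end: dividing the diagonal identity by $a_k$, the chain-rule term contributes $-\partial_\tau f^{(k)}(t,x,0,t)/a_k=+k\delta\,\widetilde{V}^{(k)}$ (since $\partial_\tau f^{(k)}(t,x,0,t)=-k\delta\,a_k\widetilde{V}^{(k)}$), and the coupling term contributes $(a_{k+1}/a_k)\,\widetilde{V}^{(k+1)}\hat{c}^{\,\rho}=-k\,\widetilde{V}^{(k+1)}\hat{c}^{\,\rho}$, so that for \emph{every} $k$
\begin{equation*}
0 = \partial_t\widetilde{V}^{(k)} + \partial_x\widetilde{V}^{(k)}\big(x(r+\hat{\pi}\lambda)-\hat{c}\big) + \tfrac{1}{2}\sigma^2\hat{\pi}^2x^2\,\partial_{xx}\widetilde{V}^{(k)} + k\delta\,\widetilde{V}^{(k)} - k\,\widetilde{V}^{(k+1)}\hat{c}^{\,\rho}, \quad k \in \N,
\end{equation*}
with coefficients $+k\delta$ and $-k$ and no alternation. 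The corollary you are proving instead carries $-(-1)^{k}k\delta$ and $+(-1)^{k}k$; the two coincide for odd $k$ but have opposite signs for every even $k$, so ``collecting terms'' does not produce ``the claimed coupling''. Your own (unexecuted) consistency check decides the matter: putting $\frac{1-\alpha}{\rho}=0$ in Corollary \ref{cor:CT_KM_with_cur_t_dependence_EHJB_system_infinite} gives exactly the uniform-sign system above, as does a direct expansion of $\widetilde{V}^{(k)}(t,x)=\EVtx\big[Y_t^{-k}\big]$ using $Y_t=\int_t^{t+h}e^{-\delta(s-t)}(\hat{c}(s))^{\rho}\,ds+e^{-\delta h}Y_{t+h}$. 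In other words, your method, carried out correctly, proves a sign-corrected version of the corollary, and shows that the $(-1)^k$ factors in the printed statement (and in the paper's own proof, which verifies only $k=1$ --- where both versions agree --- and then asserts the pattern) cannot arise from this argument. A complete write-up must either exhibit where an alternating sign could enter (it cannot: $a_{k+1}/a_k=-k$ and the $\tau$-term has the same sign structure for all $k$) or explicitly flag that the derivation contradicts the stated system for even $k$.
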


\begin{proof}
	We carry out the proof via similar arguments as the proof of Corollary \ref{cor:CT_KM_with_cur_t_dependence_EHJB_system_infinite} in Appendix \ref{subsec:ProofCorollaryReformulationHJBSystemCRRA}. We start by computing the terms $\partial_{\tau} f(t, x, 0, t)$ and $\partial_{z} f(t, x, 0, t)$ in \eqref{eq:EHJBsystem_CRRA_unitaryriskaversion}:
	\begin{align*}
		\partial_{\tau} f(t, x, 0, t) \\
		& \hspace{-1.5cm} \stackrel{}{=} \rBrackets{\partial_\tau f(t,x,0,\tau)}_{\tau = t} \\
		&\hspace{-1.5cm}\stackrel{}{=} \rBrackets{\partial_\tau \EVtx \sBrackets{ \frac{1}{\rho}\log\rBrackets{\int \limits_{t}^{T} e^{-\delta \rBrackets{s - \tau}} \left(\hat{c}(s)\right)^{\rho}\,ds + e^{-\delta \rBrackets{T - \tau}}\rBrackets{X^{\hat{\pi}, \hat{c}}_T}^{\rho}  } } }_{\tau = t}\\
		& \hspace{-1.5cm}\stackrel{}{=}  \frac{\delta }{\rho}, \\
		\partial_{z} f(t, x, 0, t) \\
		& \hspace{-1.5cm}\stackrel{}{=} \rBrackets{\partial_{z}  \EVtx\sBrackets{ \frac{1}{\rho}\log\rBrackets{\int \limits_{t}^{T} e^{-\delta \rBrackets{s - t}} \left(\hat{c}(s)\right)^{\rho} \,ds + e^{-\delta \rBrackets{T - t}}\rBrackets{X^{\hat{\pi}, \hat{c}}_T}^{\rho} + z } }}_{z = 0}\\
		& \hspace{-1.5cm} =\EVtx\sBrackets{\frac{1}{\rho}\rBrackets{\int \limits_{t}^{T} e^{-\delta \rBrackets{s - t}} \left(\hat{c}(s)\right)^{\rho} \,ds + e^{-\delta \rBrackets{T - t}}\rBrackets{X^{\hat{\pi}, \hat{c}}_T}^{\rho} }^{-1}},
	\end{align*}
	from which we define $\widetilde{V}^{(1)}(t,x) := \rho \partial_{z} f(t, x, 0, t).$ The first PDE in \eqref{eq:EHJBsystem_CRRA_unitaryriskaversion} then becomes
	\begin{equation}
		0 = \sup_{(\pi,c)\in \mathbb{R}^{2}}\left\{\mathcal{D}^{\pi,c}V(t,x) + \dfrac{1}{\rho}\widetilde{V}^{(1)}(t,x) c^{\rho} -  \frac{\delta}{\rho} \right\},
	\end{equation}
	with terminal condition for $\widetilde{V}^{(1)}(t,x)$ given by $\widetilde{V}^{(1)}(T,x) =  x^{-1}.$ 
	
	Now let
	\begin{equation*}
		f^{(1)}(t,x,z,\tau) = \EVtx\sBrackets{\frac{1}{\rho}\rBrackets{\int \limits_{t}^{T} e^{-\delta \rBrackets{s - \tau}} \left(\hat{c}(s)\right)^{\rho}\,ds + e^{-\delta \rBrackets{T - \tau}}\rBrackets{X^{\hat{\pi}, \hat{c}}_T}^{\rho} + z }^{-1}}.
	\end{equation*}
	As in Lemma \ref{lem:CT_KM_with_tau_recursion_for_f}, $f^{(1)}(t,x,z,\tau)$ satisfies the following PDE:
	\begin{align*}
		0 & = \mathcal{D}^{\hat{\pi},\hat{c}}f^{(1) \vert \tau }(t,x,z) \\
		& =\partial_{t} f^{(1)}(t,x,z,\tau) + \partial_{x}f^{(1)}(t,x,z,\tau)\mu(t,x, (\hat{\pi}(t,x),\hat{c}(t,x))) \\
		& \qquad + \frac{1}{2}\partial_{xx}f^{(1)}(t,x,z,\tau) \rBrackets{\sigma(t, x, (\hat{\pi}(t,x),\hat{c}(t,x))}^2 \\
		& \qquad + \partial_{z}f^{(1)}(t,x,z,\tau)(\hat{c}(t,x))^{\, \rho}.
	\end{align*}
	Inserting $z = 0$ and $\tau = t$ above, and using the definition of $\widetilde{V}^{(1)}(t,x)$, we obtain
	\begin{align*}
		0 & \stackrel{}{=} \partial_{t}\widetilde{V}^{(1)}(t,x) + \partial_{x}\widetilde{V}^{(1)}(t,x)\mu(t,x, (\hat{\pi}(t,x),\hat{c}(t,x)))  \\
		& \qquad + \frac{1}{2}\partial_{xx}\widetilde{V}^{(1)}(t,x) \rBrackets{\sigma(t, x, (\hat{\pi}(t,x),\hat{c}(t,x))}^2 - \rho \partial_{\tau} f(t, x, 0, t) \\
		& \qquad + \rho \partial_{z}f^{(1)}(t,x,0, t)(\hat{c}(t,x))^{\, \rho}\\
		& \stackrel{}{=} \mathcal{D}^{\hat{\pi},\hat{c}} \, \widetilde{V}^{(1)}(t,x) - \rho \partial_{\tau}f^{(1)}(t, x, 0, t) + \rho \partial_{z}f^{(1)}(t,x,0, t)(\hat{c}(t,x))^{\, \rho}.
	\end{align*}
	Then, observe that $
	\partial_{\tau}f^{(1)}(t,x,0, t)  = -\frac{\delta}{\rho}\widetilde{V}^{(1)}(t,x), $
	and
	\begin{align*}
		\partial_z f^{(1)}(t,x,0, t) \\
		& \hspace{-1.5cm} =  \left(\partial_z\EVtx\sBrackets{\frac{1}{\rho}\rBrackets{\int \limits_{t}^{T} e^{-\delta \rBrackets{s - t}} \left(\hat{c}(s)\right)^{\rho}\,ds + e^{-\delta \rBrackets{T - t}}\rBrackets{X^{\hat{\pi}, \hat{c}}_T}^{\rho} + z}^{-1}} \right)_{z=0}\\
		& \hspace{-1.5cm} = \left( - \EVtx\sBrackets{\frac{1}{\rho}\rBrackets{\int \limits_{t}^{T} e^{-\delta \rBrackets{s - t}} \left(\hat{c}(s)\right)^{\rho}\,ds + e^{-\delta \rBrackets{T - t}}\rBrackets{X^{\hat{\pi}, \hat{c}}_T}^{\rho} +z }^{-2}} \right)_{z=0}, 
	\end{align*}
	from which we define $\widetilde{V}^{(2)}(t,x) := - \rho \partial_z f^{(1)}(t,x,0, t) $.
	Therefore, the PDE for $\widetilde{V}^{(1)}$ reads as follows:
	\begin{equation*}
		0 = \mathcal{D}^{\hat{\pi},\hat{c}}\,\widetilde{V}^{(1)}(t,x) +\delta\widetilde{V}^{(1)}(t,x) - \widetilde{V}^{(2)}(t,x)(\hat{c}(t,x))^{\, \rho}.
	\end{equation*}
	Repeating the same procedure for all $k \geq 2$, we are led to the generic PDE for $\widetilde{V}^{(k)}(t,x)$:
	\begin{align*}
		0 = & \;  \mathcal{D}^{\hat{\pi},\hat{c}}\,\widetilde{V}^{(k)}(t,x) -  \rBrackets{-1}^{k}k\delta\widetilde{V}^{(k)}(t,x) +  \rBrackets{-1}^{k}k\widetilde{V}^{(k+1)}(t,x) (\hat{c}(t,x))^{\, \rho},\\
		& \widetilde{V}^{(k)}(T,x)  =  x^{- k\rho}.
	\end{align*}
	This proves the claim of the corollary. 
\end{proof}

The last passage aims to derive the system of ODEs that characterizes the solution to the problem with unit RRA. From the first-order conditions for the supremum in \eqref{eq:KM_CT_PDE_for_V_alpha=1}, we obtain the candidate equilibrium controls:
\begin{align*}
	\hat{\pi}(t,x) &= -\frac{\partial_x V(t, x) \lambda}{\partial_{xx} V(t, x) x \sigma^2},\\
	\hat{c}(t,x) &= \rBrackets{\frac{\partial_x V(t, x)}{\widetilde{V}^{(1)}(t,x)}}^{\frac{1}{\rho - 1}}.
\end{align*}
Next, we conjecture that the variables $t$ and $x$ can be separated via the following ansatz:
\begin{align*}
	V(t,x) &=  B(t) \log(x) + L(t)\\
	\widetilde{V}^{(k)}(t,x) &=  B^{(k)}(t) x^{- k\rho}, \quad k \in \mathbb{N},
\end{align*}
where $B(t), L(t), B^{(k)}(t)$ are functions to be determined. Using the ansatz, we rewrite the (candidate) equilibrium strategies as
\begin{align*}
	\hat{\pi}(t,x) &= \dfrac{\lambda}{ \sigma^{2} },\\
	\hat{c}(t,x) &= x \left(\dfrac{B(t)}{B^{(1)}(t)}\right)^{\frac{1}{\rho-1}}.
\end{align*}
Finally, going through similar calculations as in the case of $\alpha \neq 1$, we end up with a system of infinitely many ODEs:
\begin{equation*}
\begin{split}
		0 = & \, \partial_{t}B(t), \\
		0  = & \, \partial_{t}L(t) + B(t)  \left(  r + \dfrac{\lambda^{2}}{2\sigma^{2}}  - \left(\dfrac{B(t)}{B^{(1)}(t)}\right)^{\frac{1}{\rho-1}} \right)  + \dfrac{1}{\rho}B^{(1)}(t) \left(\dfrac{B(t)}{B^{(1)}(t)}\right)^{\frac{\rho}{\rho-1}} - \dfrac{\delta}{\rho},\\
		0  = & \, \partial_{t}B^{(k)}(t)  - k\rho B^{(k)}(t) \left(r + \dfrac{\lambda^{2}}{2\sigma^{2}} - \left(\dfrac{B(t)}{B^{(1)}(t)}\right)^{\frac{1}{\rho-1}}   \right) +\dfrac{1}{2}k^{2}\rho^{2} B^{(k)}(t) \dfrac{\lambda^{2}}{\sigma^{2}}    \\
		& \, \qquad - \rBrackets{-1}^{k}k\delta B^{(k)}(t)  +   \rBrackets{-1}^{k}kB^{(k+1)}(t)\, \left(\dfrac{B(t)}{B^{(1)}(t)}\right)^{\frac{\rho}{\rho-1}}  , \quad k \in \mathbb{N},\\
		& B(T) = 1, \\
		& L(T) = 0, \\
		& B^{(k)}(T) = 1, \quad k \in \mathbb{N}.  
\end{split}
\end{equation*}

\end{appendix}

\bibliography{BibFile}  

\end{document}